\newcommand{\dbtilde}[1]{\accentset{\approx}{#1}}
\newcommand{\rank}{\mathrm{rank}}
\def\@secnumfont{\bfseries}
\def\section{\@startsection{section}{1}%
\z@{1.0\linespacing\@plus\linespacing}{0.5\linespacing}%
{\normalfont\bfseries\centering}}
\newtheorem{theorem}{Theorem}
\newtheorem{proposition}[theorem]{Proposition}
\newtheorem{lemma}[theorem]{Lemma}
\newtheorem{conjecture}[theorem]{Conjecture}
\newtheorem{corollary}[theorem]{Corollary}
\newtheorem{remark}[theorem]{Remark}
\theoremstyle{definition}
\newtheorem{procedure}{Procedure}
\newtheorem{definition}[theorem]{Definition}
\newcommand{\eps}{\varepsilon}
\newcommand{\p}{\mathfrak p}
\newcommand{\q}{\mathfrak q}
\newcommand{\T}{\mathcal T}
\newcommand{\J}{\mathbf J}
\newcommand{\V}{\mathcal V}
\newcommand{\R}{\mathcal R}
\renewcommand{\aa}{\mathfrak a}
\newcommand{\1}{I}
\renewcommand{\d}{\mathrm d}
\newcommand{\UU}{\mathbf U}
\newcommand{\VV}{\mathbf V}
\newcommand{\w}{\mathbf w}
\renewcommand{\v}{\mathbf v}
\renewcommand{\u}{\mathbf u}
\newcommand{\ba}{\mathbf a}
\newcommand{\bb}{\mathbf b}
\begin{document}
\title[Asymptotics of cointegration tests for high-dimensional VAR($k$)]{Asymptotics of cointegration tests for high-dimensional VAR($\mathbf{k}$)}

\author{Anna Bykhovskaya}
\address[Anna Bykhovskaya]{Duke University}
\email{anna.bykhovskaya@duke.edu}

\author{Vadim Gorin}
\address[Vadim Gorin]{University of California at Berkeley}
\email{vadicgor@gmail.com}

\thanks{The authors would like to thank Bruce Hansen, Alexei Onatski, associate editor Zhipeng Liao, and three anonymous referees for valuable comments and suggestions. The authors are grateful to Victor Kleptsyn for his help with the proof of Lemma \ref{Lemma_match_spaces}. Finally, the authors would like to thank Eszter Kiss for excellent research assistance. Gorin's work was supported by NSF grants DMS-1664619, DMS-1949820, and DMS-2246449, and BSF grant 2018248.}
\date{\today}
\maketitle

\begin{abstract}
	The paper studies nonstationary high-dimensional vector autoregressions of order $k$, VAR($k$). Additional deterministic terms such as trend or seasonality are allowed. The number of time periods, $T$, and the number of coordinates, $N$, are assumed to be large and of the same order. Under this regime the first-order asymptotics of the Johansen likelihood ratio (LR), Pillai--Bartlett, and Hotelling--Lawley tests for cointegration are derived: the test statistics converge to nonrandom integrals. For more refined analysis, the paper proposes and analyzes a modification of the Johansen test. The new test for the absence of cointegration converges to the partial sum of the Airy$_1$ point process. Supporting Monte Carlo simulations indicate that the same behavior persists universally in many situations beyond those considered in our theorems.

The paper presents empirical implementations of the approach for the analysis of S$\&$P$100$ stocks and of cryptocurrencies. The latter example has a strong presence of multiple cointegrating relationships, while the results for the former are consistent with the null of no cointegration.
\end{abstract}

%\newpage

%\tableofcontents
\newpage

\section{Introduction}

Starting with the pioneering work of \citet{sims}, vector autoregressions (VARs) became a workhorse model in macroeconomics and other fields. Many key time series in macroeconomics and finance (e.g., consumption and output) are nonstationary, and the properties of VARs can be very different depending on whether one is dealing with a stationary or nonstationary series. Moreover, there is a further subdivision to be accounted for in the case of nonstationary series: it is important to understand whether the data are cointegrated---that is, whether there exists a stationary nontrivial linear combination within the considered series (e.g., the log of consumption minus the log of output is stationary while the series themselves have unit roots).

Classical tools  for testing cointegration  (see, e.g., \citet{johansen_book}, \citet{maddala}, and \citet{juselius}) fail to achieve the desired finite sample performance when the number of time series, $N$, is large. Thus, they are not commonly used in such settings, and the design of proper tools to handle cointegration under a large $N$ remained an open problem for years (see, e.g., \citet[Sections 2.3.3, \,2.4]{choi15}). Recently  \citet{onatski_ecta, onatski_joe} and \citet{BG} have opened a new avenue based on the ``$T/N$ converging to a constant'' asymptotic regime. However, the testing procedures of these texts cover only  VAR($1$), while, in practice, researchers rarely confine themselves to VARs of order $1$, instead usually considering at least two lags. Indeed, as noted already in \citet{pagan}, ``most applications of Sims’ methodology have put the number of lags between four and ten.'' Since \citet{pagan} the lengths of available time series and computing power have only increased, thus allowing researchers to work with even more complex models. Hence,  it is important to generalize and extend the above papers to a VAR($k$) setting, which is the main topic of our text.

Our paper analyses a family of tests for the absence of cointegration for nonstationary VAR($k$), such as the Johansen likelihood ratio (LR) test  (\citet{johansen1988, johansen1991}) and related Hotelling--Lawley and Pillai--Bartlett tests (see, e.g., \citet{gonzalo_pitarakis1995} and references therein) as $N$ and $T$ jointly and proportionally go to infinity. The shared feature of these tests is that their statistics are based on the squared sample canonical correlations between certain transformations of current changes and past levels of the data. The main contribution of our paper is in the asymptotic analysis of these canonical correlations.
First, we show that for VAR($k$) with general $k$, under the null of no cointegration (and some additional technical conditions) the empirical distribution of the squared sample canonical correlations converges to the Wachter distribution. As a corollary, we deduce the first-order deterministic limits of the above test statistics. Second, we introduce a modification of the testing procedure and prove much more refined results in the modified setting. By computing the exact asymptotic behavior of the probability distributions of individual canonical correlations after proper recentering and rescaling, we are able to compute the critical values for the test of no cointegration with correct asymptotic size as $N$ and $T$ jointly and proportionally go to infinity.

We remark that there is a wide scope of literature devoted to the corrections of Johansen's LR test and its relatives (originally developed based on fixed $N$, large $T$ asymptotics) for large values of $N$ (see, e.g., \citet{Reinsel_Ahn}, \citet{johansen_correction}, \citet{Swensen}, \citet{cav_et_all}, and \citet{onatski_joe}). The distinguishing feature of our work is that we are not trying to correct the finite $N$ asymptotic statements, which stop working for large $N$, by introducing various empirical adjustments. Instead, we develop a theoretical framework for working with the large $N$ case directly. One advantage is that our approach explains the general phenomenology and predicts the asymptotic behavior. As a result, the empirical or simulational adjustments for particular values of the parameters of the model are no longer needed.

To achieve the above, in our proofs we use the VAR($1$) results of \citet{BG} as a cornerstone. The main technical work is devoted to producing recursive arguments, which reduce the VAR($k)$ behavior to that of VAR($k-1$) and eventually to VAR($1$). The central role is played by highly nontrivial projections from the group of orthogonal $T\times T$ matrices to the smaller subgroup of orthogonal $(T-N)\times (T-N)$ matrices. While such projections have previously been used in asymptotic representation theory, to the authors' knowledge, this is their first appearance in the econometrics or statistics context. Thus, many new properties of those projections need to be developed in our framework.

\medskip

The rest of the paper is organized as follows. Section \ref{Section_setting} describes our setting and provides the first asymptotic results. Section \ref{Section_modified_test} constructs our modified test and computes its asymptotics, while Section \ref{Section_MC} presents supporting Monte Carlo simulations. Section \ref{Section_empirical} illustrates our theoretical findings on S\&P100 data and on the prices of cryptocurrencies. Finally, Section \ref{Section_conclusion} concludes. All proofs are in Sections \ref{Section_Jacobi}--\ref{Section_small_rank}. The accompanying R package is available at the Github \url{https://github.com/eszter-kiss/Largevars}.

\section{First-order asymptotics of sample canonical correlations}
\label{Section_setting}

We consider an $N$-dimensional vector autoregressive process of order $k$, VAR($k$), based on a sequence of i.i.d.~mean zero Gaussian\footnote{We expect that all our results continue to hold for non-normally distributed errors as long as they have enough moments (cf.\ such distribution-independence results in other high-dimensional models, as in \cite{ErdosYau}, \cite{Tao_Vu}, \cite{HanPanYang}, and \cite{FanYang}).} errors $\{\eps_t\}$ with nondegenerate covariance matrix $\Lambda$. That is, written in the error correction form,
\begin{equation}\label{var_k}
\Delta X_t=\sum\limits_{i=1}^{k-1}\Gamma_i\Delta X_{t-i}+\Pi X_{t-k}+\Phi D_t+\eps_t,\qquad t=1,\ldots,T,
\end{equation}
where $\Delta X_t:=X_t-X_{t-1}$, $D_t$ is a $d_D$-dimensional vector of deterministic terms, such as a constant, a trend or seasonality (extra explanatory variables are also allowed as long as they are observed), and $\Gamma_1,\ldots,\Gamma_{k-1},\,\Pi,\,\Phi$ are unknown parameters. The process is initialized at fixed $X_{1-k},\ldots,X_0$. We do not impose any restrictions on $\Lambda$; thus, we allow for arbitrary correlations across coordinates of $X_t$.
In contrast, many previous approaches rely on specific properties of the covariance matrix $\Lambda$; see, e.g., \citet{Breitung_Pesarann_2008},
\citet[Section 7]{Bai_Ng_2008} and \citet{ZhangPanGao_2018}.

\begin{remark}
Alternatively, the error correction form can be written as
$$\Delta X_t=\Pi X_{t-1}+\sum\limits_{i=1}^{k-1}\tilde{\Gamma}_i\Delta X_{t-i}+\Phi D_t+\eps_t,\qquad t=1,\ldots,T,$$
so that $\tilde{\Gamma}_i=\Gamma_i-\Pi$. Whether we use the former (Eq.~\eqref{var_k}) or the latter form does not affect our results. The testing procedures of our interest are based on the residuals from regressing $X_{t-k}$ on $\Delta X_{t-1},\ldots,\Delta X_{t-k+1}$, which are the same as the residuals from regressing $X_{t-1}=X_{t-k}+\Delta X_{t-1}+\ldots+\Delta X_{t-k+1}$ on $\Delta X_{t-1},\ldots,\Delta X_{t-k+1}$.
\end{remark}

We are interested in the behavior of the squared sample canonical correlations between transformed past levels (lags) and changes (first differences) of the data $X_t$. As shown in \citep{johansen1988, johansen1991} (see also \citet{Anderson}), the correlations are related to whether the process is cointegrated. To be more specific, they appear in the likelihood ratio test for the presence and rank of the cointegration. Let us formally define these correlations.  Here and below $^*$ denotes matrix transposition.

\begin{procedure}[\citet{johansen1991}]\label{sscc_J}
Let $Z_{0t}=\Delta X_t$, $Z_{1t}=(\Delta X_{t-1}^{*},\ldots,\Delta X_{t-k+1}^{*},D_t^*)^{*}$, and $Z_{kt}=X_{t-k}$.  We regress lags $Z_{kt}$ and changes $Z_{0t}$ on regressors $Z_{1t}$ (lagged changes and deterministic terms) and define the residuals
\begin{equation}\label{res_J}
R_{it}=Z_{it}-\left(\sum\limits_{\tau=1}^{T} Z_{i\tau}Z_{1\tau}^{*}\right)\left(\sum\limits_{\tau=1}^{T} Z_{1\tau}Z_{1\tau}^{*}\right)^{-1}Z_{1t},\quad i=0,k.
\end{equation}
Define further $N\times N$ matrices $S_{ij}:=\sum\limits_{t=1}^{T} R_{it}R_{jt}^{*},\,i,j=0,k$ and finally set
$$
 \mathcal{C}=S_{kk}^{-1}S_{k0} S_{00}^{-1} S_{0k}.
$$
The $N$ eigenvalues $\lambda_1\ge \lambda_2\ge\dots\ge \lambda_N$ of $\mathcal C$ are squared sample canonical correlations of $R_0$ and $R_k$, where $R_i$ is $N\times T$ matrix composed of columns $R_{it},\,i=0,k$.
\end{procedure}

Johansen's LR statistic for testing the hypothesis $\rank(\Pi)\le r_1$ (at most $r_1$ cointegrating relationships) versus the alternative $\rank(\Pi)\in (r_1,r_2]$ (between $r_1$ and $r_2$ cointegrating relationships) with $r_2>r_1$ has the form\footnote{We omit a usual scaling factor of $T$ for the statistics \eqref{eq_LR_statistic}, \eqref{eq_PB_statistic}, and \eqref{eq_HW_statistic}.}
\begin{equation}
\label{eq_LR_statistic}
 \sum_{i=r_1+1}^{r_2} \ln(1-\lambda_i);
\end{equation}
the Pillai--Bartlett statistic  is
\begin{equation}
\label{eq_PB_statistic}
 \sum_{i=r_1+1}^{r_2} \lambda_i;
\end{equation}
and Hotelling--Lawley statistic is
\begin{equation}
\label{eq_HW_statistic}
 \sum_{i=r_1+1}^{r_2} \frac{\lambda_i}{1-\lambda_i}.
\end{equation}
See \citet{gonzalo_pitarakis1995} for a discussion and many references about these statistics.

In Theorem \ref{Theorem_empirical} we show that the empirical measure of eigenvalues of $\mathcal C$ converges (weakly in probability) to the Wachter distribution. The theorem generalizes the results of \citet{onatski_ecta} from the VAR($1$) to the VAR($k$) setting.\footnote{While \citet[Theorem 1]{onatski_ecta} allows the data to be VAR($k$) under restriction \eqref{eq_rank_restriction}, they construct the matrix $\mathcal C$ involved in the statistical testing procedures as if the data were VAR($1$). That is, the formal procedure is based on a misspecified VAR($1$) setting, while we use the true VAR($k$) procedure. As illustrated in Section \ref{section_MC_var_order}, using an underspecified VAR can lead to severe size distortions.}

\begin{definition}
The Wachter distribution is a probability distribution on $[0,1]$ that depends on two parameters $\p>1$ and $\q>1$ and has density
\begin{equation}
\label{eq_Jacobi_equilibrium_1}
 \mu_{\p,\q}(x) = \frac{\p+\q}{2\pi} \cdot \frac{\sqrt{(x-\lambda_-)(\lambda_+-x)}}{x (1-x)} \mathbf 1_{[\lambda_-,\lambda_+]}\,,
\end{equation}
where the support $[\lambda_-,\lambda_+]\subset (0,1)$ of the measure is defined via
\begin{equation}
 \lambda_\pm=\frac{1}{(\p+\q)^2}\left(\sqrt{\p(\p+\q-1)}\pm \sqrt{\q}  \right)^2.
\end{equation}
\end{definition}

\begin{theorem}\label{Theorem_empirical} Let $X_t$ follow Eq.\ \eqref{var_k}. Suppose that $k$ is fixed and, as $N\to\infty$,
\begin{equation}
\label{eq_limit_regime}
 \lim_{N\to\infty} \frac{T}{N}=\tau>(k+1) \qquad \text{ and}
\end{equation}
\begin{equation}
\label{eq_rank_restriction}
\lim_{N\to\infty} \frac{1}{N} \bigl( \rank(\Pi)+\rank(\Gamma_1)+\rank(\Gamma_2)+\dots+\rank(\Gamma_{k-1})+d_D\bigr)=0.
\end{equation}
Then, for each continuous function $f(x)$ on $x\in[0,1]$, we have
\begin{equation}
\label{eq_LLN_in_prob}
 \lim_{N\to\infty} \frac{1}{N} \sum_{i=1}^N f(\lambda_i)=\int_0^1 f(x) \mu_{2, \tau-k}(x)\, \d x, \quad \text { in probability.}
\end{equation}
Equivalently,  the empirical measure of eigenvalues  $\lambda_1\ge\dots\ge\lambda_N$ of $\mathcal C$ converges (weakly in probability) to the Wachter distribution of density $ \mu_{2, \tau-k}$.
\end{theorem}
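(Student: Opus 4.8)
The plan is to strip the problem down, in three reductions, to the VAR($1$) case, where convergence of the empirical spectrum of $\mathcal C$ to a Wachter law is already available from \citet{BG}.

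\textbf{Steps 1--2: reduction to a pure Gaussian random walk with $\Lambda=I$.} I would first write $\Delta X_t=\eps_t+\xi_t$, where $\xi_t:=\sum_{i=1}^{k-1}\Gamma_i\Delta X_{t-i}+\Pi X_{t-k}+\Phi D_t$ always lies in the fixed subspace $\mathcal L\subseteq\mathbb R^N$ spanned by the columns of $\Gamma_1,\dots,\Gamma_{k-1},\Pi,\Phi$ together with the finitely many deterministic initial vectors; by \eqref{eq_rank_restriction}, $\dim\mathcal L=o(N)$. Then $X_{t-k}$, each $\Delta X_{t-i}$ and each $D_t$ differ from the corresponding quantities of the pure random walk ($\Delta X_t^{(0)}=\eps_t$) by vectors in $\mathcal L$, so each of the data matrices $Z_0,Z_k,Z_1$ is a modification of (co)rank $o(N)$ of its random-walk analogue. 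Squared sample canonical correlations are stable under such modifications — a rank-$q$ change of $Z_1$, hence of the projector onto the orthocomplement of its row span, or of $Z_0,Z_k$, perturbs the $\lambda_i$ only through interlacing inequalities of order $q$ — so $\tfrac1N\sum_i f(\lambda_i)$ changes by $o(1)$ and I may assume $\Pi=\Gamma_i=\Phi=0$, i.e.\ $X_t=X_0+\eps_1+\cdots+\eps_t$. Next, writing $\eps_t=\Lambda^{1/2}g_t$ with $g_t$ i.i.d.\ standard Gaussian, every block of $Z_{0t},Z_{kt},Z_{1t}$ is $\Lambda^{1/2}$ times the same object built from the $g$'s, so $R_0=\Lambda^{1/2}R_0^{(g)}$, $R_k=\Lambda^{1/2}R_k^{(g)}$ and $\mathcal C$ is similar to its $\Lambda=I$ counterpart; the eigenvalues do not depend on $\Lambda$, so I take $\Lambda=I$.

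\textbf{Step 3: recursive descent $k\rightsquigarrow k-1$.} The heart of the argument is to prove that the limiting empirical spectral distribution of $\mathcal C$ for VAR($k$) with $\lim T/N=\tau$ equals that of the analogous matrix for VAR($k-1$) with $\lim T/N=\tau-1$ (equivalently, with $N$ fewer time periods). Iterating this $k-1$ times, with the reductions of Steps 1--2 applied at each stage, brings \eqref{eq_LLN_in_prob} to the VAR($1$) statement of \citet{BG} with parameter $\tau-(k-1)$, which outputs $\mu_{2,(\tau-(k-1))-1}=\mu_{2,\tau-k}$ as required, and the convergence ``in probability'' is inherited. To set up the descent, note that $Z_1$ for VAR($k$) is the one for VAR($k-1$) augmented by a single block $\Delta X_{t-k+1}=\eps_{t-k+1}$, while $Z_k$ changes from $X_{t-k}$ to $X_{t-k+1}=X_{t-k}+\eps_{t-k+1}$; hence, once the extra block is regressed out, the residual of $X_{t-k}$ coincides with that of $X_{t-k+1}$. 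Therefore the eigenvalues of $\mathcal C$ for VAR($k$) are exactly the squared canonical correlations of the VAR($k-1$) residual matrices $R_0,R_k$ after one further orthogonal projection of $\mathbb R^T$ onto the $(T-N)$-dimensional orthogonal complement $L^\perp$ of the span of the $\eps_{t-k+1}$-block.

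\textbf{The main obstacle.} It remains to show that feeding the VAR($k-1$)-on-$T$-periods ensemble through this projection reproduces, up to a bounded-rank discrepancy and an asymptotically negligible error, the VAR($k-1$)-on-$(T-N)$-periods ensemble. With $\Lambda=I$ the innovations form an $O(T)$-invariant Gaussian array, $L^\perp$ is (after a suitable conditioning) a uniformly random $(T-N)$-dimensional subspace, and the claim reduces to controlling the compression to $L^\perp$ of the relevant Gram operators — in particular of the Gram of the discrete cumulative-sum (integration) operator that produces the random-walk levels — and to checking that the one-step shift relating $X_{t-k}$ to $X_{t-k+1}$ is immaterial in the limit. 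This is precisely the passage from $O(T)$ to $O(T-N)$ referred to in the introduction, and proving that it transports the entire joint structure (innovations, their partial sums, the remaining shifted lag blocks) correctly — via, among other things, Lemma \ref{Lemma_match_spaces} — is the step I expect to be genuinely hard. Finally, the hypothesis $\tau>k+1$ is exactly what keeps the fed parameter above the admissibility threshold at every level of the recursion (above $k'+1$ at the VAR($k'$) stage, and above $2$ at the VAR($1$) base case), so that all Gram matrices $S_{ij}$ remain invertible, the $N$ canonical correlations stay nondegenerate, and the Wachter support stays strictly inside $(0,1)$.
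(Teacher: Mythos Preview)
Your Steps 1--2 match the paper's reduction in Section \ref{Section_small_rank}: the paper likewise shows that, passing from the full model \eqref{var_k} to the pure random walk $\widehat H_0$ of \eqref{eq_H0}, the symmetric matrices $P_0P_kP_0$ and $\tilde P_0\tilde P_k\tilde P_0$ differ by rank $o(N)$, and then uses the interlacing inequality \eqref{eq_rank_inequality} to transfer the empirical-measure limit. (The paper compares Procedure \ref{sscc_J} to the modified Procedure \ref{sscc_BG} rather than staying within Procedure \ref{sscc_J}, but the extra differences --- detrending, cyclic indices, regressing on $D_t$ versus constants --- are all absorbed into the same $o(N)$-rank bookkeeping.)

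Where you diverge from the paper is Step 3. You propose to run the recursion $k\rightsquigarrow k-1$ \emph{directly on the VAR eigenvalues}, by projecting the VAR($k-1$) residual matrices onto the orthocomplement of one innovation block and arguing that this reproduces a VAR($k-1$) problem on $T-N$ time periods. The paper does not do this. Instead it (i) expresses the $\widehat H_0$ eigenvalues as $f^{k,N,T-1;\mathcal V}(\tilde L)$ for the deterministic conjugated cyclic shift $\tilde L$ (Proposition \ref{Prop_gaussian_rotation}); (ii) replaces $\tilde L$ by a Haar-random orthogonal matrix $O$ via the eigenvalue rigidity estimate \eqref{eq_MM_bound} and a continuity bound (Proposition \ref{Proposition_map_continuity}); and (iii) shows that for Haar-random $O$ the eigenvalues of the resulting $P_1P_2P_1$ are \emph{exactly} Jacobi-distributed with parameters $(\tfrac{N}{2},\tfrac{\T-(k+1)N}{2})$ (Theorem \ref{Theorem_Var_k}). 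The $k\to k-1$ induction lives entirely inside (iii), where Lemma \ref{Lemma_Haar_project} gives an \emph{exact} distributional identity: the projected matrix $\tilde O=A-B(\1+D)^{-1}C$ is again Haar on $SO(\T-N)$, independent of the piece $B(\1+D)^{-1}$ that carries the new base space. The Wachter law then comes from the known Jacobi asymptotics (Proposition \ref{Theorem_Jacobi_as}), not from the VAR($1$) empirical-measure theorem of \citet{BG}. The payoff of the paper's detour is that the hard step --- your ``main obstacle'' of compressing the integration operator --- never has to be confronted for the deterministic cyclic shift: it is done once in the Haar setting where it is algebraically exact, and the continuity argument (Proposition \ref{Proposition_map_continuity}, itself inductive and using Lemma \ref{Lemma_match_spaces}) absorbs the discrepancy between $\tilde L$ and $O$. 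Your direct route would need a ``compressed integration $\approx$ integration on fewer periods'' statement that the paper never proves and that seems at least as delicate as the Jacobi detour.
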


Imposing assumption \eqref{eq_rank_restriction} can be viewed as a dimension reduction (cf.~sparsity assumption). Approximating data with low-rank matrices is a widely used and powerful technique in data science, in machine learning applications such as recommender systems (e.g., movie preference recognition), and in computational mathematics. We refer the reader to \citet{lowrk_highdim} for theoretical explanations of the suitability of low-rank models and many references to situations in which they are very efficient. In our particular context, the number of unknown parameters in the VAR model \eqref{var_k} is proportional to $N^2$, and we have access to $N T$ observations. Since $N^2$ and $NT$ are of the same order in the asymptotic regime \eqref{eq_limit_regime}, the model \eqref{var_k} can overfit the data. We view the rank restriction \eqref{eq_rank_restriction} as a natural way to avoid overfitting\footnote{An alternative way to introduce a low-rank assumption into the VAR model is, instead of using the error correction form in \eqref{var_k}, to rewrite the evolution as $X_t=\sum\limits_{i=1}^{k} A_i X_{t-i}+\Phi D_t+\eps_t$. Then, in the spirit of factor models, one can impose the low-rank assumption on $A_i,\,i=1,\ldots,k$. Notice that $A_i=\Gamma_i-\Gamma_{i-1}$ for $i=2,\ldots,k-1$, so that low-rank assumptions on higher-order lags in this and our setting are related. This alternative low-rank restriction complements ours via the rank of $\Pi$: in our setting, the number of cointegrating relationships grows sublinearly in $N$, while in the alternative setting, this number is close to $N$. While none of our theorems directly cover the factor setting, numeric simulations in Section \ref{section_power} indicate that the tests that we develop remain useful.} (cf.~the discussion in \citet{JASA_low_rank_VAR} and \citet{Wang_Tsay_low_rank_VAR}). Section \ref{Section_empirical} illustrates that the results obtained under this assumption are consistent with the behavior of large-dimensional financial datasets. Another setting in which we can expect \eqref{eq_rank_restriction} to be satisfied is when there are a few special coordinates in $X_t$, e.g., some macroeconomic indicators, that mostly drive the behavior of the entire vector $X_t$. This would correspond to the case where the columns of $\Gamma_i$ corresponding to those indicators are nonzero while the other columns are zero.

\smallskip

Figure \ref{small_rk_pic} illustrates Theorem \ref{Theorem_empirical} for independent standard normal errors and $k=2,\,N=150,\,T=1500$. The parameters are $\Phi D_t=1_N$, $\Gamma_1=0.95E_{12}$, $\Pi=-0.1E_{\cdot1}$, where $1_N$ is an $N\times 1$-column matrix of ones, $E_{12}$ is an $N\times N$ matrix with one at the intersection of the 1st row and the 2nd column and zeros everywhere else, and $E_{\cdot1}$ is an $N\times N$ matrix with ones in the first column and zeros everywhere else. Thus, all the matrices have rank one. The parameters of the Wachter distribution are $\p=2,\,\q=T/N-k=8$. The single separated (rightmost) eigenvalue corresponds to $\rank(\Pi)=1$, i.e.,~one cointegrating relationship. Generally, we expect that if there are $r$ separated eigenvalues and the value of $k$ in Procedure \ref{sscc_J} is correctly specified, then there are at least $r$ cointegrating relationships.

\begin{figure}[h]
{\scalebox{0.7}{\includegraphics{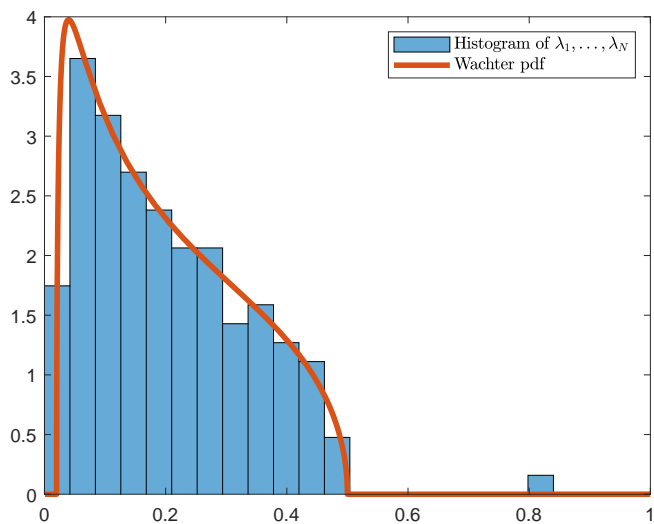}}}
\caption{Illustration of Theorem \ref{Theorem_empirical}: Eigenvalues and Wachter distribution.
Data generating process: $\Delta X_{t}=1_N+0.95E_{12}\Delta X_{t-1}-0.1E_{\cdot1}X_{t-2}+\eps_{t}$, \mbox{$T=1500$}, $N=150$, $\eps_{it}\thicksim$ i.i.d.~$\mathcal{N}(0,1)$.}
\label{small_rk_pic}
\end{figure}

The proof of Theorem \ref{Theorem_empirical} is based on treating the setting of \eqref{var_k} as a small-rank perturbation of a more restrictive setting analyzed in Section \ref{Section_modified_test}. The small-rank assumption in \eqref{eq_rank_restriction} is crucial for the validity of the theorem, and we expect that the asymptotic behavior changes in situations when \eqref{eq_rank_restriction} fails. This expectation is supported by the Monte Carlo experiment in Section \ref{section_mc_small_rank}. In contrast, the assumption that $T>(k+1)N$ can potentially be relaxed. When $T<(k+1)N$, the matrix $\mathcal C$ has deterministic eigenvalues equal to $1$, which should be taken into account in $N\to\infty$ asymptotics. This case can be also addressed by our methods, but we do not continue in this direction.

An important corollary of Theorem \ref{Theorem_empirical} is that it provides the asymptotic behavior of various tests constructed from eigenvalues of $\mathcal C$.

\begin{corollary} \label{Corollary_test_stat} Under the assumptions of Theorem \ref{Theorem_empirical}, suppose that the ranks $r_1=r_1(N)$ and $r_2=r_2(N)$ are such that
$$
 \lim_{N\to\infty} \frac{r_1}{N}=\rho_1,\quad \lim_{N\to\infty} \frac{r_2}{N}=\rho_2.
$$
Let $F(x)=\int_x^1  \mu_{2, \tau-k}(z)\, \d z$.
Then we have convergence in probability for the test statistics:
\begin{align*}
 &\lim_{N\to\infty} \frac{1}{N}
 \sum_{i=r_1+1}^{r_2} \ln(1-\lambda_i)=\int_{F^{-1}(\rho_2)}^{F^{-1}(\rho_1)} \ln(1-x) \mu_{2, \tau-k}(x)\, \d x,\ \quad \text{ if }\,\, \rho_1>0;
\\
 &\lim_{N\to\infty}  \frac{1}{N}\sum_{i=r_1+1}^{r_2} \lambda_i=\int_{F^{-1}(\rho_2)}^{F^{-1}(\rho_1)} x \mu_{2, \tau-k}(x)\, \d x;
\\
 &\lim_{N\to\infty}  \frac{1}{N}\sum_{i=r_1+1}^{r_2} \frac{\lambda_i}{1-\lambda_i}=\int_{F^{-1}(\rho_2)}^{F^{-1}(\rho_1)} \frac{x}{1-x} \mu_{2, \tau-k}(x)\, \d x,\ \quad \text{ if }\,\, \rho_1>0;
\end{align*}
and asymptotic inequalities: for each $\eps>0$,
\begin{align*}
&\lim_{N\to\infty}\mathrm{Prob}\left( \frac{1}{N}
 \sum_{i=r_1+1}^{r_2} \ln(1-\lambda_i)\le \int_{F^{-1}(\rho_2)}^{1} \ln(1-x) \mu_{2, \tau-k}(x)\, \d x+\eps\right)=1, \quad \text{ if }\,\, \rho_1=0;
\\
&\lim_{N\to\infty}\mathrm{Prob}\left(  \frac{1}{N}\sum_{i=r_1+1}^{r_2} \frac{\lambda_i}{1-\lambda_i}\ge \int_{F^{-1}(\rho_2)}^1 \frac{x}{1-x} \mu_{2, \tau-k}(x)\, \d x-\eps\right)=1,\ \quad \text{ if }\,\, \rho_1=0.
\end{align*}
\end{corollary}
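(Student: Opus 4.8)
The plan is to derive everything from the weak convergence of the empirical spectral measure in Theorem \ref{Theorem_empirical}, combined with two elementary facts: the $\lambda_i$ are squared sample canonical correlations, hence lie in $[0,1]$; and the three test functions $g(x)=x$, $g(x)=\ln(1-x)$, $g(x)=x/(1-x)$ are continuous on $[0,1)$, with the last two monotone and with a singularity only at $x=1$. Write $L_N=\tfrac1N\sum_{i=1}^N\delta_{\lambda_i}$, $\mu=\mu_{2,\tau-k}$, and recall $F(x)=\mu((x,1])=\int_x^1\mu(z)\,\d z$, which is continuous and nonincreasing on $[0,1]$, constant outside $[\lambda_-,\lambda_+]$, and a decreasing bijection of $[\lambda_-,\lambda_+]$ onto $[0,1]$. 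From Theorem \ref{Theorem_empirical} I would extract: (i) $\int h\,\d L_N\to\int h\,\d\mu$ in probability for every bounded continuous $h$; (ii) the counting function $G_N(x):=L_N((x,1])\to F(x)$ in probability for every $x$; and (iii), by inverting (ii) at the level of order statistics, $\lambda_{r(N)}\to F^{-1}(\rho)$ in probability whenever $r(N)/N\to\rho\in(0,1)$, where $F^{-1}$ denotes the generalized inverse of $F$; in particular $\lambda_{r_1+1}\to F^{-1}(\rho_1)$ when $\rho_1>0$ and $\lambda_{r_2+1}\to F^{-1}(\rho_2)$ when $\rho_2\in(0,1)$, both limits lying in $[\lambda_-,\lambda_+]\subset(0,1)$.

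For the equalities (the case $\rho_1>0$), fix $\eta>0$ and let $g_\eta$ be a bounded continuous function on $[0,1]$ coinciding with $g$ on $[0,\lambda_++\eta]$. Since $\lambda_{r_1+1}\to F^{-1}(\rho_1)\le\lambda_+<\lambda_++\eta$, with probability tending to one every eigenvalue entering $\tfrac1N\sum_{i=r_1+1}^{r_2}g(\lambda_i)$ is $\le\lambda_++\eta$, so that sum equals $\tfrac1N\sum_{i=r_1+1}^{r_2}g_\eta(\lambda_i)=\int g_\eta\,\mathbf 1_{(\lambda_{r_2+1},\lambda_{r_1+1}]}\,\d L_N$ (eigenvalues of $\mathcal C$ being a.s.\ distinct under the Gaussian assumption). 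As $\lambda_{r_1+1}\to F^{-1}(\rho_1)$, $\lambda_{r_2+1}\to F^{-1}(\rho_2)$, and $\mu$ has no atoms, a short sandwich argument comparing the random integration interval with $(F^{-1}(\rho_2)\pm\delta,\,F^{-1}(\rho_1)\pm\delta]$ and then invoking (i) (portmanteau, the endpoints being $\mu$-null) yields convergence in probability to $\int_{F^{-1}(\rho_2)}^{F^{-1}(\rho_1)}g_\eta\,\d\mu=\int_{F^{-1}(\rho_2)}^{F^{-1}(\rho_1)}g\,\d\mu$, the last equality because $[F^{-1}(\rho_2),F^{-1}(\rho_1)]\subseteq[\lambda_-,\lambda_+]\subseteq[0,\lambda_++\eta]$. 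Taking $g(x)=x,\ \ln(1-x),\ x/(1-x)$ gives the three stated limits. For $g(x)=x$ the truncation is unnecessary and, using additionally $\tfrac1N\sum_{i=1}^{r_1}\lambda_i\le r_1/N\to0$, the argument goes through for $\rho_1=0$ too, which is why the Pillai--Bartlett line carries no restriction on $\rho_1$.

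For the two asymptotic inequalities (the case $\rho_1=0$, where the dropped top block no longer shields us from eigenvalues near $1$) I would use monotonicity of the test functions. For the Johansen statistic take $g_\eta(x)=\ln\!\bigl(1-\min(x,\lambda_++\eta)\bigr)$, a bounded continuous function with $\ln(1-x)\le g_\eta(x)\le 0$ on $[0,1]$, so that
\[
\frac1N\sum_{i=r_1+1}^{r_2}\ln(1-\lambda_i)\ \le\ \frac1N\sum_{i=r_1+1}^{r_2}g_\eta(\lambda_i)\ \le\ \frac1N\sum_{i=1}^{r_2}g_\eta(\lambda_i)+\frac{r_1}{N}\,\|g_\eta\|_\infty .
\]
The last term vanishes since $r_1/N\to0$, and $\tfrac1N\sum_{i=1}^{r_2}g_\eta(\lambda_i)=\int g_\eta\,\mathbf 1_{(\lambda_{r_2+1},1]}\,\d L_N\to\int_{F^{-1}(\rho_2)}^{1}g_\eta\,\d\mu=\int_{F^{-1}(\rho_2)}^{1}\ln(1-x)\,\mu(x)\,\d x$ in probability by the same sandwich-plus-portmanteau reasoning (on $\mathrm{supp}\,\mu\subseteq[\lambda_-,\lambda_+]$ one has $g_\eta=\ln(1-x)$); the $o(N)$ outlying eigenvalues near $1$ drop out of the limit precisely because Theorem \ref{Theorem_empirical} already absorbs them. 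This gives the first inequality. The Hotelling--Lawley bound is symmetric: take $g^\eta(x)=\min(x,\lambda_++\eta)/\bigl(1-\min(x,\lambda_++\eta)\bigr)$, bounded continuous with $0\le g^\eta(x)\le x/(1-x)$, so that $\tfrac1N\sum_{i=r_1+1}^{r_2}\tfrac{\lambda_i}{1-\lambda_i}\ge\tfrac1N\sum_{i=1}^{r_2}g^\eta(\lambda_i)-\tfrac{r_1}{N}\|g^\eta\|_\infty\to\int_{F^{-1}(\rho_2)}^{1}\tfrac{x}{1-x}\,\mu(x)\,\d x$.

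The main obstacle is exactly the non-integrability of $\ln(1-x)$ and $x/(1-x)$ at $x=1$ together with the fact that, under the null combined with \eqref{eq_rank_restriction}, the matrix $\mathcal C$ genuinely has $o(N)$ outlying eigenvalues that may approach $1$ (each separated outlier signalling a cointegrating relationship, cf.\ Figure \ref{small_rk_pic}), while Theorem \ref{Theorem_empirical} controls the bulk but says nothing about the rate at which such outliers approach $1$. The resolution above is that for $\rho_1>0$ the top $r_1\sim\rho_1 N\gg o(N)$ eigenvalues are removed, so no outlier survives in the sum and one obtains a clean two-sided limit, whereas for $\rho_1=0$ one keeps only the one-sided estimate produced by truncating the test function at $\lambda_++\eta$ in the direction dictated by its monotonicity. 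Minor technical points to dispatch en route are the order-statistic convergence $\lambda_{r_2+1}\to F^{-1}(\rho_2)$ for $\rho_2\in(0,1)$ and the degenerate cases $\rho_1=\rho_2$ or $\rho_2\in\{0,1\}$, where the integrals are empty or extend over a region on which $\mu$ vanishes and the statements become trivial.
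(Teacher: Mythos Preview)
Your proposal is correct and is precisely the argument the paper implicitly has in mind: the corollary is stated without a separate proof in the paper, and the only commentary given is the paragraph immediately after it explaining that the one-sided bounds for $\rho_1=0$ arise because the singularity of $\ln(1-x)$ and $x/(1-x)$ at $x=1$ is not controlled by the empirical-measure convergence of Theorem \ref{Theorem_empirical}. Your truncation at $\lambda_++\eta$ together with the order-statistic localization $\lambda_{r_1+1}\to F^{-1}(\rho_1)$ is exactly the way to make that remark rigorous.
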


Note that, when $\rho_1=0$, for the Johansen LR and Hotelling--Lawley (HW) statistics, we obtain inequalities rather than equalities. This is because of the singularity of $\ln(1-\lambda)$ and $\frac{\lambda}{1-\lambda}$ at $\lambda=1$: while Eq.~\eqref{eq_LLN_in_prob} controls average behavior, it does not control individual eigenvalues. Hence, the largest eigenvalue $\lambda_1$ can be arbitrarily close to $1$, so that the LR and HW statistics reach large negative and positive values, respectively. However, for similar statistics in the modified setting of the next section the inequalities turn into equalities (as can be proven by combining Theorem \ref{Theorem_vark_approximation} and Proposition \ref{Theorem_Jacobi_as}).

For commonly used tests, one often takes $r_2=N$ (i.e., $H_1:\rank(\Pi)\leq N$), in which case $F^{-1}(\rho_2)=0$.
We also remark that the integrals in Corollary \ref{Corollary_test_stat} can be explicitly computed in many situations. For instance, the one appearing in the asymptotic of the Pillai--Bartlett statistic for $\rho_1=0$, $\rho_2=1$ is
$$
 \int_{0}^{1} x\, \mu_{2, \tau-k}(x)\, \d x=\frac{2}{\tau+2-k}.
$$

There are several applications of Theorem \ref{Theorem_empirical} and Corollary \ref{Corollary_test_stat}:
\begin{itemize}
\item They can be used for validation of the applicability of model \eqref{var_k} to a given dataset. Namely, if a VAR($k$) model with low-rank matrices $\Gamma_i$ and $\Pi$ agrees with data, then irrespective of the true values of these parameters, we expect to see the Wachter distribution in the histogram of $\lambda_i$, $1\le i \le N$. In Section \ref{Section_empirical} we perform such a validation on S$\&$P$100$ and cryptocurrency data sets for VAR($k$) with $1\le k \le 4$ and observe a remarkable match. (VAR($1$) for S$\&$P$100$ is also reported in \cite[Figure 7]{BG}.)

\item They can be used as a screening device for preliminary conclusions about the rank of $\Pi$: If the rank is finite, then for any $r_1$ and $r_2$ we should be in the $\eps$-neighborhood of the limits in Corollary \ref{Corollary_test_stat}.

\item As explained in \citet{onatski_ecta}, such results can be used to explain overrejection in some of the widely used tests for the rank of $\Pi$.
\end{itemize}

To draw further economical and statistical conclusions and to develop precise statistical tests and their critical values, one needs to go beyond the first-order asymptotic results of Theorem \ref{Theorem_empirical} and Corollary \ref{Corollary_test_stat}. In the next section we introduce relevant modifications and develop appropriate second-order asymptotics.

\section{Cointegration test: Second-order asymptotics}
\label{Section_modified_test}

In the regime of $N$ and $T$ growing simultaneously and proportionally, the first-order asymptotics of tests based on the squared sample canonical correlations are given in Corollary \ref{Corollary_test_stat}. To perform testing and be able to reject at a given significance level, we need to be more precise and find a centered limit, which would be a random variable rather than a constant. To do this, we need to impose additional conditions on $\Gamma_i$, $D_t$, and $\Phi$ in Eq.~\eqref{var_k}. Let us first describe the modified procedure and then state the asymptotic results.

\subsection{Test}
\label{Section_modified_test_statistics}

We restrict our attention to the case $D_t=1$, i.e.,
\begin{equation}\label{var_k_restr}
\Delta X_t=\mu+\sum\limits_{i=1}^{k-1}\Gamma_i\Delta X_{t-i}+\Pi X_{t-k}+\eps_t,\qquad t=1,\ldots,T.
\end{equation}

The null hypothesis of no cointegration is $H_0:\:{\rm rank}(\Pi)=0$ or $\Pi\equiv0$. The complement to $H_0$ is ${\rm rank}(\Pi)>0$. However, to design our test we use an alternative hypothesis:
$$
 H(r):\quad {\rm rank}(\Pi)\in[1,r].
$$
As in \citet{BG}, our test is based on a modification of the Johansen LR test. The Johansen LR test for the original $H_0$ (i.e., $\Pi\equiv0$) versus $H(r)$ is
\begin{equation}
\label{eq_Joh_statistic_refined}
 \sum_{i=1}^{r} \ln(1-\lambda_i),
\end{equation}
where $\lambda_1,\lambda_2,\ldots$ are defined in Procedure \ref{sscc_J}. Let us describe how our modified test proceeds.

\begin{procedure}\label{sscc_BG}
\textbf{Step 1.} De-trend the data and define
\begin{equation}
\label{eq_detrending}
 \tilde X_t = X_{t-1} - \frac{t-1}{T} (X_T-X_0).
\end{equation}
Note that we do a time shift in line with the notation in \citet{BG}.

\textbf{Step 2.} Define regressors and dependent variables: For any $a\in\mathbb Z$, set
$$
 a\mid T= a+ k T,\quad \text{where } k\in\mathbb Z\text{ is such that } a+ k T\in \{1,2,\dots,T\}.
$$
Define
$$
 \tilde{Z}_{0t}=\Delta X_{t\mid T}\equiv\Delta X_{t},\quad\tilde{Z}_{kt}=\tilde X_{t-k+1\mid T},\quad\tilde{Z}_{1t}=(\Delta X_{t-1\mid T}^{*},\ldots,\Delta X_{t-k+1\mid T}^{*},1)^{*}.
$$
The main difference between $\tilde{Z}_{it}$ and $Z_{it}$ from Procedure \ref{sscc_J} is the usage of cyclic indices: values at $t=0,-1,\ldots$ are replaced by values at $t=T,T-1,\ldots$.

\textbf{Step 3.} Calculate the residuals from regressions $\tilde{Z}_{0t}$ on $\tilde{Z}_{1t}$ and $\tilde{Z}_{kt}$ on $\tilde{Z}_{1t}$:
\begin{equation}\label{res_BG}
\tilde{R}_{it}=\tilde{Z}_{it}-\left(\sum\limits_{\tau=1}^{T} \tilde{Z}_{i\tau}\tilde{Z}_{1\tau}^{*}\right)\left(\sum\limits_{\tau=1}^{T} \tilde{Z}_{1\tau}\tilde{Z}_{1\tau}^{*}\right)^{-1}\tilde{Z}_{1t},\quad i=0,k.
\end{equation}

\textbf{Step 4.} Calculate the squared sample canonical correlations between $\tilde{R}_0$ and $\tilde{R}_k$, where $\tilde{R}_i$ is an $N\times T$ matrix composed of columns $\tilde{R}_{it},\,i=0,k$. That is, define
\begin{equation}\begin{split}\label{S_matrices}
\tilde{S}_{ij}=\sum\limits_{t=1}^{T} \tilde{R}_{it} \tilde{R}^{\ast}_{jt},\quad i,j=0,k, \qquad \text{ and}
\end{split}\end{equation}
\begin{equation}\label{tilde_C}
\tilde{\mathcal C}=\tilde{S}_{k0}\tilde{S}^{-1}_{00}\tilde{S}_{0k}\tilde{S}^{-1}_{kk}.
\end{equation}
Then, calculate $N$ eigenvalues $\tilde{\lambda}_1\geq\ldots\geq\tilde{\lambda}_N$ of the matrix $\tilde{\mathcal C}$. The eigenvalues solve the equation
\begin{equation}
\label{eq_vark_eig}
 \det( \tilde S_{k0} \tilde S_{00}^{-1} \tilde S_{0k}-\tilde{\lambda} \tilde S_{kk})=0.
\end{equation}

\textbf{Step 5.} Form the test statistic
\begin{equation}\label{LR_NT}
LR_{N,T}(r)=\sum\limits_{i=1}^{r}\ln(1-\tilde{\lambda}_i).
\end{equation}
The subscript $N,T$ in \eqref{LR_NT} indicates that we modify the Johansen LR test to develop the large $N,T$ asymptotics. This statistic after centering and rescaling will be compared with appropriate critical values to decide whether one can reject $H_0$ (see Theorem \ref{Theorem_J_stat}). Visually, rejections correspond to the case when the largest eigenvalues are separated from the rest (as in Figure \ref{small_rk_pic}).

One can also consider other functions of largest eigenvalues $\tilde{\lambda}_1,\tilde{\lambda}_2,\ldots$ such as Pillai--Barlett or Hotelling--Lawley statistics. The asymptotic behavior in those cases can be derived in the same way as we treat statistic \eqref{LR_NT} in Theorem \ref{Theorem_J_stat}.

\end{procedure}

An alternative way to write residuals $\tilde{R}_{i},\,i=0,k$ is via an orthogonal projector: Let $\mathcal W$ be a linear subspace of dimension $N(k-1)+1$ in $T$-dimensional vector space, spanned  by vector $(1,1,\dots,1)$ and all rows of matrices $(\Delta X) (L_c^{i})^*$, $1\le i\le (k-1)$, where $L_c$ is a cyclic version of the conventional lag operator and $L_c^{i}$ is its $i$th power, that is, the cyclic lag applied $i$ times. The cyclic lag operator $L_c$ maps a vector $(x_1,x_2,\dots,x_T)$ to $(x_{T},x_1,x_2,\dots,x_{T-1})$. Let $P_{\bot \mathcal W}$ denote the projector on orthogonal complement to $\mathcal W$. Then,
\begin{equation}
  \tilde R_0=(\Delta X) P_{\bot \mathcal W}, \qquad \tilde R_k= \tilde X (L_c^{k-1})^* P_{\bot \mathcal W}.
\end{equation}

\subsection{Second-order asymptotics}\label{Section_asy_results}

In this section we show that, under additional restrictions, the eigenvalues $\tilde{\lambda}_i,\,{i=1,\ldots,N}$ are very close (up to $N^{-1+\epsilon}$ for arbitrary $\epsilon>0$) to a known random matrix distribution. From this result we deduce our main theorem (Theorem \ref{Theorem_J_stat}), which gives the large $N,T$ limit of the test statistic $LR_{N,T}(r)$ in Eq.~\eqref{LR_NT}. Before we formally state the results, let us define the relevant random matrix distributions.

\subsubsection{Definitions}

\begin{definition} \label{Definition_Jacobi}
The (real) Jacobi ensemble $\J(N;p,q)$ is a distribution on $N\times N$ real symmetric matrices $\mathcal M$ of density proportional to
 \begin{equation}
  \label{eq_Jacobi_def}
  \det(\mathcal M)^{p-1} \det(\1_N-\mathcal M)^{q-1}\, d\mathcal M,\qquad 0<\mathcal M< \1_N,
 \end{equation}
 with respect to the Lebesgue measure, where $p,q>0$ are two parameters, $\1_N$ is the $N\times N$ identity matrix, and $0< \mathcal M < \1_N$ means that both $\mathcal M$ and $\1_N-\mathcal M$ are positive definite.
\end{definition}

The Jacobi ensemble is a generalization of the Beta distribution to the space of square matrices (when $N=1$, we obtain the Beta distribution). It plays a prominent role in statistics; e.g., it appears in canonical correlation analysis for independent data sets and in multivariate analysis of variance (see, e.g., \citet{Muirhead_book}).

\begin{definition} The Airy$_1$ point process is a random infinite sequence of reals
$$
\aa_1>\aa_2>\aa_3>\dots
$$
that can be defined through the following proposition.

\begin{proposition}[\citet{Forrest_spectr},\citet{Tracy_Widom}] \label{Proposition_Airy_Gauss} Let $X_N$ be an $N\times N$ matrix of i.i.d.~$\mathcal{N}(0,2)$ Gaussian random variables and let $\mu_{1;N}\ge \mu_{2;N}\ge \dots \mu_{N;N}$ be eigenvalues of $\frac{1}{2}\left(X_N+X_N^*\right)$. Then, in the sense of convergence of finite-dimensional distributions,
	\begin{equation}
	\label{eq_GOE_to_Airy}
	\lim_{N\to\infty} \left\{N^{1/6}\left(\mu_{i;N}-2\sqrt{N}\right) \right\}_{i=1}^N = \{ \aa_i\}_{i=1}^\infty.
	\end{equation}
\end{proposition}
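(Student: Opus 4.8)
The matrix $G_N:=\tfrac12(X_N+X_N^*)$ is the Gaussian Orthogonal Ensemble in the normalization where the off-diagonal entries have variance $1$ and the diagonal entries have variance $2$ (indeed $G_{ii}=X_{ii}\sim\mathcal N(0,2)$ and $G_{ij}=\tfrac12(X_{ij}+X_{ji})\sim\mathcal N(0,1)$ for $i\ne j$); equivalently $G_N$ has density proportional to $\exp(-\tfrac14\mathrm{Tr}\,M^2)$, so its eigenvalues have joint density proportional to $\prod_{i<j}|\mu_i-\mu_j|\exp(-\tfrac14\sum_i\mu_i^2)$ on $\mathbb R^N$. Under this normalization the spectrum fills $[-2\sqrt N,2\sqrt N]$ to leading order (semicircle law), and the natural fluctuation window at the right endpoint has width of order $N^{-1/6}$, which is exactly the scaling in \eqref{eq_GOE_to_Airy}. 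The goal is therefore to prove convergence, in the sense of finite-dimensional distributions of the rescaled point configuration $\Xi_N:=\{N^{1/6}(\mu_{i;N}-2\sqrt N)\}$, to a fixed point process, which is then taken as the definition of the Airy$_1$ point process.

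The plan is to use the Pfaffian structure of the GOE. First I would recall that the $m$-point correlation functions of the eigenvalues of $G_N$ admit a Pfaffian representation
$$
 \rho_N^{(m)}(x_1,\dots,x_m)=\mathrm{Pf}\big[\mathcal K_N(x_i,x_j)\big]_{i,j=1}^m,
$$
where $\mathcal K_N(x,y)$ is an explicit $2\times2$ matrix-valued kernel assembled from the Hermite functions $\varphi_\ell(x)=c_\ell H_\ell(x/\sqrt2)e^{-x^2/4}$, their derivatives, and one term involving $\mathrm{sgn}(x-y)$ and a partial integral of a Hermite function (the standard skew-orthogonal-polynomial representation of the GOE kernel). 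Next, substituting $x_i=2\sqrt N+N^{-1/6}\xi_i$ and invoking the Plancherel--Rotach asymptotics of Hermite polynomials near the turning point $x=2\sqrt N$—together with the corresponding asymptotics for $\varphi_{N\pm1}$, for $\varphi_N'$, and for $\int_x^\infty\varphi_N$—I would show that a suitable rescaling and diagonal conjugation of $\mathcal K_N(2\sqrt N+N^{-1/6}\xi,\,2\sqrt N+N^{-1/6}\eta)$ converges, entrywise and uniformly for $\xi,\eta$ in compact sets, to the Airy$_1$ matrix kernel $\mathcal K_{\mathrm{Airy}_1}$ built from the Airy function $\mathrm{Ai}$. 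Then each Pfaffian correlation function converges to $\mathrm{Pf}[\mathcal K_{\mathrm{Airy}_1}(\xi_i,\xi_j)]$; since $\mathrm{Ai}$ decays superexponentially at $+\infty$, the limiting correlation functions satisfy Carleman-type growth bounds and hence uniquely determine a point process, and the standard criterion for convergence of point processes through their correlation functions yields that $\Xi_N$ converges in finite-dimensional distributions to this Pfaffian point process. That process is, by definition, the Airy$_1$ point process, which in particular gives the marginal (and joint, over finitely many indices) convergence $N^{1/6}(\mu_{i;N}-2\sqrt N)\to\aa_i$.

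An alternative, and perhaps more direct route to the finite-dimensional statement, is via tridiagonalization: Householder reflections conjugate $G_N$ to a Jacobi (tridiagonal) matrix whose entries are independent, with Gaussian diagonal and $\chi$-distributed off-diagonal entries (the $\beta=1$ case of the Dumitriu--Edelman ensemble). Rescaling the top-left corner of this tridiagonal matrix at the edge and comparing with a finite-difference discretization of the stochastic Airy operator $\mathcal H_1=-\tfrac{d^2}{dx^2}+x+2B'(x)$ on $\mathbb R_+$, one proves norm-resolvent convergence of the random tridiagonal operator to $\mathcal H_1$ (following Ram\'irez--Rider--Vir\'ag); since the smallest eigenvalues are continuous functionals of the resolvent, the joint law of those eigenvalues of $\mathcal H_1$ is the limit of $\{-N^{1/6}(\mu_{i;N}-2\sqrt N)\}$, and one may alternatively define the Airy$_1$ process through $\mathcal H_1$. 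Either way, \eqref{eq_GOE_to_Airy} is the same statement.

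The main obstacle in the first approach is the edge asymptotic analysis of the GOE matrix kernel $\mathcal K_N$: one needs Plancherel--Rotach asymptotics that are uniform across the entire $N^{-1/6}$-window, for $\varphi_N$ as well as for $\varphi_{N\pm1}$, $\varphi_N'$ and the partial integral of $\varphi_N$, and one must track the parity of $N$ and the associated rank-one correction term carefully so that it does not contribute in the limit. In the second approach the main obstacle is upgrading spectral convergence to norm-resolvent convergence of the random tridiagonal operator, which requires tightness/compactness control of the discrete eigenfunctions near the edge. Both obstacles are handled in the cited literature (\citet{Forrest_spectr}, \citet{Tracy_Widom}), so in the present paper it suffices to invoke those results; the sketch above records the structure of the argument.
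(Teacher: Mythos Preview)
The paper does not prove this proposition: it is stated with attribution to \citet{Forrest_spectr} and \citet{Tracy_Widom} and is used purely as the \emph{definition} of the Airy$_1$ point process, with no argument given beyond the citations. Your proposal is therefore not comparable to a proof in the paper, because there is none; rather, you have correctly sketched the content of the cited references (the Pfaffian/skew-orthogonal-polynomial route of Tracy--Widom and Forrester, and the alternative stochastic Airy operator route of Ram\'irez--Rider--Vir\'ag), which is more than is required here.
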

\end{definition}

The marginals of the Airy$_1$ point process can be calculated via various methods (see, e.g., \citet{forrest} for more details).

\subsubsection{Theorems}

The null $H_0$ for \eqref{var_k_restr} is not a point hypothesis, as it does not specify $\Gamma_i,\,i=1,\ldots,k-1$. A simplifying procedure when we are faced with such a composite space of the maintained hypothesis is to assume some fixed values of the parameters as a proxy for the null hypothesis. Along these lines, for the next theorems we are going to introduce additional restrictions and specify the values of $\Gamma_i,\, i=1,\ldots,k-1$. Thus, our model is going to be fully specified (up to a constant $\mu$, which will disappear in the testing procedure).
%\footnote{In Section \ref{Section_MC} we present Monte Carlo simulations which indicate that \eqref{eq_H0} can be replaced by less restrictive (and, thus, closer to $H_0$) hypothesis, for which asymptotic results of Section \ref{Section_asy_results} remain valid.}
%Let us introduce our ``proxy'' null:
We proceed to implement this approach in testing the
hypothesis of no cointegration and introduce the restricted $\widehat H_0$\footnote{We discuss the consequences of using $\widehat H_0$ for testing the null $H_0$ after Theorem \ref{Theorem_J_stat}.}:
 \begin{equation}\label{eq_strong_vark}
      \widehat H_0:\, \Pi=\Gamma_1=\Gamma_2=\dots=\Gamma_{k-1}=0.
     \end{equation}
In other words, under $\widehat H_0$ the data generating process turns into
\begin{equation}\label{eq_H0}
  \Delta X_t=\mu+\eps_t,\qquad t=1,\ldots,T,
\end{equation}
where $\mu$ is an (unknown) $N$-dimensional vector.

\begin{theorem}
\label{Theorem_vark_approximation}
Fix $C>0$, and suppose that $T,N\to\infty$ in such a way that $\frac{T}{N}\in[k+1+C^{-1},C]$. For the data generating process \eqref{var_k_restr} with restrictions $\widehat H_0$ given by \eqref{eq_strong_vark}, one can couple (i.e.,~define on the same probability space) the eigenvalues $\tilde{\lambda}_1\ge \tilde{\lambda}_2\ge\ldots\ge \tilde{\lambda}_N$ of the matrix $\tilde S_{k0} \tilde S_{00}^{-1} \tilde S_{0k}\tilde S_{kk}^{-1}$ and eigenvalues $x_1\ge \dots\ge x_N$ of the Jacobi ensemble $\J(N;\frac{N}{2}, \frac{T-(k+1)N}{2})$ in such a way that, for each $\epsilon>0$, we have\footnote{One can show that the probability in \eqref{eq_prob_to_1} is exponentially close to $1$: there exists a constant $\delta>0$, which depends on $\epsilon$, $C$, and $k$, such that, for all $N$ and $T$ satisfying $\frac{T}{N}\in[k+1+C^{-1},C]$, the probability under the limit in Eq.~\eqref{eq_prob_to_1} is larger than $1-\delta^{-1}\exp(\delta^{-1} N^\delta)$. Analyzing the proof of Theorem \ref{Theorem_vark_approximation}, we can obtain this inequality by combining \eqref{eq_MM_bound} with large deviations bounds for the smallest and largest eigenvalues of the Jacobi ensemble (see e.g., \citet[Section 2.6.2]{anderson2010introduction} for the latter).}
\begin{equation}
\label{eq_prob_to_1}
   \lim_{T,N\to\infty} \mathrm{Prob}\left( \max_{1\le i \le N} |\tilde{\lambda}_i-x_i|< \frac{1}{N^{1-\epsilon}}\right)=1.
\end{equation}
\end{theorem}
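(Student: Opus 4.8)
The plan is to prove Theorem~\ref{Theorem_vark_approximation} by induction on the number of lags $k$. The base case $k=1$ is exactly the VAR($1$) coupling result of \citet{BG}: when $k=1$ the regressor block $\tilde Z_{1t}$ in Procedure~\ref{sscc_BG} consists of the constant alone, so the procedure reduces verbatim to the modified Johansen test of \citet{BG}, whose main theorem is precisely \eqref{eq_prob_to_1} with the Jacobi ensemble $\J\bigl(N;\tfrac N2,\tfrac{T-2N}{2}\bigr)$, matching $T-(k+1)N=T-2N$. The inductive step will reduce a VAR($k$) configuration with time horizon $T$ to a VAR($k-1$) configuration with horizon $T-N$; this is the natural bookkeeping, since the extra lag block $\Delta X_{t-k+1\mid T}$ contributes $N$ additional regressors, and one checks that
\[
 \J\!\left(N;\tfrac N2,\tfrac{(T-N)-kN}{2}\right)=\J\!\left(N;\tfrac N2,\tfrac{T-(k+1)N}{2}\right),
\]
so there is no parameter mismatch along the recursion, and since $k$ is fixed the coupling errors are incurred only finitely many times.

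First I would reduce to a noise-only, coordinate-free problem. Under $\widehat H_0$ the data is a Gaussian random walk with drift, $X_t=X_0+\mu t+\sum_{s\le t}\eps_s$. Because the all-ones vector lies in $\mathcal W$, the projector $P_{\bot\mathcal W}$ annihilates both the trend $\mu$ and the initial condition $X_0$; writing $\eps$ for the $N\times T$ noise matrix, one obtains $(\Delta X)P_{\bot\mathcal W}=\eps P_{\bot\mathcal W}$ and $\tilde X(L_c^{k-1})^{*}P_{\bot\mathcal W}=\eps\tilde M(L_c^{k-1})^{*}P_{\bot\mathcal W}$, where $\tilde M$ is the deterministic $T\times T$ ``detrend-and-sum'' matrix with entries $\tilde M_{st}=\mathbf 1_{s\le t-1}-\tfrac{t-1}{T}$ and $\mathcal W=\mathrm{span}(1,\dots,1)\oplus\bigoplus_{i=1}^{k-1}\mathrm{row}\bigl(\eps(L_c^i)^{*}\bigr)$. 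Hence $\tilde\lambda_1\ge\dots\ge\tilde\lambda_N$ are the squared cosines of the principal angles between the two $N$-dimensional subspaces $P_{\bot\mathcal W}(\mathrm{row}\,\eps)$ and $P_{\bot\mathcal W}\bigl(\mathrm{row}(\eps\tilde M(L_c^{k-1})^{*})\bigr)$ of $\mathbb R^{T}$. A-priori estimates, valid with probability exponentially close to $1$ because $T/N\in[k+1+C^{-1},C]$ keeps all these subspaces in general position, ensure that $\eps$ has full row rank, that both subspaces meet $\mathcal W$ only at the origin, and that every Gram matrix in \eqref{tilde_C} is well-conditioned, so the eigenvalue problem is non-degenerate. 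Since canonical correlations are invariant under a common orthogonal transformation of the ambient space, I would apply a suitable power of the cyclic shift $L_c$ to align the configuration so that exactly one distinguished $N$-dimensional ``lag'' subspace $\mathcal E_{\star}$ (a rotated copy of $\mathrm{row}\,\eps$) is all that tells the level-$k$ configuration apart from a level-$(k-1)$ one.

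The recursion step is then to project the entire configuration onto $\mathcal E_{\star}^{\perp}$, an ambient space of dimension $T-N$. This is where projections of orthogonal groups enter: the row space of the Gaussian matrix $\eps$ is Haar-uniform on the Grassmannian, i.e.\ it is the image of Haar measure on $O(T)$, and ``projecting the rest of the data off a uniformly random $N$-plane'' is governed by the canonical map $O(T)\to O(T-N)$, which pushes Haar measure to Haar measure. Under this projection the remaining lag subspaces, the vector of ones, and the noise subspace $\mathrm{row}\,\eps$ turn into objects having the law of the corresponding objects in a VAR($k-1$) configuration with horizon $T-N$. The one object that does \emph{not} transform equivariantly is the rigid, structured matrix $\tilde M$: one must show that its compression to the random $(T-N)$-dimensional subspace $\mathcal E_{\star}^{\perp}$ is, after the projections that the canonical correlations actually see, within operator norm $N^{-1+\epsilon}$ of the detrend-and-sum matrix of size $T-N$. \textbf{This is the main obstacle.} Because $\tilde M$ is a deterministic, highly non-generic matrix rather than a rotation-invariant random one, establishing this ``matching of spaces'' (the content of Lemma~\ref{Lemma_match_spaces}) is the technical heart of the whole argument; controlling simultaneously that the noise and lag subspaces project to the correct joint law — despite all being deterministic functions of the single matrix $\eps$ — belongs here as well. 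Granting this, the projected configuration is coupled, up to $N^{-1+\epsilon}$, to a genuine VAR($k-1$) configuration with horizon $T-N$, to which the inductive hypothesis applies.

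Finally I would assemble the coupling and account for the error. To convert ``the subspaces (and $\tilde M$) are close'' into ``the $\tilde\lambda_i$ are close'' I would use the standard quantitative stability of principal angles and canonical correlations (Weyl- and Davis–Kahan-type inequalities), which is effective once the Gram matrices are bounded away from singular — already guaranteed with exponentially high probability. Composing the base case $k=1$ of \citet{BG}, the recursion step applied $k-1$ times (each step lowering the horizon by $N$ and costing an $N^{-1+\epsilon}$ coupling error), and the Jacobi-parameter identity above produces a coupling of $\tilde\lambda_1\ge\dots\ge\tilde\lambda_N$ with the eigenvalues of $\J\bigl(N;\tfrac N2,\tfrac{T-(k+1)N}{2}\bigr)$ satisfying \eqref{eq_prob_to_1}; because $k$ is fixed and $T/N$ stays in a compact interval, the total error is $O(N^{-1+\epsilon})$, which is absorbed into $N^{-1+\epsilon'}$. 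Tracking that every probabilistic input — general position of the subspaces, the conditioning bounds, and the large-deviation estimates for the extreme eigenvalues of the Jacobi ensemble — fails only with probability exponentially small in a power of $N$ yields the strengthened bound stated in the footnote.
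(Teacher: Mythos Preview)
Your high-level plan --- induction on $k$, base case from \citet{BG}, and use of the Haar-preserving projection $O(\T)\to O(\T-N)$ --- matches the paper's architecture. But there is a genuine structural gap in your execution, and it is exactly the place you flag as ``the main obstacle.''

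The paper does \emph{not} attempt to show that the detrend-and-sum matrix $\tilde M$ (equivalently, the operator $L_V(\1_V-L_V)^{-1}$ built from the cyclic shift) compresses, under projection onto a random hyperplane, to the analogous matrix in dimension $T-N$. That statement is what your induction needs, and you do not provide a mechanism for it. Instead the paper introduces an abstraction that sidesteps this entirely. It encodes the whole construction as a map $f^{k,N,\T;\mathcal V}(O)$ taking an \emph{arbitrary} orthogonal matrix $O$ to the eigenvalues of the corresponding $P_1P_2P_1$, and then proves two things separately: (i) for Haar-random $O$, the output is \emph{exactly} the Jacobi ensemble (Theorem~\ref{Theorem_Var_k}); here the induction step is the purely algebraic identity that the top $M$ coordinates of $(\1_\T+O)^{-1}O\mathcal V$ equal $(\1_M+\tilde O)^{-1}\langle B(\1_N+D)^{-1}\rangle$, so no approximation is incurred at any step; and (ii) the map $f^{k,N,\T;\mathcal V}$ is continuous in $O$ (Proposition~\ref{Proposition_map_continuity}), again by induction on $k$. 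The \emph{only} approximation in the entire proof is then a \emph{single} coupling at the top level: the cointegration eigenvalues arise from $O=\tilde L=-\tilde O L_V\tilde O^{*}$ (Proposition~\ref{Prop_gaussian_rotation}), and one couples this to a Haar matrix via rigidity of the circular ensemble eigenvalues, the Meckes--Meckes bound \eqref{eq_MM_bound}. You never invoke this spectral-rigidity ingredient, and without it your ``compression of $\tilde M$'' step has no proof.

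You also misread Lemma~\ref{Lemma_match_spaces}. It is not about compressing $\tilde M$; it is part of the continuity proof for $f$: given two nearby orthogonal matrices $O_1,O_2$, it constructs nearby orthogonal changes of basis $U_1,U_2$ sending a fixed coordinate subspace to $\langle B_1\rangle$, $\langle B_2\rangle$, so that the inductive hypothesis can be applied to $U_i^{*}\bigl(A_i-B_i(\1_N+D_i)^{-1}C_i\bigr)U_i$ with a \emph{common} $\mathcal V$. The subtlety there (and the reason it is singled out) is topological --- there is no globally continuous choice of orthonormal completion --- not the analytic issue you describe.

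In short: your induction tries to carry the rigid, deterministic shift through each step and absorb an approximation error $k-1$ times; the paper instead replaces the shift by a Haar matrix \emph{once} (paying a single $N^{-1+\epsilon}$ via eigenvalue rigidity), does the induction exactly in that model, and transfers the conclusion back by continuity. The missing idea in your proposal is precisely that auxiliary Haar model and the one-shot coupling \eqref{eq_MM_bound}.
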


The proof of Theorem \ref{Theorem_vark_approximation}  relies on two steps. First, we modify our matrix $\tilde{\mathcal C}$ a bit, which leads to a surprising appearance of the Jacobi ensemble, as shown in Section \ref{Section_Jacobi_model}. Second, in Section \ref{Section_perturbation} we show that the distance between the original model and the modified one becomes small as $N\to\infty$.

Combining Theorem \ref{Theorem_vark_approximation} with known asymptotic results for the Jacobi ensemble, which we recall in Proposition \ref{Theorem_Jacobi_as} in Section \ref{Section_Jacobi}, we derive the asymptotics of \eqref{LR_NT} in the following theorem.

\begin{theorem} \label{Theorem_J_stat}
Fix $C>0$, and suppose that $T,N\to\infty$ in such a way that $\frac{T}{N}\in[k+1+C^{-1},C]$. For the data generating process \eqref{var_k_restr} with restrictions $\widehat H_0$ given by \eqref{eq_strong_vark}, for each finite $r=1,2,\dots$, we have convergence in distribution for the largest eigenvalues defined in Eq.~\eqref{eq_vark_eig}:
	\begin{equation}
	\label{eq_statistic_limit}
	 \frac{\sum_{i=1}^{r} \ln(1-\tilde{\lambda}_i)- r \cdot c_1(N,T)}{ N^{-2/3}  c_2(N,T)}  \, \xrightarrow[T,N\to\infty]{d} \sum_{i=1}^r \aa_i,
	\end{equation}
	where
	\begin{equation}
     \label{eq_constants_c1_c2}
	c_1\left(N,T\right)=\ln\left(1-\lambda_+\right), \qquad
	c_2\left(N,T\right)=-\frac{2^{2/3} \lambda_+^{2/3}}{(1-\lambda_+)^{1/3} (\lambda_+-\lambda_-)^{1/3}} \left(\p+\q\right)^{-2/3}  <0,
	\end{equation}
	\begin{equation}\label{pq_def}
	\p=2, \qquad \q=\frac{T}{N}-k,\qquad \lambda_\pm=\frac{1}{(\p+\q)^2}\left[\sqrt{\p(\p+\q-1)}\pm \sqrt{\q}  \right]^2.
	\end{equation}
\end{theorem}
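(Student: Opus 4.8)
The plan is to derive Theorem~\ref{Theorem_J_stat} from the coupling in Theorem~\ref{Theorem_vark_approximation} together with the edge asymptotics of the real Jacobi ensemble recorded in Proposition~\ref{Theorem_Jacobi_as}; the only additional work is to push the nonlinear map $x\mapsto\ln(1-x)$ through the coupling near the right edge $\lambda_+$ and to accommodate the fact that $T/N$ is not assumed to converge.

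First I would realize $\tilde\lambda_1\ge\cdots\ge\tilde\lambda_N$ and the eigenvalues $x_1\ge\cdots\ge x_N$ of $\J\!\left(N;\tfrac N2,\tfrac{T-(k+1)N}{2}\right)$ on one probability space as in Theorem~\ref{Theorem_vark_approximation}, so that $\max_i|\tilde\lambda_i-x_i|<N^{-1+\epsilon}$ on an event of probability $\to1$, for a fixed $\epsilon\in(0,\tfrac13)$. In the large-$N$ normalization this Jacobi ensemble has exactly the Wachter parameters $\p=2$, $\q=\tfrac TN-k$ of \eqref{pq_def}, so its spectrum concentrates on $[\lambda_-,\lambda_+]\subset(0,1)$ and its top eigenvalues sit at distance $O(N^{-2/3})$ from $\lambda_+<1$. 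Intersecting with a further event of probability $\to1$, we may assume that $\tilde\lambda_i$ and $x_i$ for $i\le r$ all lie in a compact subinterval $K\subset(0,1)$; since $\ln(1-\cdot)$ is smooth on $K$ we get $\ln(1-\tilde\lambda_i)=\ln(1-x_i)+O(N^{-1+\epsilon})$, an error that is $o(N^{-2/3})$. Hence it suffices to prove \eqref{eq_statistic_limit} with $\tilde\lambda_i$ replaced by $x_i$.

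Next I would invoke Proposition~\ref{Theorem_Jacobi_as}: with $T/N$ confined to $[k+1+C^{-1},C]$ (so that $\lambda_\pm$ stay in a compact subset of $(0,1)$) it yields the joint convergence of finite-dimensional distributions $\bigl\{(x_i-\lambda_+)/\sigma_{N,T}\bigr\}_{i=1}^r\xrightarrow{d}\{\aa_i\}_{i=1}^r$ for an explicit edge scale $\sigma_{N,T}\asymp N^{-2/3}$ which, by the formula for $c_2$ in \eqref{eq_constants_c1_c2}, satisfies $\sigma_{N,T}=-(1-\lambda_+)\,N^{-2/3}c_2(N,T)$. A first-order Taylor expansion of $\ln(1-x)$ at $x=\lambda_+$, using $c_1(N,T)=\ln(1-\lambda_+)$, gives for each $i\le r$
\[
 \ln(1-x_i)-c_1(N,T)=-\frac{x_i-\lambda_+}{1-\lambda_+}+O\!\left((x_i-\lambda_+)^2\right)
 =-\frac{\sigma_{N,T}}{1-\lambda_+}\cdot\frac{x_i-\lambda_+}{\sigma_{N,T}}+O_P\!\left(N^{-4/3}\right),
\]
and the remainder is negligible after division by $N^{-2/3}c_2(N,T)$. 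Summing over $i=1,\dots,r$, dividing by $N^{-2/3}c_2(N,T)$, and applying Slutsky's lemma and the continuous mapping theorem gives $\dfrac{\sum_{i=1}^r\ln(1-x_i)-r\,c_1(N,T)}{N^{-2/3}c_2(N,T)}\xrightarrow{d}\sum_{i=1}^r\aa_i$, which by the previous paragraph is \eqref{eq_statistic_limit}.

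Finally, to deal with a non-converging ratio $T/N$ I would use the usual subsequence argument: given any subsequence, extract a further subsequence along which $T/N\to\tau\in[k+1+C^{-1},C]$; along it $\lambda_\pm$, $c_1(N,T)$, $c_2(N,T)$ and $N^{2/3}\sigma_{N,T}$ converge to continuous, nonvanishing functions of $\tau$, the reasoning above applies verbatim, and the limit $\sum_{i=1}^r\aa_i$ does not depend on $\tau$; hence the convergence holds along the whole sequence. I expect the main obstacle to be bookkeeping rather than a single deep step: one has to keep the $N^{-1+\epsilon}$ coupling error, the quadratic Taylor remainder, and the $N^{-2/3}$-order concentration that keeps everything away from the singularity of $\ln(1-x)$ at $x=1$ simultaneously $o(N^{-2/3})$ and uniform over the allowed range of $T/N$; the genuinely hard analytic input --- the Tracy--Widom/Airy$_1$ edge behavior of the Jacobi ensemble --- is precisely what Proposition~\ref{Theorem_Jacobi_as} provides.
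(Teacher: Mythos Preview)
Your proposal is correct and follows exactly the route the paper indicates: combine the coupling of Theorem~\ref{Theorem_vark_approximation} with the Jacobi edge asymptotics of Proposition~\ref{Theorem_Jacobi_as}, then push the smooth map $x\mapsto\ln(1-x)$ through via a first-order Taylor expansion at $\lambda_+$. The paper does not spell out the bookkeeping (error control, the identification $c_+^{-2/3}=-(1-\lambda_+)c_2(N,T)$, or the handling of a non-convergent $T/N$), but your filling-in of these details matches what is intended.
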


\begin{remark}
The condition $\frac{T}{N}\in[k+1+C^{-1},C]$ is another way to require that $T$ and $N$ grow to infinity proportionally. For example, it is guaranteed by the joint limit \eqref{eq_limit_regime}. The role of $C$ is only to make sure that $T/N$ does not get too close to $k+1$ (if $T/N$ approaches $k+1$, then $\lambda_+$ approaches $1$ and $c_1$ explodes) or $+\infty$ (if $T/N$ becomes large, then $\lambda_+-\lambda_-$ and $\lambda_+$ tend to $0$ at the same speed and $c_2$ vanishes).
\end{remark}

Theorem \ref{Theorem_J_stat} gives us the basis of cointegration testing in the large $N,T$ setting. Treating $\widehat H_0$ as a proxy for $H_0$, we can use our asymptotic results to test high-dimensional VARs for the presence of cointegration. Formally, to perform testing, one first needs to calculate the statistic $LR_{N,T}(r)$ following Procedure \ref{sscc_BG}. We recommend using small\footnote{In Theorem \ref{Theorem_J_stat} $r$ is kept fixed as $N$ and $T$ grow. The role of this choice and the motivations for sticking to it are discussed in detail in \cite[Section 3.2]{BG}.} values of $r$, such as $r=1,2$, or $3$. Then, one needs to calculate $\frac{LR_{N,T}(r)- r \cdot c_1(N,T)}{ N^{-2/3}  c_2(N,T)}$, as in Theorem \ref{Theorem_J_stat}, and compare the result with quantiles of the sum of Airy$_1$, $\sum_{i=1}^r \aa_i$. If the rescaled statistic is larger than the $\alpha$ quantile, we reject the null of no cointegration at the $(1-\alpha)$ level. We report the quantiles for $r=1,2,3$ in Table \ref{airy_quantiles}. See also \citet{vignette_largevars} for more detailed tables for $r=1,\ldots,10$.

\begin{table}[t]
	\begin{tabular}{c|c|c|c|c}
		\hline
		\diagbox[width=1.5cm, height=0.65cm]{$r$}{$\alpha$} & $0.9$ & $0.95$ & $0.975$ & $0.99$\\
		\hline
		\hline
		1 & 0.44  & 0.97  & 1.45  & 2.01 \\
		\hline
        2 & -1.88 &	-1.09 & -0.40& 	0.41\\
        \hline
        3 & -5.91 &	-4.91 & -4.03 & -2.99\\
        \hline
        \hline
        \multicolumn{5}{c}{}
	\end{tabular}
	\caption{Quantiles of $\sum\limits_{i=1}^r \aa_i$\, for $r=1,2,3$ (based on $10^6$ Monte Carlo simulations of $10^8\times 10^8$ tridiagonal matrices of \citet{dumitriu_edelman}).}	
  \label{airy_quantiles}
\end{table}

Note that, although the asymptotic result \eqref{eq_statistic_limit} is shown under the restrictions $\widehat{H}_0$, we believe that it extends well beyond $\widehat H_0$: the same asymptotic results and testing procedures continue to hold in many situations with nonzero $\Gamma_i$ in Eq.~\eqref{var_k_restr}. While we do not have a full rigorous proof, we expect the following to be true:

\smallskip
\emph{For the data generating process \eqref{var_k_restr}, assume that the ranks of all $\Gamma_i$ are bounded, as are the norms of all the matrices and vectors involved in the specification of the process (see Section \ref{Section_Appendix_2} for more details). Then conclusion \eqref{eq_statistic_limit} of Theorem \ref{Theorem_J_stat} should continue to hold.
}

\smallskip

We collect extensive evidence supporting this statement. In Section \ref{Section_MC} we report results from Monte Carlo simulations consistent with it. Further, in Section \ref{appendix_H0} we present a precise mathematical conjecture in this direction and give a heuristic argument for its validity. The intuition is that generic small-rank matrices are negligible relative to the scale of the rest of the process and, thus, their addition does not change the asymptotics. For this intuition to hold, it is important to correctly specify the parameter $k$ in the procedure to be equal to (or greater than) its true value. Otherwise (i.e., if we do not regress on the relevant $\Delta X_{t-i}$ in the procedure), the presence of $\Gamma_i$ can have an effect similar to that of the presence of nonzero $\Pi$: it leads to the appearance of special highly correlated linear combinations of rows of $\tilde R_0$ and $\tilde R_k$, which changes the behavior of the largest canonical correlations $\tilde \lambda_i$; see also the simulations in Section \ref{section_MC_var_order}.

\bigskip

Theorem \ref{Theorem_J_stat} means that under $\widehat H_0$ the largest eigenvalues $\tilde \lambda_i$ are close to $\lambda_+$, which is the right point of the support of the Wachter distribution in Eq.~\eqref{eq_Jacobi_equilibrium_1}. The relevance of this theorem for cointegration testing stems from the fact that we expect some of the eigenvalues to be much larger than $\lambda_+$ when cointegrating relationships are present. As an illustration, see Figure \ref{small_rk_pic}, where $\Pi$ of rank $1$ leads to the largest eigenvalue being to the right of $\lambda_+$ and separated from the other eigenvalues. The separation is due to the small rank of $\Pi$. However, even if the rank of $\Pi$ is large, we expect the largest eigenvalue to be significantly larger than $\lambda_+$, and, thus, the test remains relevant (see Section \ref{section_power}). Providing rigorous results on the consistency of the test is an important task for future research. We present the first result in this direction in Corollary \ref{Corollary_power} in Section \ref{appendix_power}, where we produce a lower bound on the power of the test against a particular ``one cointegrating relationship'' alternative and show that the power tends to $1$ as $T/N$ tends to infinity.

\section{Monte Carlo simulations}
\label{Section_MC}
\subsection{Size}
We refer to \citet[Section 5]{BG} for the finite sample size performance of our test for $k=1$. The results for VAR($k$) are similar, and we do not show them in much detail here. For illustration purposes and to represent the comparative statics, Table \ref{rej_rate_var_k} reports the empirical size for $T=522,\,N=92$ (those numbers correspond to our empirical example in Section \ref{Section_SP}) for tests based on VAR($k$), $k=1,2,3,4$ procedures.\footnote{Depending on the assumed order of autoregression, we have different numbers of regressors in Procedure \ref{sscc_BG}.} We can see that the numbers are close to the desired $5\%$ and, for the same $N$ and $T$, a lower order of VAR leads to slightly better results.
\begin{table}[t]
	\begin{tabular}{c|c|c|c}
		\hline
		VAR($1$) & VAR($2$) & VAR($3$) & VAR($4$)\\
		\hline
		\hline
		$5.81\%$  & $5.92\%$  & $6.12\%$  & $6.95\%$ \\ % 5.8061; 5.9216; 6.1241; 6.9496;
		\hline
        \hline
        \multicolumn{4}{c}{}
	\end{tabular}
	\caption{Empirical size under no cointegration hypothesis ($5\%$ nominal level) based on VAR($k$) tests, $k=1,2,3,4$. Data generating process: ${\Delta X_{it}=\eps_{it}}$, $\eps_{it}\thicksim$ i.i.d.~$\mathcal{N}(0,1),\, T=522,\,N=92,\, MC = 1,000,000$ replications.}	
  \label{rej_rate_var_k}
\end{table}

%In the rest of this section we focus on the aspects of the test, which are specific to $k>1$.

\begin{figure}[t!]
\begin{subfigure}{.45\textwidth}
  \centering
  \includegraphics[width=1.0\linewidth]{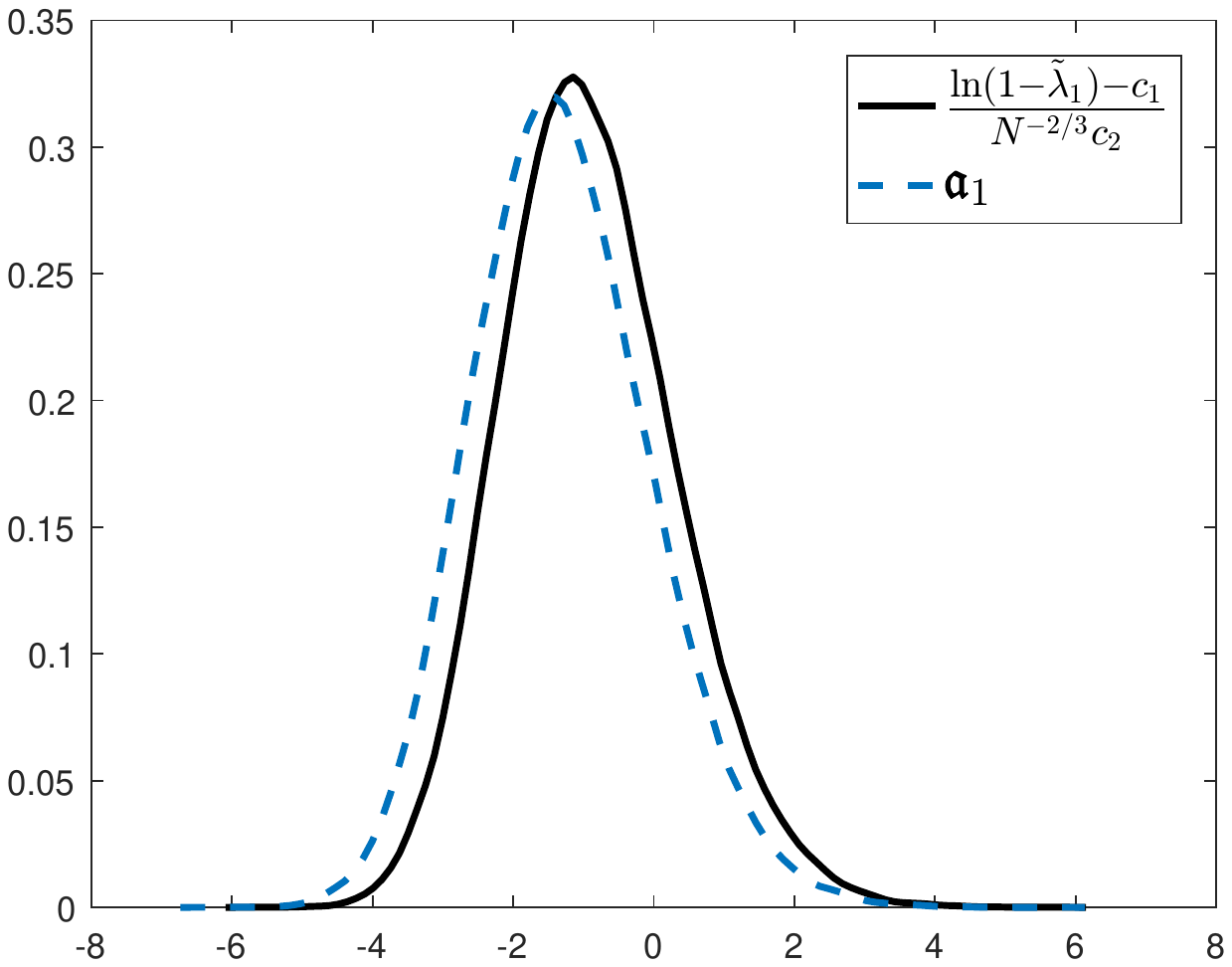}
  \caption{$\Gamma_1=0.95E_{11}$.}
  \label{G11}
\end{subfigure}%
\begin{subfigure}{.45\textwidth}
  \centering
  \includegraphics[width=1.0\linewidth]{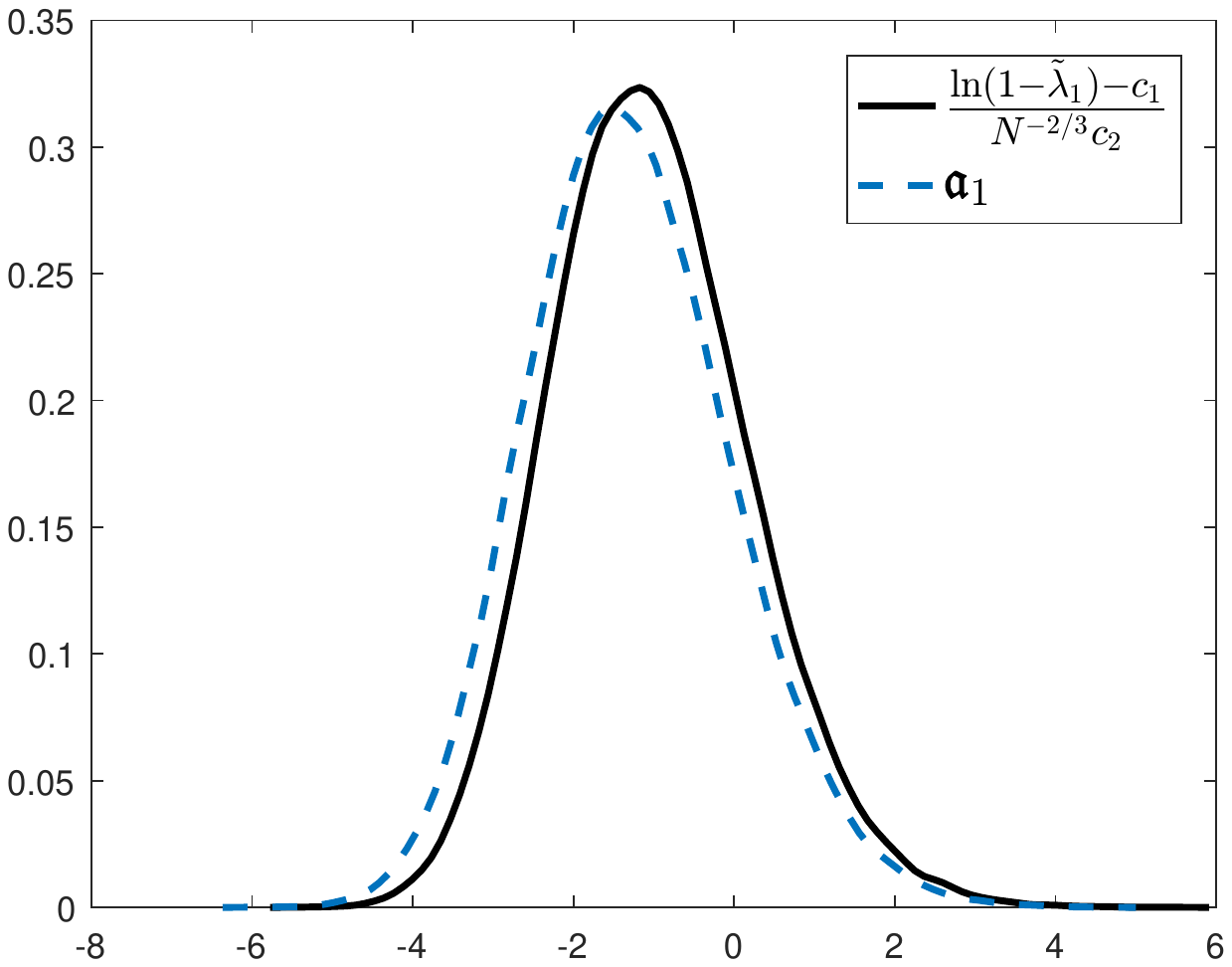}
  \caption{$\Gamma_1=0.1E_{\cdot1}+0.95E_{12}$.}
  \label{G12}
\end{subfigure}
\caption{Airy$_1$ and asymptotic distribution of the rescaled $\ln(1-\tilde{\lambda}_1)$ under $H_0$.
Data generating process: $\Delta X_{t}=\Gamma_1\Delta X_{t-1}+\eps_{t}$, $\eps_{it}\thicksim$ i.i.d.~$\mathcal{N}(0,1)$, $T=500$, $N=100$, $MC=100,000$ replications.}
\label{nonzeroGamma}
\end{figure}

\begin{figure}[t!]
    \centering
	{\scalebox{0.7}{\includegraphics{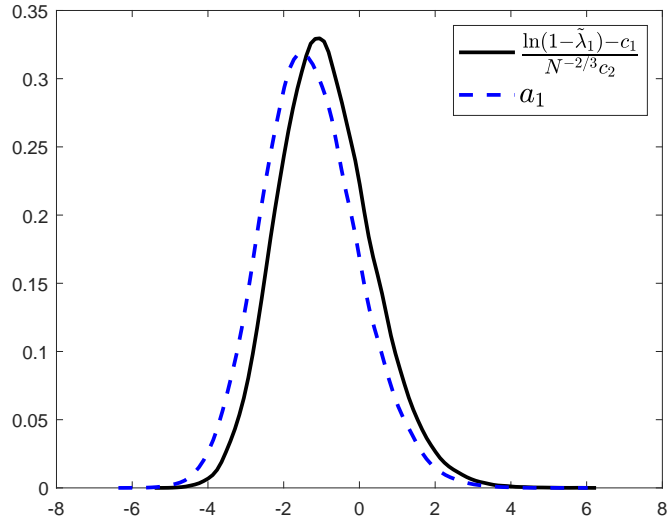}}}
\caption{Airy$_1$ and asymptotic distribution of the rescaled $\ln(1-\tilde{\lambda}_1)$ under $H_0$.
Data generating process: $\Delta X_{t}=\Gamma_1\Delta X_{t-1}+\Gamma_2\Delta X_{t-2}+\eps_{t}$, $\Gamma_1=E_{11},\,{\Gamma_2=-\tfrac2{9}E_{11}}$, $\eps_{it}\thicksim$ i.i.d.~$\mathcal{N}(0,1)$, $T=500$, $N=100$, $MC=100,000$ replications.}
\label{nonzeroGamma_VAR3}
\end{figure}

\subsection{$\mathbf{H_0}$ vs.~$\mathbf{\widehat H_0}$} \label{section_H_tilde} An important aspect of our analysis for $k>1$ is the introduction of the additional restrictions $\widehat H_0$ maintained under the null. We would like to check whether Theorem \ref{Theorem_J_stat} can hold under the less restrictive $H_0$ instead of $\widehat H_0$. Some theoretical results in this direction are provided in Section \ref{appendix_H0}. Here we complement them with Monte Carlo simulations. For $N=100$, $T=500$ we simulate the data based on i.i.d.~$\mathcal{N}(0,1)$ errors $\eps_{it}$, zero $\Pi$, and nonzero $\Gamma_i$ (i.e., this corresponds to $H_0$ but not $\widehat H_0$). We then compare the density of the test based on the largest eigenvalue $\tilde{\lambda}_1$ ($r=1$ case of Theorem \ref{Theorem_J_stat} with statistic $\frac{\ln(1-\tilde{\lambda}_1)-c_1(N,T)}{N^{-2/3}  c_2(N,T)}$), with the density of the first coordinate of the Airy$_1$ point process, $\aa_1$. If the densities coincide, then it means that we can still use the asymptotics from Theorem \ref{Theorem_J_stat} to test the null of no cointegration.
%For $\mbox{N=100}$, $\mbox{T=500}$  we simulate the data based on i.i.d.~$\mathcal{N}(0,1)$ errors $\eps_{it}$, zero $\Pi$, and non-zero $\Gamma_i$ (i.e., this corresponds to $H_0$, but not $\widehat H_0$). We then compare the density of the test based on the largest eigenvalue $\tilde{\lambda}_1$ ($r=1$ case of Theorem \ref{Theorem_J_stat} with statistic $\frac{\ln(1-\tilde{\lambda}_1)-c_1(N,T)}{N^{-2/3}  c_2(N,T)}$), with the density of the first coordinate of the Airy$_1$ point process, $\aa_1$. The aim of this simulation is to check whether Theorem \ref{Theorem_J_stat} can hold under less restrictive $H_0$ instead of $\widehat H_0$. If the densities coincide, then it means that we can still use the asymptotics from Theorem \ref{Theorem_J_stat} to test the null of no cointegration.

Let $E_{ij}$ be a matrix with $1$ at the cell $(i,j)$ and $0$s everywhere else and let $E_{\cdot j}$ be a matrix with $1$s filling the entire column $j$ and $0$s everywhere else. In the first two experiments we take $k=2$. We set $\Gamma_1=0.95E_{11}$ in the first one, which guarantees stationarity of $\Delta X_t$ but allows for strong time correlations in the first coordinate via the $0.95$ factor. In this case the rank of $\Gamma_1$ is $1$. In the second experiment we consider an asymmetric matrix $\Gamma_1=0.1E_{\cdot1}+0.95E_{12}$, which has a close to $1$ singular value because of the $0.95$ factor; the rank of $\Gamma_1$ is $2$ in this case. The results are illustrated in Figure \ref{nonzeroGamma}. In the third experiment, we take $k=3$, $\Gamma_1=E_{11}$, and $\Gamma_2=-\tfrac2{9}E_{11}$, so that both matrices are of rank $1$. The value $-\tfrac{2}{9}$ guarantees stationarity of $\Delta X_t$, since $1-z+\tfrac{2}{9}z^2=\left(1-\tfrac1{3}z\right)\left(1-\tfrac{2}{3}z\right)$. The result is shown in Figure \ref{nonzeroGamma_VAR3}.
We interpret the outcomes of these three experiments as a strong argument toward the validity of an analogue of Theorem \ref{Theorem_J_stat} well beyond the $\widehat H_0$ setting.\footnote{The minor mismatches between densities as in Figures \ref{nonzeroGamma} and \ref{nonzeroGamma_VAR3} should be expected even under  $\widehat H_0$. Theorem \ref{Theorem_vark_approximation} (after multiplication of the result by $N^{2/3}$, as in Eq.\ \eqref{eq_statistic_limit}) predicts errors of at least ${\rm const}\cdot N^{-1/3}$ in the approximations under  $\widehat H_0$.}

\begin{figure}[t]
	\centering
	{\scalebox{0.7}{\includegraphics{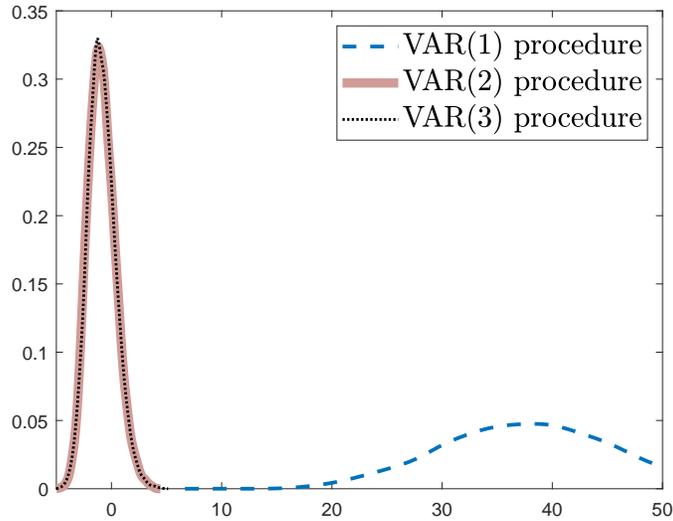}}}
	\caption{Density of rescaled $\ln(1-\tilde{\lambda}_1)$ obtained from various procedures. The correct procedure corresponds to VAR(2), $k=2$.
Data generating process: \mbox{$\Delta X_{t}=0.95E_{11}\Delta X_{t-1}+\eps_{t}$,} $\eps_{it}\thicksim$ i.i.d.~$\mathcal{N}(0,1)$, $T=500$, $N=100$, $MC=10,000$ replications. }
	\label{rk_VAR_matters}
\end{figure}

\subsection{Order of VAR}\label{section_MC_var_order}

\begin{figure}[t]
\begin{subfigure}{.33\textwidth}
  \centering
  \includegraphics[width=1.0\linewidth]{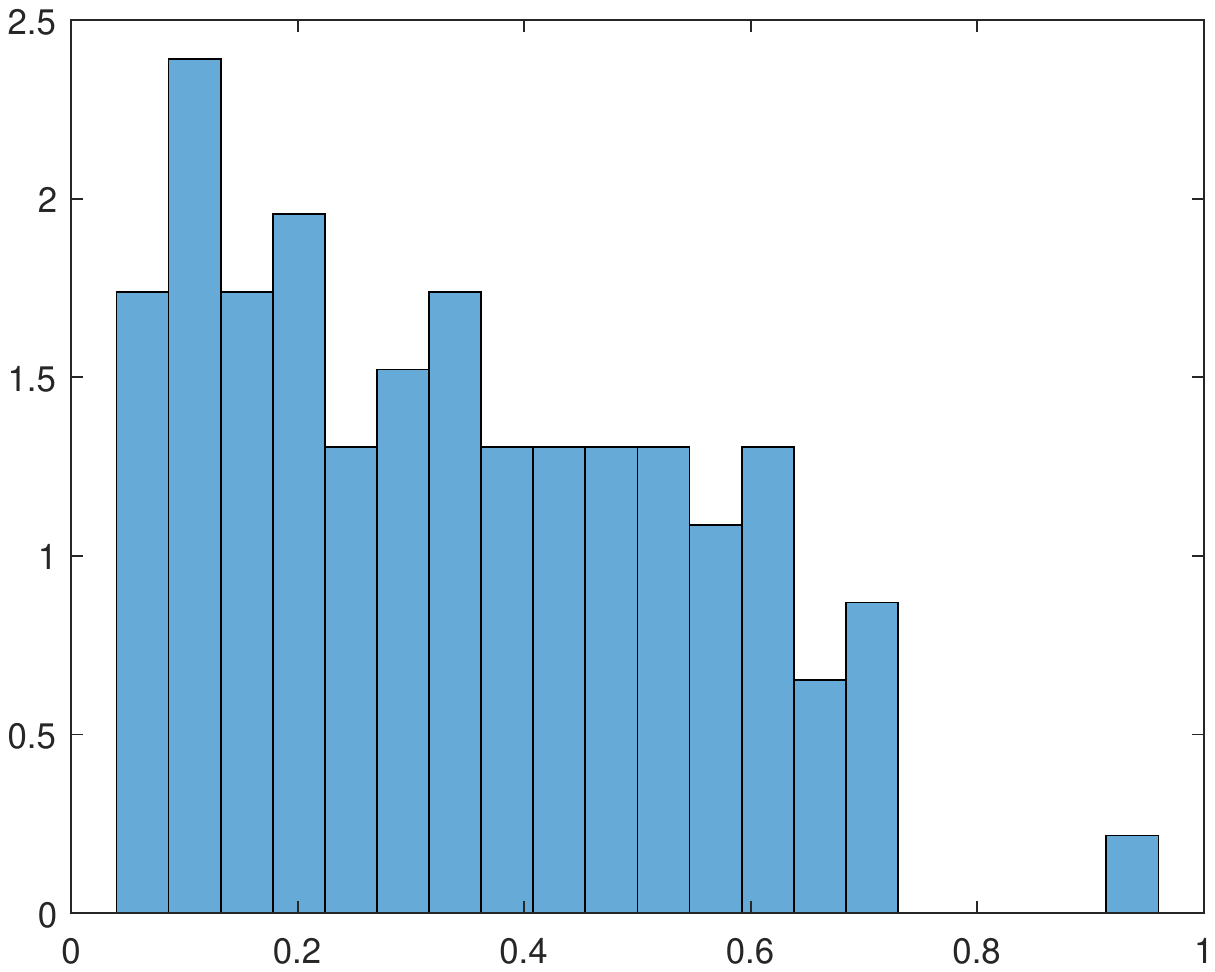}
  \caption{VAR(1) procedure.}
  \label{var2_var1test}
\end{subfigure}%
\begin{subfigure}{.33\textwidth}
  \centering
  \includegraphics[width=1.0\linewidth]{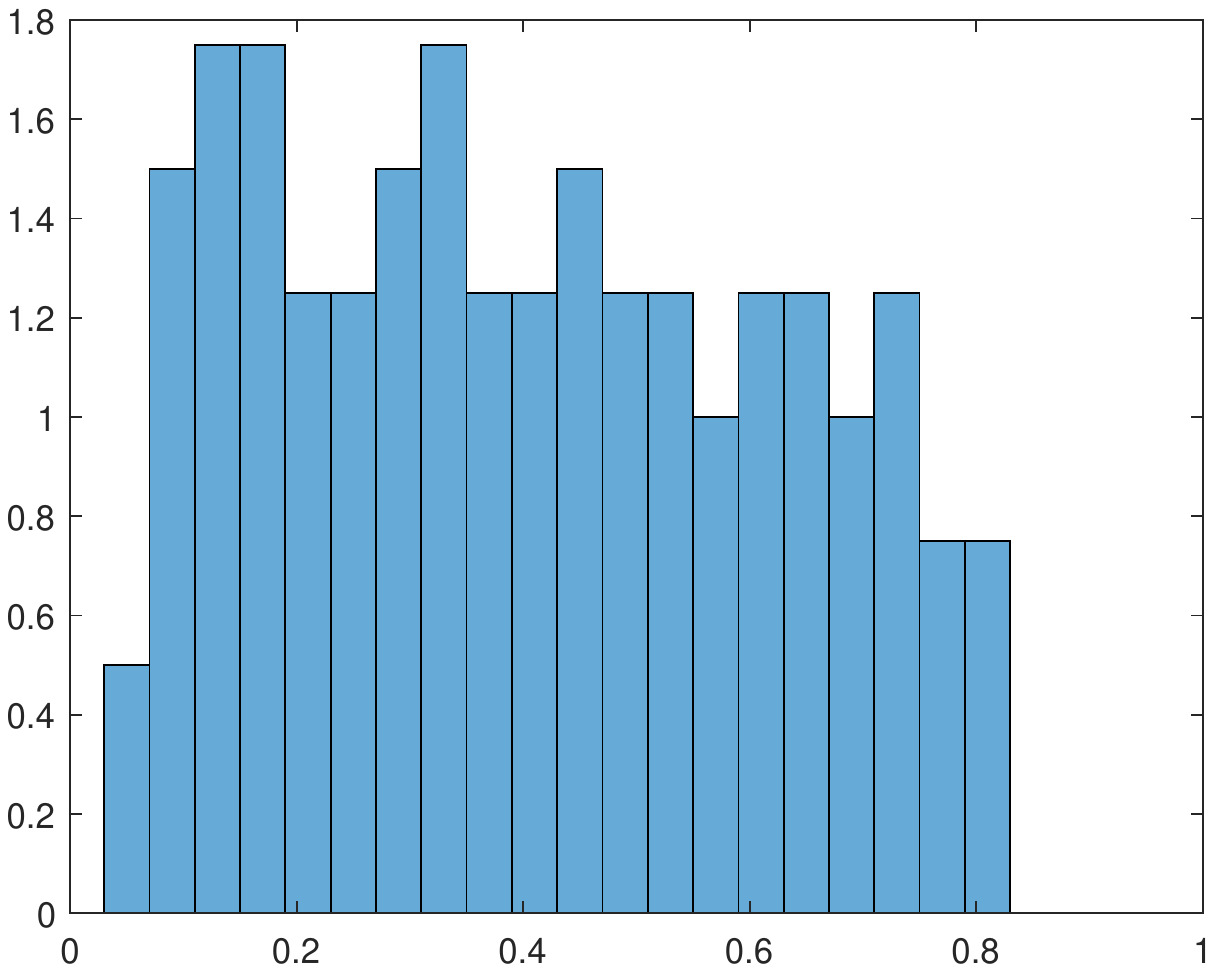}
  \caption{VAR(2) procedure.}
  \label{var2_var2test}
\end{subfigure}
\begin{subfigure}{.33\textwidth}
  \centering
  \includegraphics[width=1.0\linewidth]{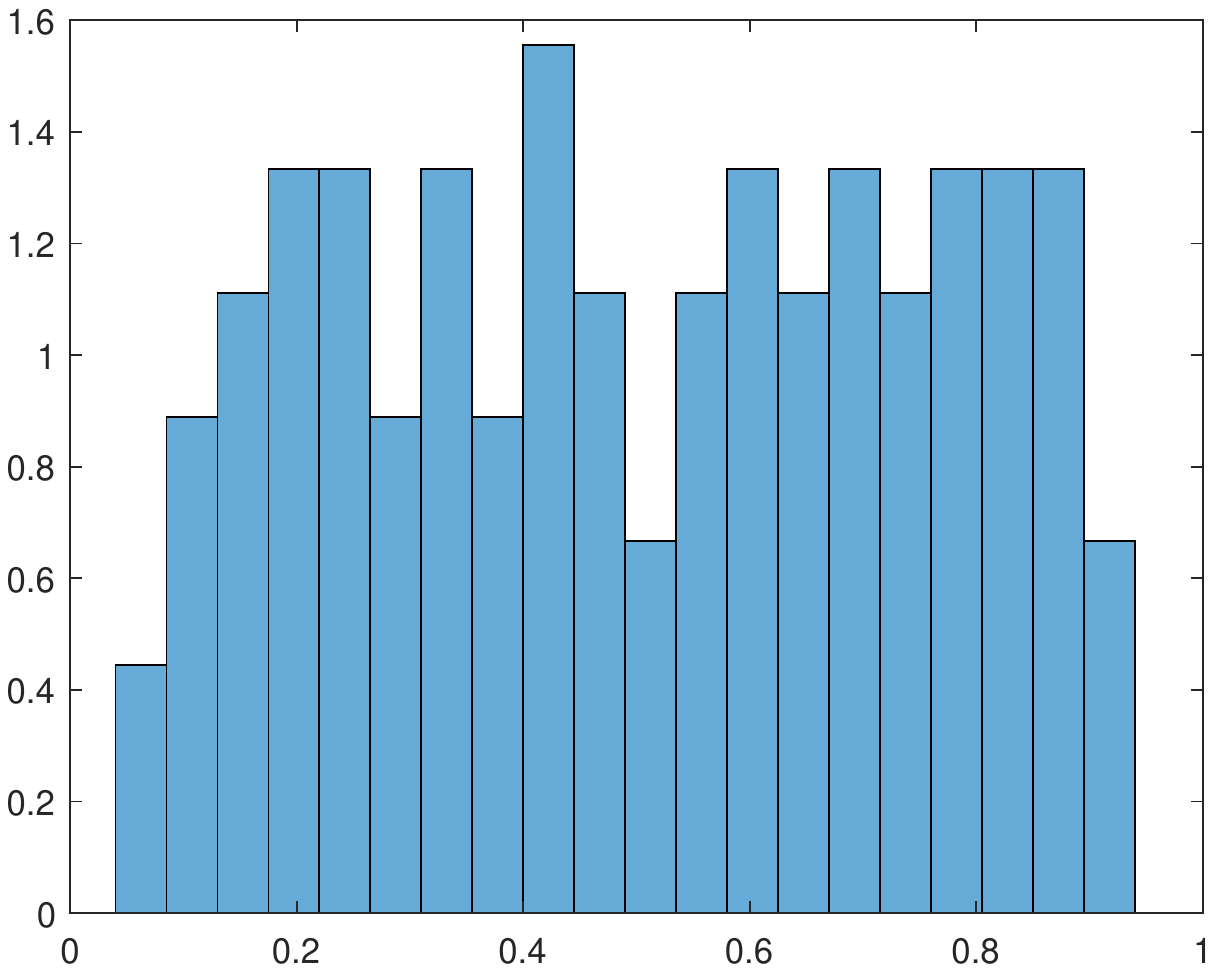}
  \caption{VAR(3) procedure.}
  \label{var2_var3test}
\end{subfigure}
\caption{Eigenvalues obtained from various procedures. The correct procedure corresponds to VAR(2), $k=2$.
Data generating process: \mbox{$\Delta X_{t}=0.95E_{11}\Delta X_{t-1}+\eps_{t}$,} $\eps_{it}\thicksim$ i.i.d.~$\mathcal{N}(0,1)$, $T=500$, $N=100$.}
\label{var2_var_k_test}
\end{figure}

\begin{figure}[t]
\begin{subfigure}{.33\textwidth}
  \centering
  \includegraphics[width=1.0\linewidth]{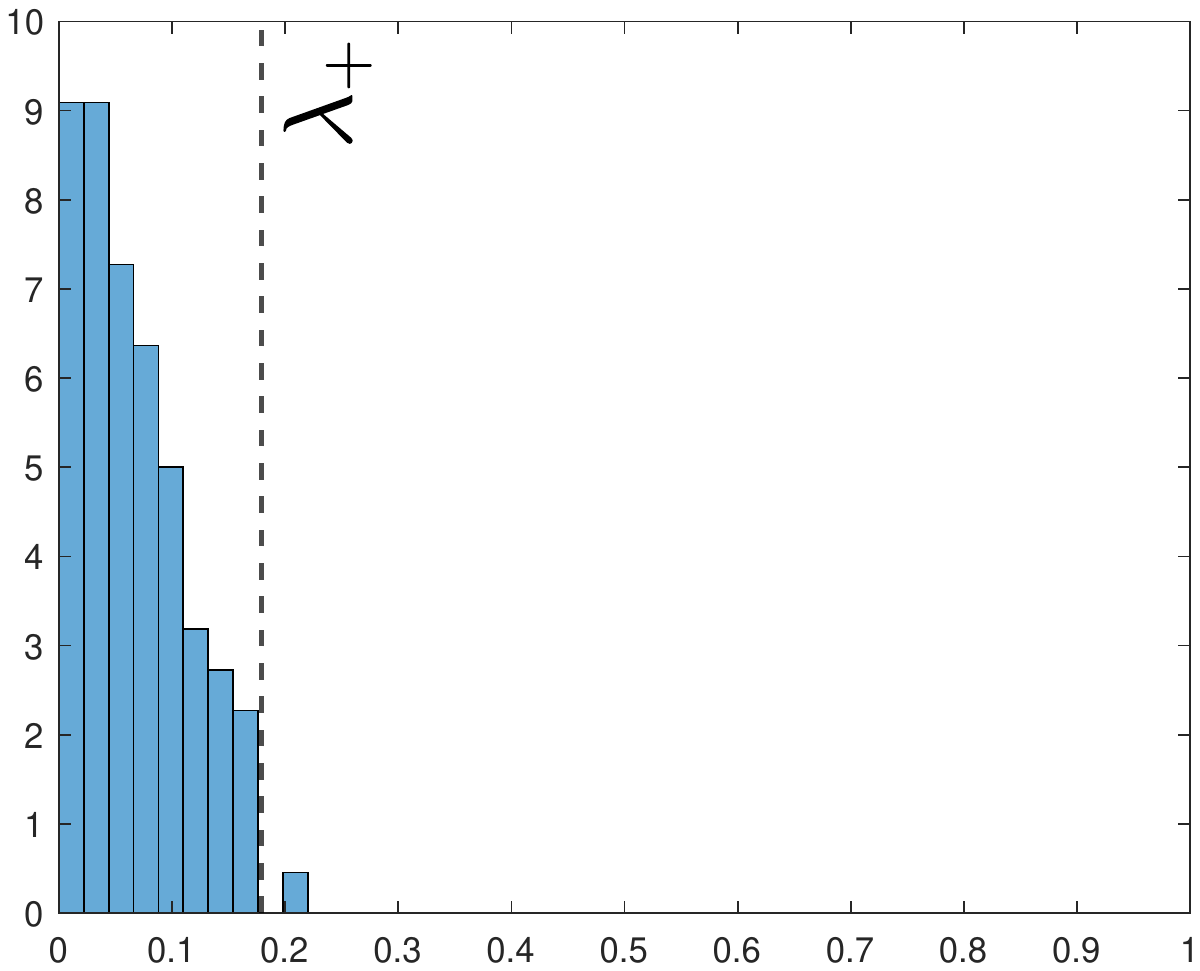}
  \caption{VAR(1) procedure.}
  \label{var5_var1test}
\end{subfigure}%
\begin{subfigure}{.33\textwidth}
  \centering
  \includegraphics[width=1.0\linewidth]{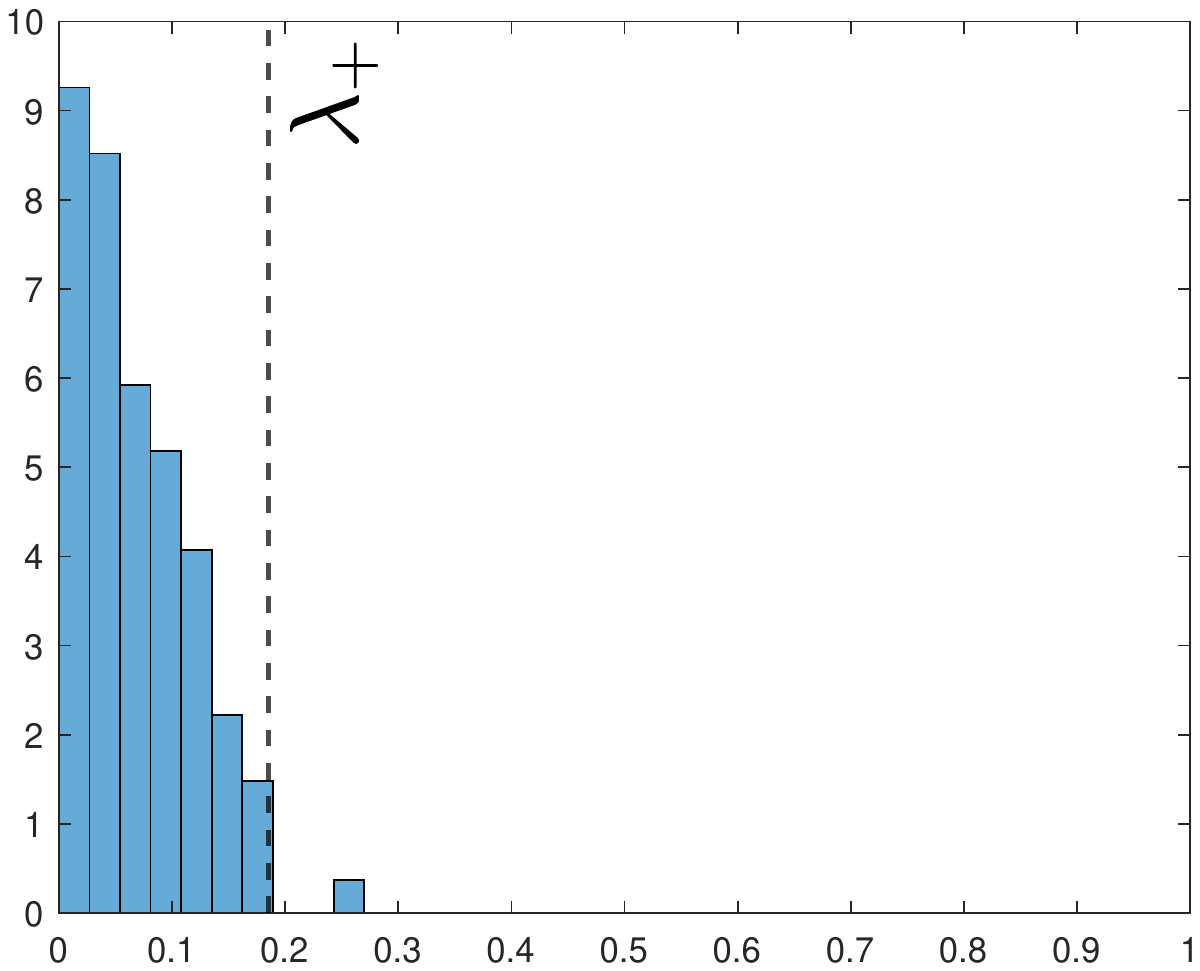}
  \caption{VAR(2) procedure.}
  \label{var5_var2test}
\end{subfigure}
\begin{subfigure}{.33\textwidth}
  \centering
  \includegraphics[width=1.0\linewidth]{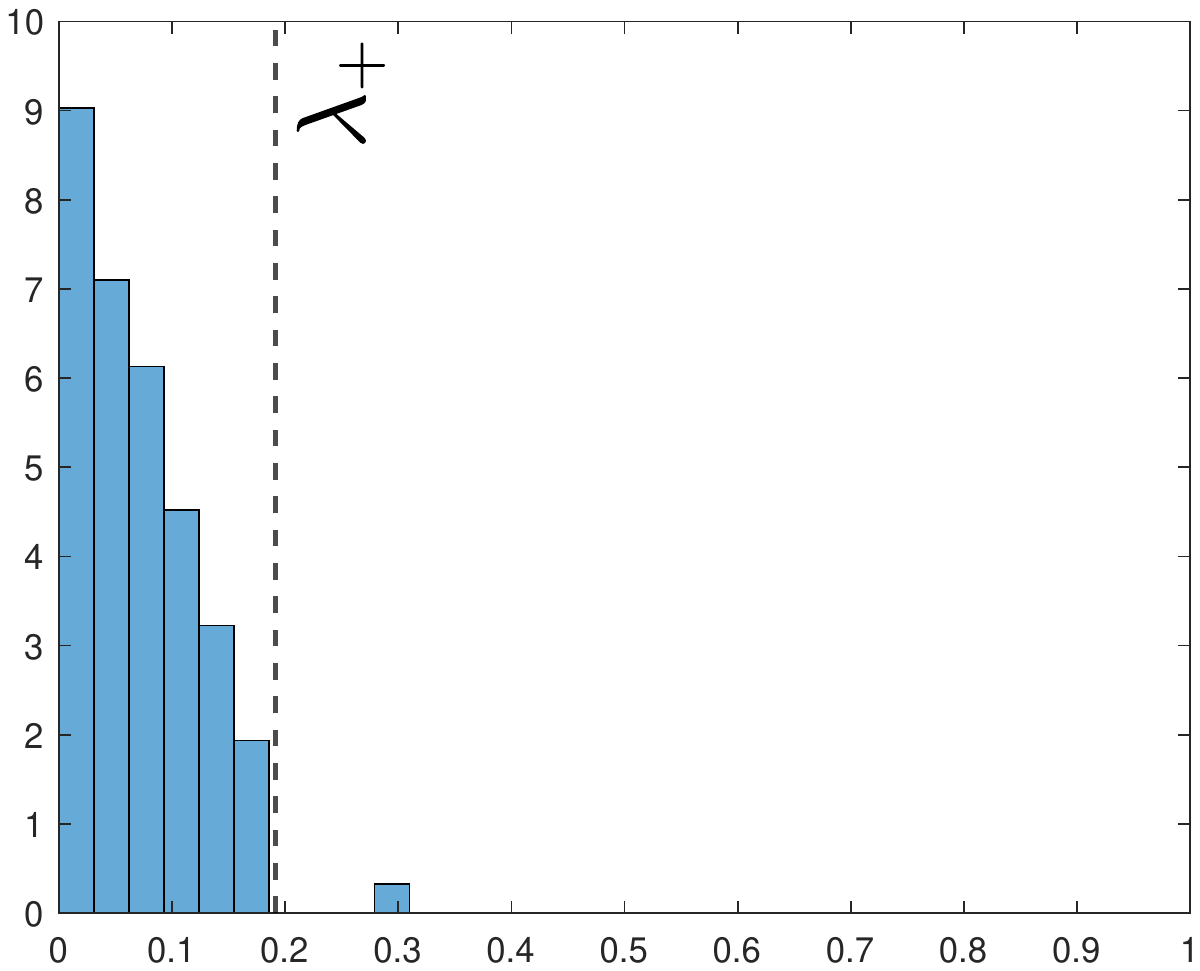}
  \caption{VAR(3) procedure.}
  \label{var5_var3test}
\end{subfigure}
\begin{subfigure}{.33\textwidth}
  \centering
  \includegraphics[width=1.0\linewidth]{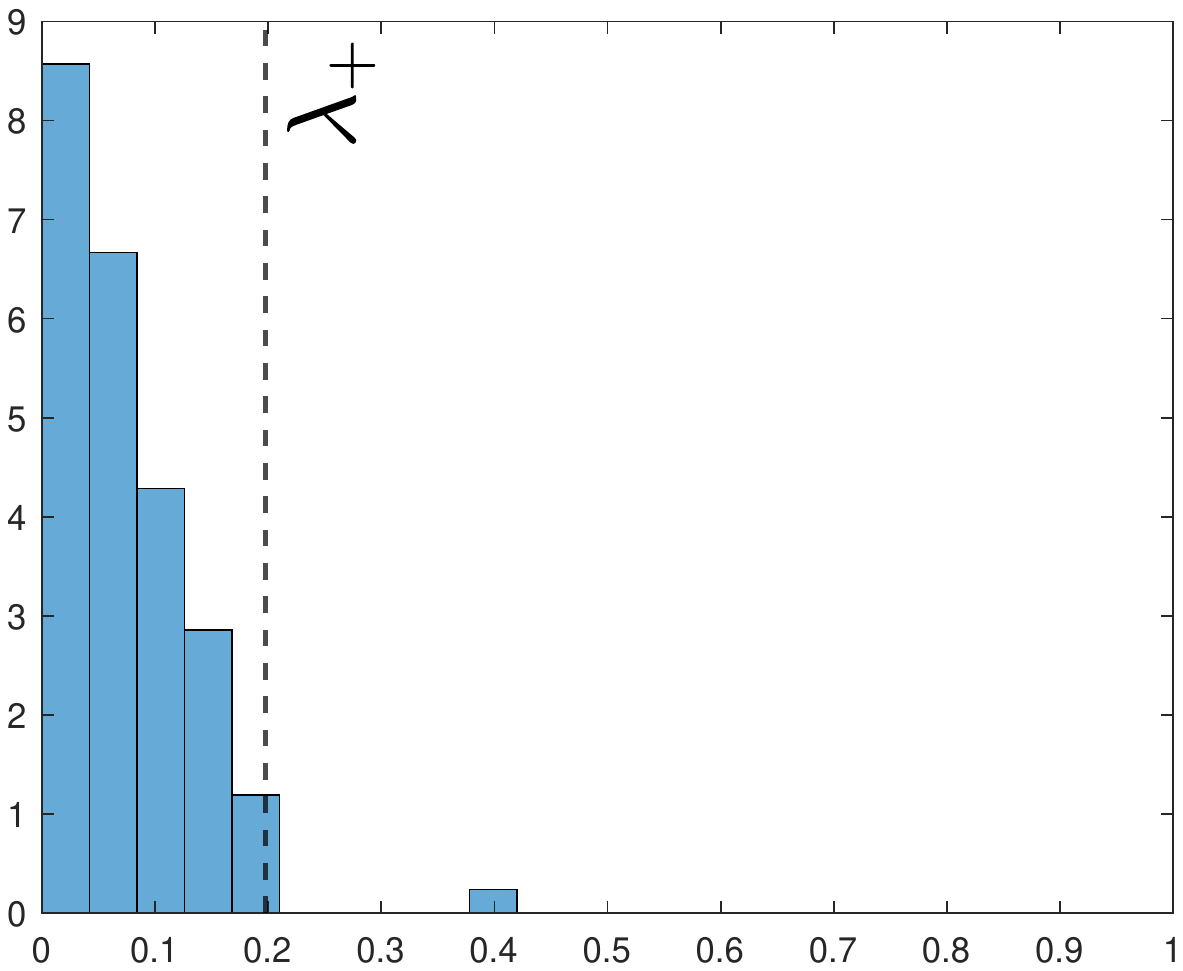}
  \caption{VAR(4) procedure.}
  \label{var5_var4test}
\end{subfigure}%
\begin{subfigure}{.33\textwidth}
  \centering
  \includegraphics[width=1.0\linewidth]{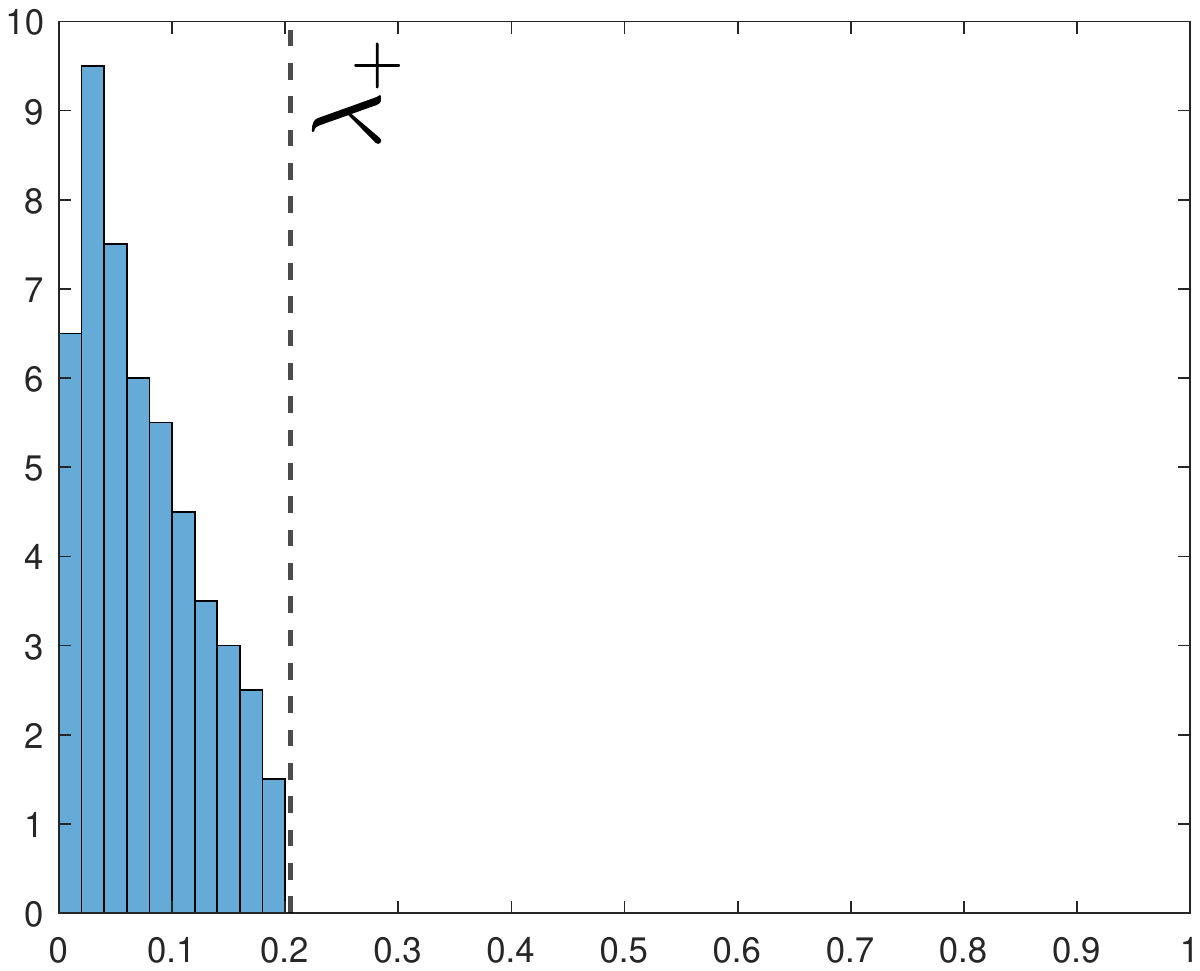}
  \caption{VAR(5) procedure.}
  \label{var5_var5test}
\end{subfigure}
\begin{subfigure}{.33\textwidth}
  \centering
  \includegraphics[width=1.0\linewidth]{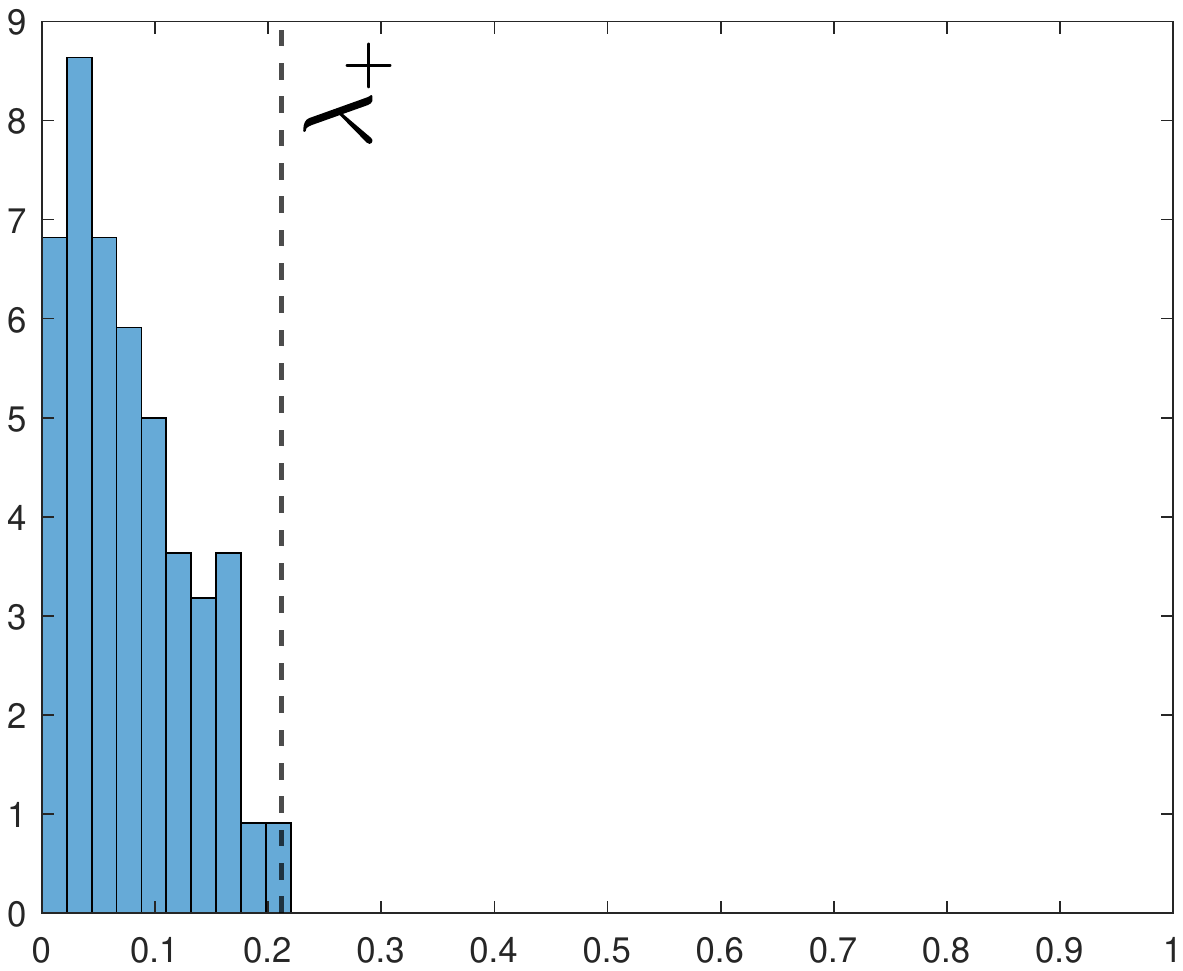}
  \caption{VAR(6) procedure.}
  \label{var5_var6test}
\end{subfigure}
\caption{Eigenvalues obtained from various procedures. The correct procedure corresponds to VAR(5), $k=5$.
Data generating process: \mbox{$\Delta X_{t}=0.9E_{11}\Delta X_{t-4}+\eps_{t}$,} $\eps_{it}\thicksim$ i.i.d.~$\mathcal{N}(0,1)$, $T=3000$, $N=100$.}
\label{var5_var_k_test}
\end{figure}

It is essential for the experiments in the last paragraph that the data generating process is VAR(2) and that the procedure we use also corresponds to VAR(2), i.e.,~$k=2$ in the notations of Sections \ref{Section_setting} and \ref{Section_modified_test}. As illustrated in Figure \ref{rk_VAR_matters}, using a larger $k$ would lead to similar results, while incorrectly using a VAR(1) procedure when the data generating process is VAR(2) would imply wrong centering and scaling. Moreover, one can spot in Figure \ref{var2_var_k_test} that underestimation of the order of the VAR can be misinterpreted as a presence of cointegration\footnote{Related simulations for $\Gamma_1=\theta E_{11}$ are also reported in \cite[Section 7.3]{BG}: For small $\theta$, such as $\theta=0.5$, the VAR(1) procedure still performs well. However, as $\theta$ grows to $1$, the performance quickly deteriorates ($\theta=0.95$ in Figure \ref{rk_VAR_matters}).} (largest eigenvalue separated from the rest leading to the large value of the test statistic). However, as we increase the order, the largest eigenvalue becomes inseparable from the rest, and no sign of false cointegration remains present. Thus, practitioners are encouraged to experiment with the order of the VAR to make sure that they are detecting cointegration and not simply using the  wrong model.

Note that one should be careful if using classical information criteria for estimating the order $k$ of a VAR in our situation. They are known to be unreliable in high-dimensional settings and may underestimate $k$ (see, e.g., the simulations in \citet{gonzalo_pitarakis2002}). A possible approach to choosing $k$ is to look sequentially at histograms of eigenvalues at $k=1,2,\dots$. If the outlier eigenvalues larger than $\lambda_+$ exist for the procedures with all $k$ and perhaps move closer to $\lambda_+$ as $k$ grows (corresponding to a decrease in the power of the test), then this is a strong indication of the presence of cointegration. On the other hand, if there is a sharp transition---i.e., outlier eigenvalues are present when we use the VAR($k'$) procedure for $k'<k$ and abruptly disappear at $k=k'$---then this is an indication that the true model is VAR($k$) without cointegration (see Figure \ref{var5_var_k_test}).

All of the above reinforces the importance of using the VAR($k$) rather than the VAR($1$) procedure.

%In MC simulations:
%LR 0.95E11: mean=-0.9063 std=1.2435 med=-0.9851
%LR 0.95E12: mean=-1.0470 std=1.2389 med=-1.1156
%LR 0.95*(E_{.1}+E12): mean=-1.0305 std=1.2407 med=-1.0998
%Airy1: mean=-1.3005 std=1.2677 med=-1.3557

\subsection{Small ranks}\label{section_mc_small_rank}
To illustrate the importance of small ranks (e.g., \eqref{eq_rank_restriction} in Theorem \ref{Theorem_empirical}), we also redo the same procedure for a matrix $\Gamma_1$ of full rank and set $\Gamma_1$ to be $0.95\1_N$, where $\1_N$ is an $N\times N$ identity matrix. The result is shown in Figure \ref{095I}. While the shape and the scale (corresponding to $N^{2/3}$ rescaling in Theorem \ref{Theorem_J_stat}) of the distribution remain similar, the location changes. Thus, the small-rank restriction of Eq.\ \eqref{eq_rank_restriction} is important not only in the context of Theorem \ref{Theorem_empirical} but also for correct centering in possible generalizations of Theorem \ref{Theorem_J_stat}.

\begin{figure}[t]
	\centering
	{\scalebox{0.7}{\includegraphics{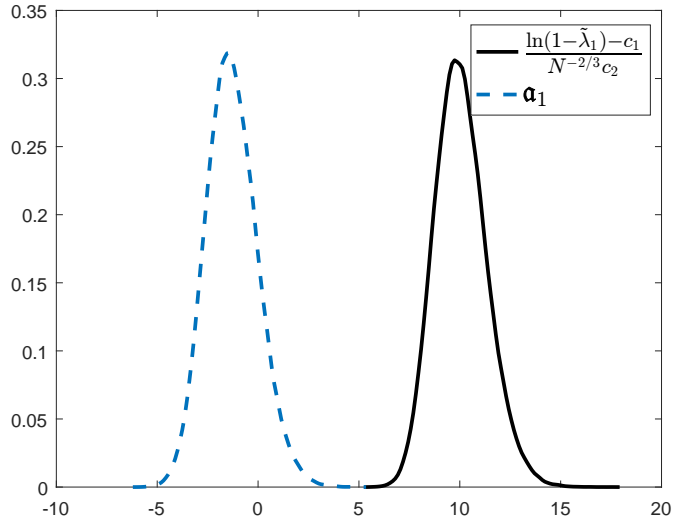}}}
	\caption{Airy$_1$ and asymptotic distribution of the rescaled $\ln(1-\tilde{\lambda}_1)$ under $H_0$ with $\Gamma_1$ of full rank.
Data generating process: $\Delta X_{it}=0.95\Delta X_{it-1}+\eps_{it}$, $\eps_{it}\thicksim$ i.i.d.~$\mathcal{N}(0,1)$, $T=500$, $N=100$, $MC=100,000$ replications. }
	\label{095I}
\end{figure}

\begin{figure}[h!]
	\centering
	{\scalebox{0.7}{\includegraphics{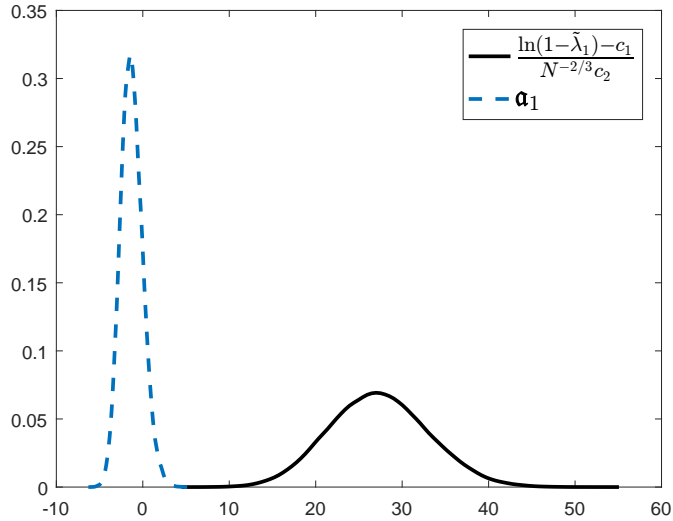}}}
	\caption{Airy$_1$ and asymptotic distribution of the rescaled $\ln(1-\tilde{\lambda}_1)$ under $H_1$.
Data generating process: $\Delta X_{t}=-0.95E_{11}X_{t-2}+\eps_{t}$, $\eps_{it}\thicksim$ i.i.d.~$\mathcal{N}(0,1)$, $T=500$, $N=100$, $MC=100,000$ replications. }
	\label{Pi-095E11}
\end{figure}

\subsection{Power} \label{section_power}
Finally, we simulate the process based on $\Pi\neq0$ to assess the power of our cointegration testing procedure. We refer the reader to \citet[Section 5.2]{BG} for many simulations in the $k=1$ case and do not repeat similar experiments here.

For the first experiment, we use $k=2$, $\Gamma_1=0$, and $\Pi=-0.95E_{11}$. Figure \ref{Pi-095E11} shows the results of the simulation. Two curves are separated; moreover, the black straight line (test distribution) is flatter than the blue dashed curve ($\aa_1$). First, the separation of the curves is in line with the usefulness of Theorem \ref{Theorem_J_stat} in cointegration hypothesis testing, since the test statistic was designed to distinguish between $\Pi=0$ and $\Pi\neq 0$. Second, the distinct variances are due to the fact that (as we expect from a comparison with results on spiked random matrices in the literature; see, e.g., \citet{BBP}) under the alternative ($\Pi\neq0$) the test needs to be scaled differently: Instead of $N^{2/3}$ rescaling one should use $N^{1/2}$. This result is in line with the power analysis of our test in Section \ref{appendix_power}.

\begin{figure}[t]
\begin{subfigure}{.33\textwidth}
  \centering
  \includegraphics[width=1.0\linewidth]{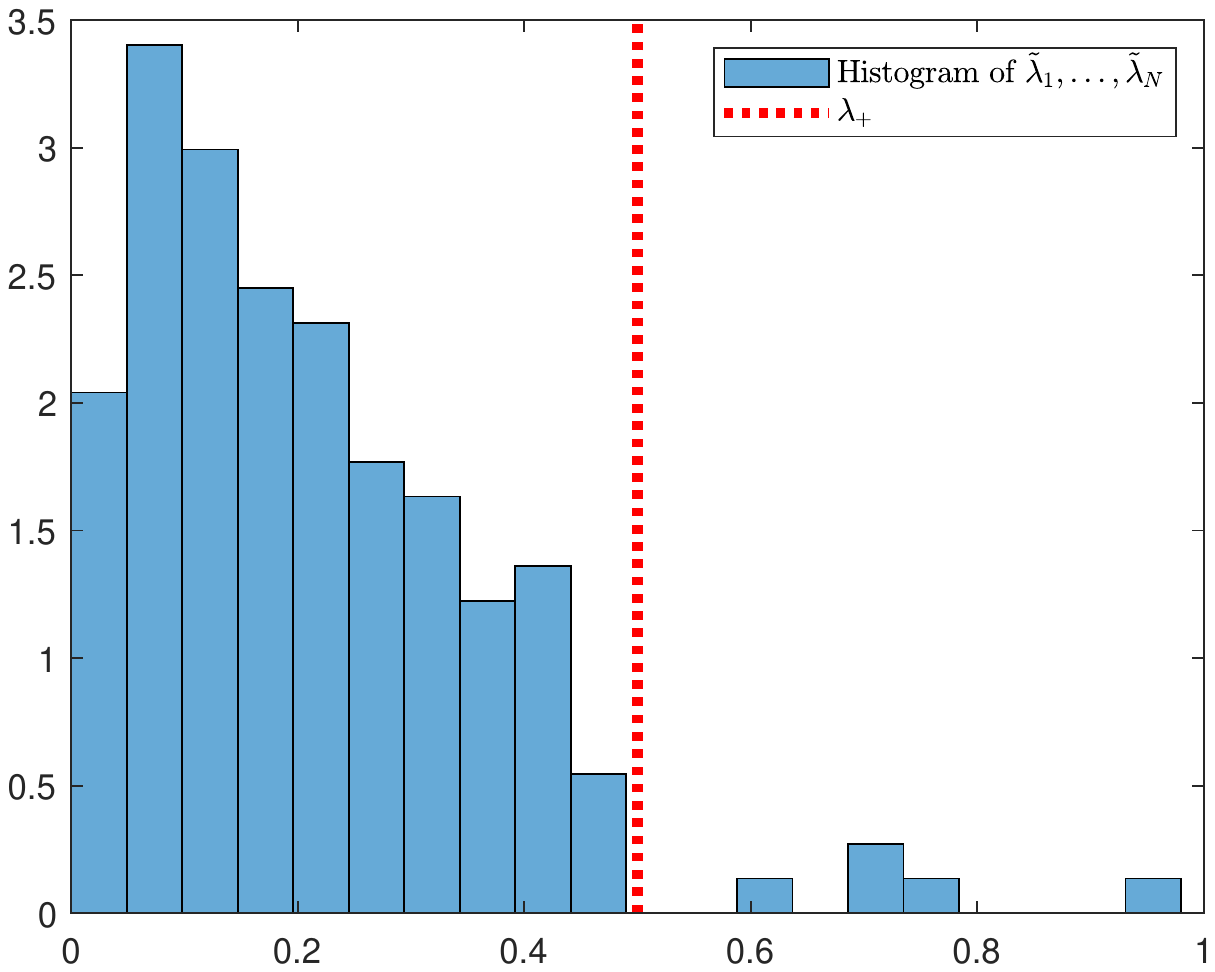}
  \caption{$\rho=\rank(\Pi)=5$.}
  \label{rk_Pi_5}
\end{subfigure}%
\begin{subfigure}{.33\textwidth}
  \centering
  \includegraphics[width=1.0\linewidth]{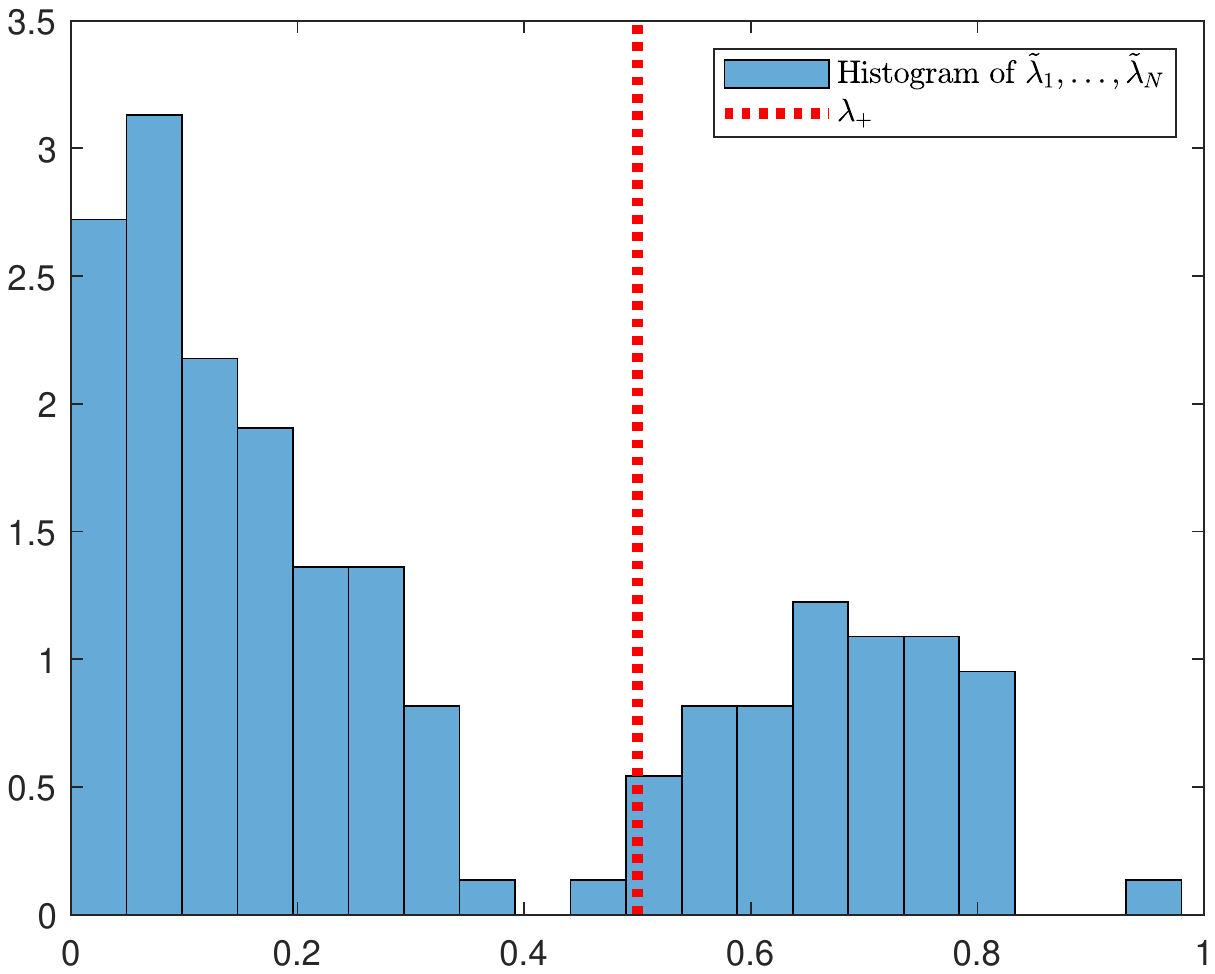}
  \caption{$\rho=\rank(\Pi)=50$.}
  \label{rk_Pi_50}
\end{subfigure}
\begin{subfigure}{.33\textwidth}
  \centering
  \includegraphics[width=1.0\linewidth]{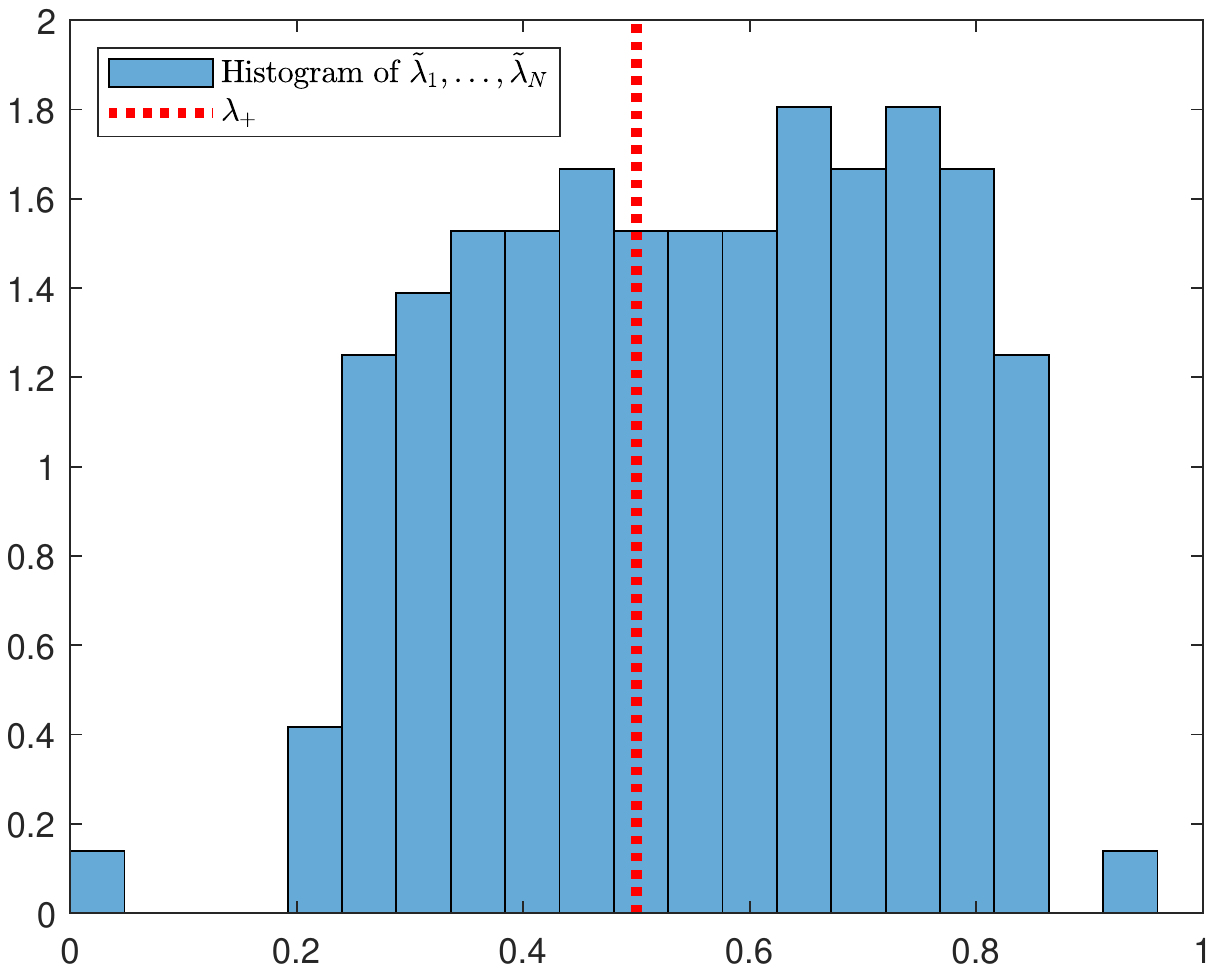}
  \caption{$\rho=\rank(\Pi)=150$.}
  \label{rk_Pi_150}
\end{subfigure}
\caption{Separation of eigenvalues from $\lambda_+$ under various ranks of $\Pi$.
Data generating process: $\Delta X_{t}=1_N+0.95E_{12}\Delta X_{t-1}-0.8I_\rho X_{t-2}+\eps_{t}$, $\eps_{it}\thicksim$ i.i.d.~$\mathcal{N}(0,1)$, $T=1500$, $N=150$, $I_\rho$ is a matrix with ones on the first $\rho$ diagonal elements and zeros elsewhere.}
\label{rk_Pi_separation}
\end{figure}

For the second experiment, we use $k=2$, $N=150$, $T=1500$, $\Gamma_1=0.95E_{12}$, and $\Pi=-0.8I_\rho X_{t-2}$, where $I_\rho$ is a matrix with ones on the first $\rho$ diagonal elements and zeros elsewhere. The histograms of eigenvalues $\tilde{\lambda}_1,\ldots,\tilde{\lambda}_N$ for $\rho=5,50, 150$ are shown in Figure \ref{rk_Pi_separation}. As $\rho$ becomes large, we are no longer in the framework of Theorem \ref{Theorem_empirical}, and the result about the convergence to the Wachter distribution does not apply. Nevertheless, we observe that the largest eigenvalues are significantly larger than $\lambda_+$ of Theorem \ref{Theorem_J_stat}. Recall that our testing procedure is based on comparing the largest eigenvalues\footnote{More precisely, logarithms of 1 minus eigenvalues vs.~$\log(1-\lambda_+)$.} with $\lambda_+$. Thus, this leads to the conclusion that our test is useful for all values of $\rho\in [1,N]$: the test  rejects $H_0$ of no cointegration (i.e., the $\Pi=0$ hypothesis) at a very high statistical significance level.

\section{Empirical illustrations}\label{Section_empirical}

\subsection{S$\&$P$\mathbf{100}$}
\label{Section_SP}

\begin{figure}[t]
\begin{subfigure}{.44\textwidth}
  \centering
  \includegraphics[width=1.0\linewidth]{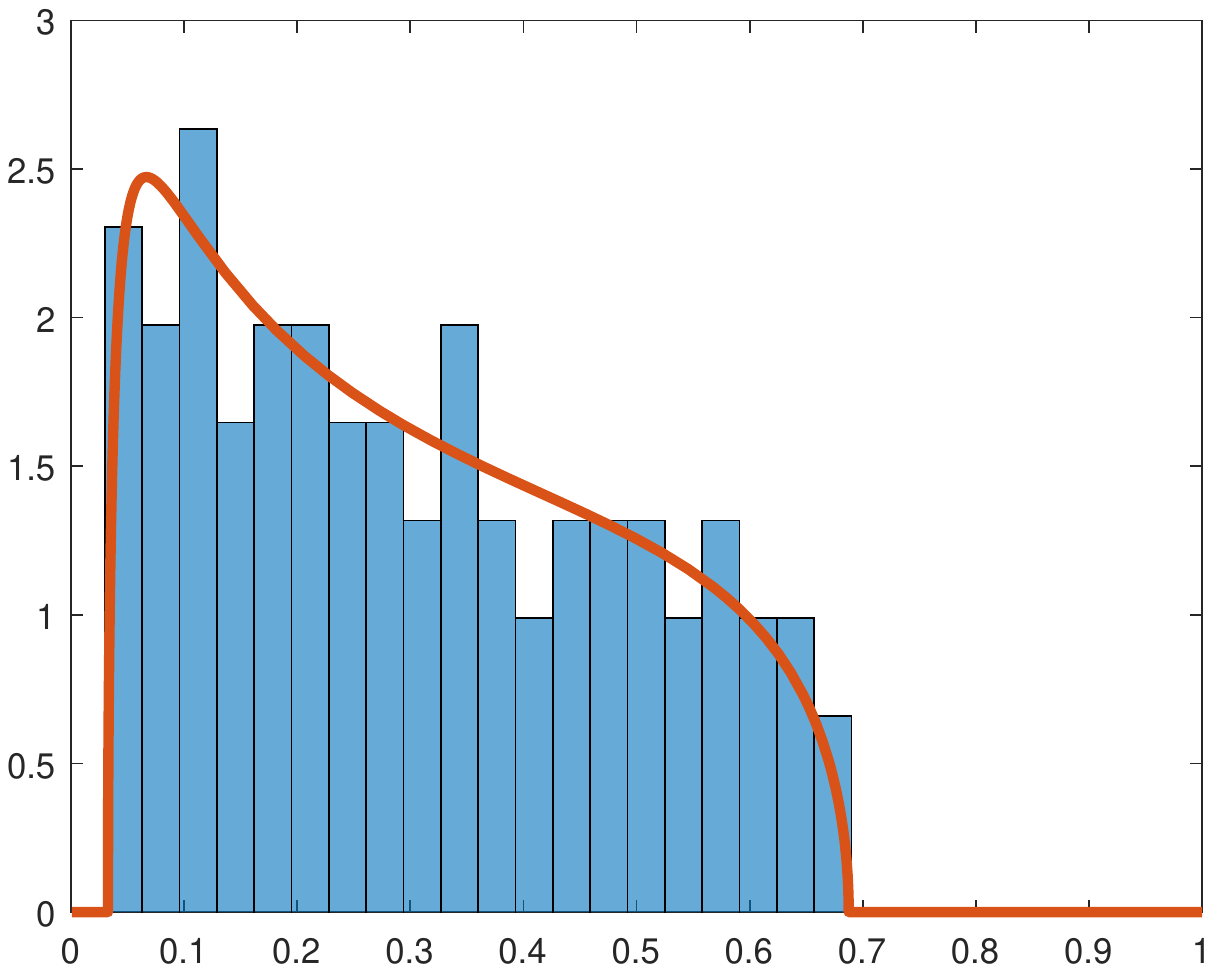}
  \caption{VAR($1$).}
  \label{SPvar1}
\end{subfigure}%
\begin{subfigure}{.44\textwidth}
  \centering
  \includegraphics[width=1.0\linewidth]{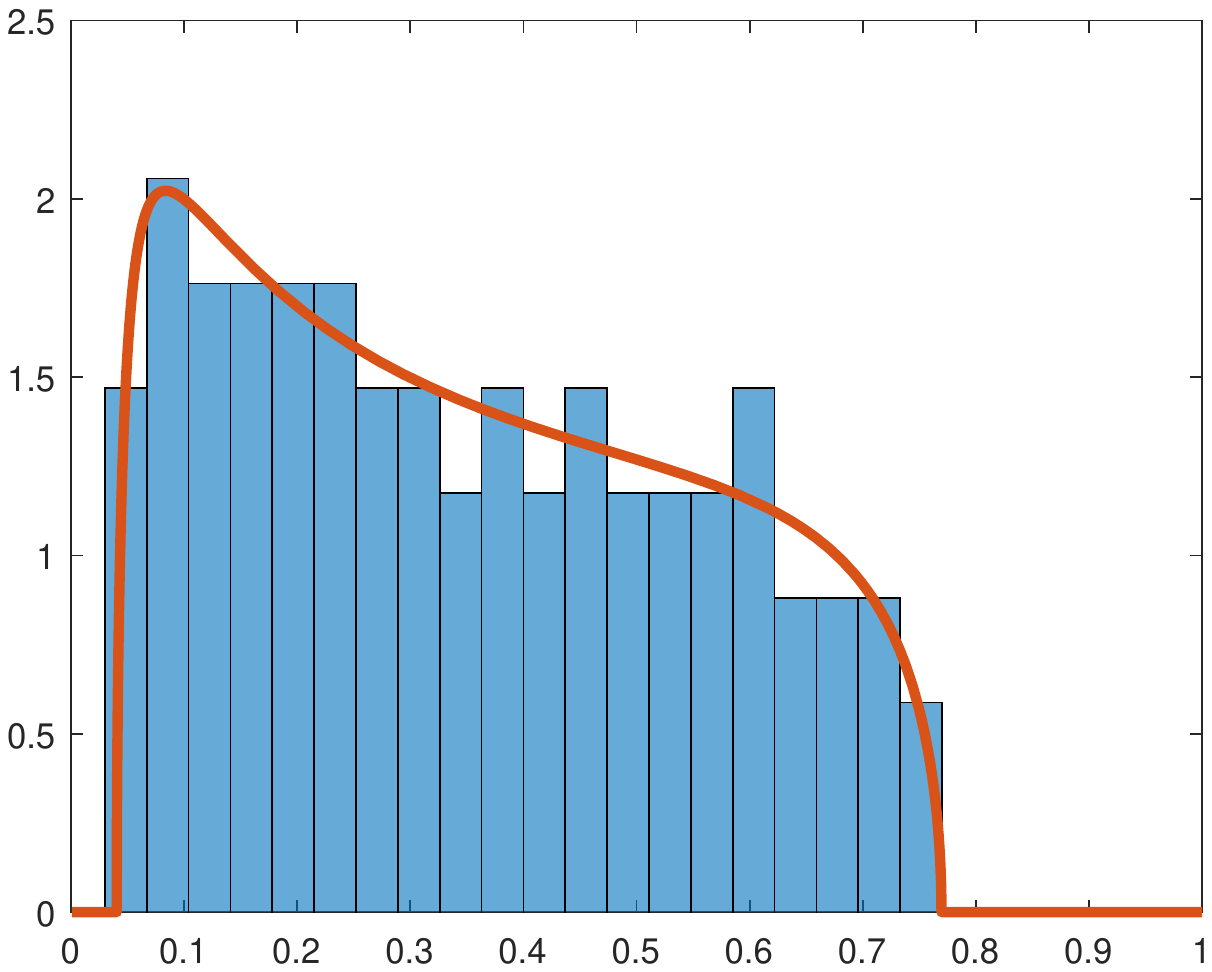}
  \caption{VAR($2$).}
  \label{SPvar2}
\end{subfigure}

\begin{subfigure}{.44\textwidth}
  \centering
  \includegraphics[width=1.0\linewidth]{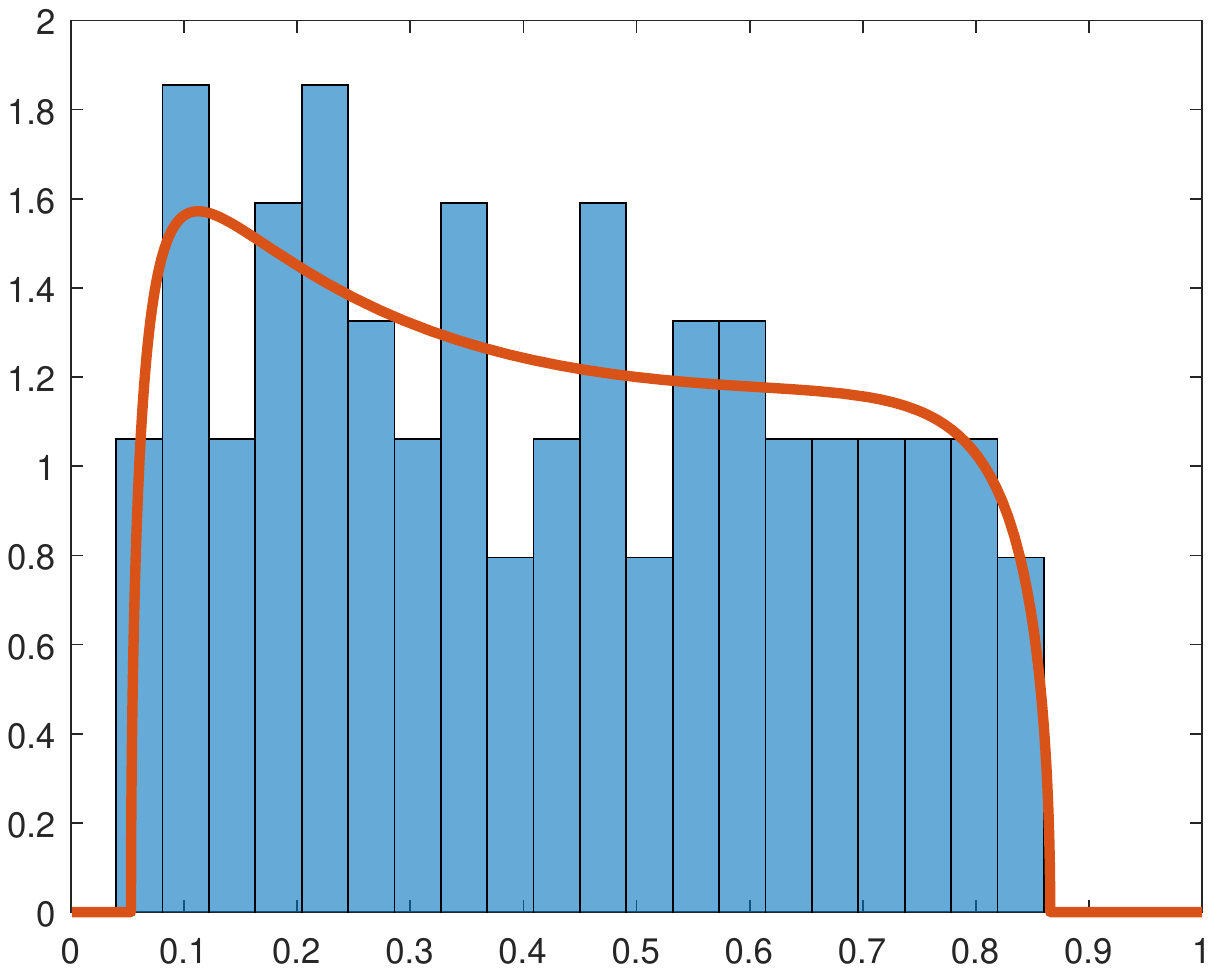}
  \caption{VAR($3$).}
  \label{SPvar3}
\end{subfigure}
\begin{subfigure}{.44\textwidth}
  \centering
  \includegraphics[width=1.0\linewidth]{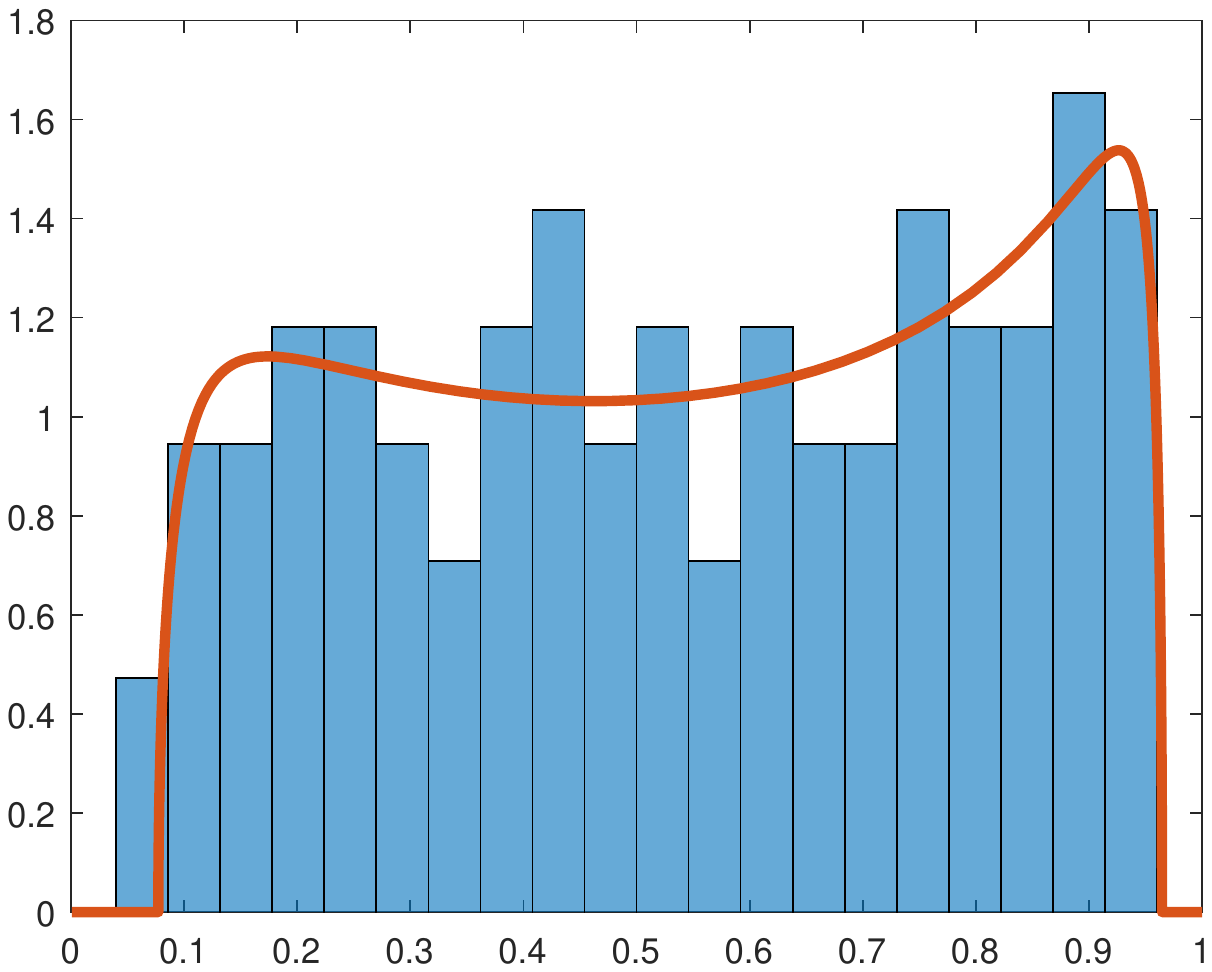}
  \caption{VAR($4$).}
  \label{SPvar4}
\end{subfigure}
\caption{Eigenvalues from S$\&$P data (blue histogram) and Wachter distribution (orange line) based on various VAR($k$) settings.}
\label{SP100pic}
\end{figure}

We illustrate our asymptotic theorems on the S$\&$P100 data. We use logarithms of weekly prices of assets in the S$\&$P100 over ten years (January 1, 2010, to January 1, 2020), which gives us $522$ observations across time. More detailed  description of the variables can be found in \citet[Section 6]{BG}.

For the S$\&$P100 data set we use Procedure \ref{sscc_BG} to calculate $\tilde{\lambda}_1,\ldots,\tilde{\lambda}_N$. We do this for various choices of $k$. The case $k=1$ (VAR(1)) corresponds to \citet{BG}. The results are shown in Figure \ref{SP100pic}.

We see a striking match between the histograms and Wachter densities for all $k=1,2,3,4$, which is an indication that the setting of Theorem \ref{Theorem_empirical} is a proper modeling for the S$\&$P data. We do not see any outliers in the largest eigenvalues, which would appear if there were cointegration. Indeed, our test statistics based on Theorem \ref{Theorem_J_stat} are $-0.28,\,-0.71,\,-1.07,\,-3.84$ for $k=1,2,3,4$, respectively, while the $5\%$ and $10\%$ critical values are $0.97$ and $0.44$.
Because the former numbers are smaller than the latter numbers, we do not reject the ``no cointegration'' hypothesis.
%The corresponding $p$-values are $0.22,\,0.33,\,0.44,\,0.99$, respectively.
%more precise stat =-0.2777,-0.7052,-1.0707,-3.8400
%corresponding percentiles =78%,67%,56%,0.01%

\subsection{Cryptocurrencies}

In this subsection we redo the calculations for cryptocurrencies instead of S$\&$P stocks. We use the data from \citet{crypto_coint_paper} (25 series from the Github repository). Logarithms of daily prices for two years (from October $5$, $2017$, to October $4$, $2019$) are shown in Figure \ref{crypto_lprice_pic}. The results of Procedure \ref{sscc_BG} used to calculate $\tilde{\lambda}_1,\ldots,\tilde{\lambda}_N$ are shown in Figure \ref{crypto_pic}.

Similarly to the S$\&$P example in the previous subsection, we see a match between the eigenvalues and the Wachter distribution.\footnote{The Wachter distribution depends on the order of the VAR, $k$, and on the ratio $T/N$. Thus, the orange curves in Figures \ref{SP100pic} and \ref{crypto_pic} have different shapes and supports.} However, there is a major difference between Figures \ref{SP100pic} and \ref{crypto_pic}: The latter has around 3 eigenvalues to the right of the support of the orange curve (Wachter distribution). This is an indication of the presence of approximately $3$ cointegrating relationships. This is reinforced by our test, which has p-values below $0.01$ for all four choices of the order of VAR($k$) ($k=1,2,3,4$).

\begin{figure}[t!]
	\centering
	{\scalebox{0.66}{\includegraphics{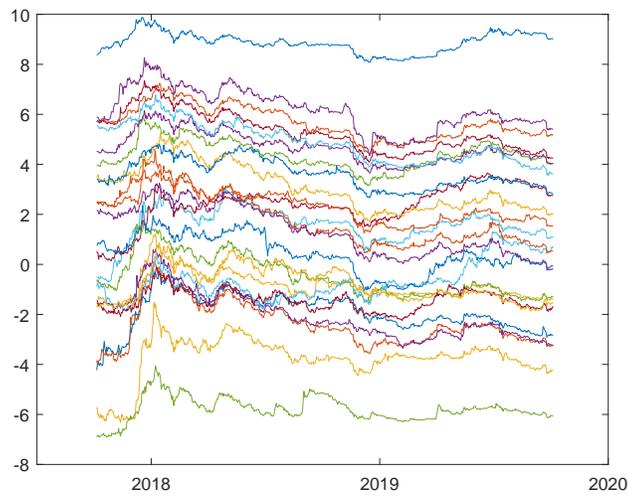}}}
	\caption{Time series of daily log prices for 25 cryptocurrencies.}
	\label{crypto_lprice_pic}
\end{figure}

\begin{figure}[t!]
\begin{subfigure}{.44\textwidth}
  \centering
  \includegraphics[width=1.0\linewidth]{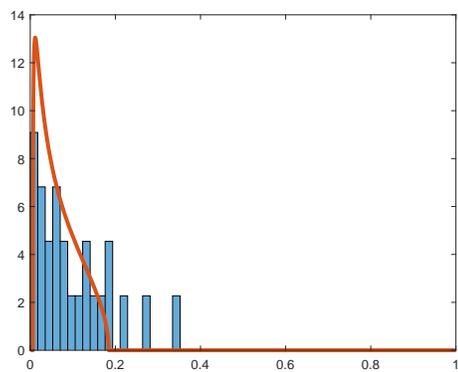}
  \caption{VAR($1$).}
  \label{cryptovar1}
\end{subfigure}%
\begin{subfigure}{.44\textwidth}
  \centering
  \includegraphics[width=1.0\linewidth]{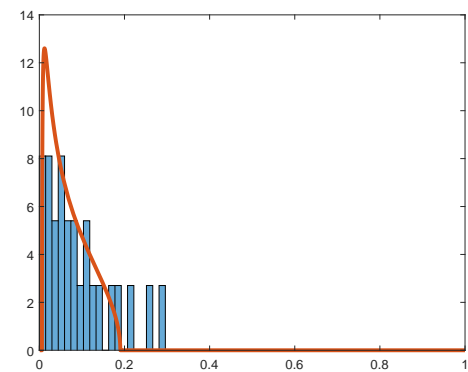}
  \caption{VAR($2$).}
  \label{cryptovar2}
\end{subfigure}

\begin{subfigure}{.44\textwidth}
  \centering
  \includegraphics[width=1.0\linewidth]{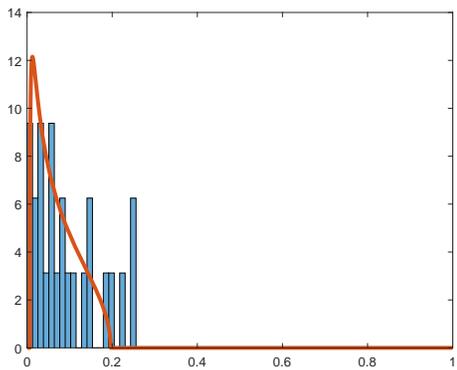}
  \caption{VAR($3$).}
  \label{cryptovar3}
\end{subfigure}
\begin{subfigure}{.44\textwidth}
  \centering
  \includegraphics[width=1.0\linewidth]{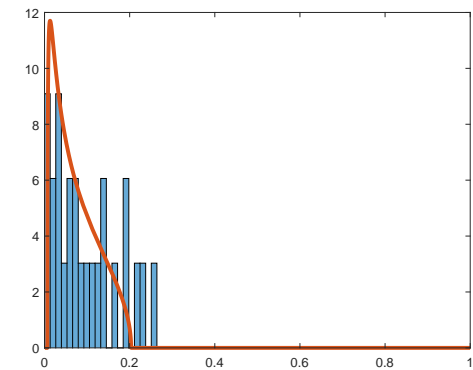}
  \caption{VAR($4$).}
  \label{cryptovar4}
\end{subfigure}
\caption{Eigenvalues from cryptocurrency data (blue histogram) and Wachter distribution (orange line) based on various VAR($k$) settings.}
\label{crypto_pic}
\end{figure}

The difference in results for traditional stocks and cryptocurrencies can be explained by the fact that the cryptocurrency market is still very inefficient and, thus, has numerous trading possibilities. The presence of cointegration can be one such inefficiency.

\section{Conclusion}
\label{Section_conclusion}

High-dimensional data are becoming increasingly widespread in economics and other sciences. Thus, appropriate machinery for handling such data is needed. We believe that the use of random matrix theory is inevitable for the development of the area: As soon as dimensions are high, random matrices start to contribute. Along these lines, in our paper the central role is played by random matrix objects: the Wachter distribution, Airy$_1$ point process, and Jacobi ensemble.

The present paper focused on nonstationary high-dimensional VARs and presented the asymptotic limit of the Johansen LR test for cointegration and its modifications. Because the limit is nonrandom, the appropriate second-order statistic was derived, and a new test for the presence of cointegration was proposed. The new test builds upon the Johansen LR, while having some extra modifications. This new test is suitable for a vector autoregression of order $k$ with an intercept.

The main focus of the present paper is the null of no cointegration. The next essential step is to be able to test whether the cointegration rank is $r$ for $r>0$, i.e.,~to find the true rank of cointegration. Heuristics for finding the correct value of $r$ can already be seen in our simulations and data sets (cf.\ Figures \ref{small_rk_pic}, \ref{SP100pic}, and \ref{crypto_pic}): When there are no cointegrations, all eigenvalues (squared canonical correlations) are to the left of the end-point of the support of the Wachter distribution. In contrast, we expect each cointegrating relationship to lead to an eigenvalue between the right end-point of the support of the Wachter distribution and $1$. Identifying the exact conditions under which this heuristic is correct represents an important problem for future research.

\section{Appendix 1: Proofs}

This appendix contains the proofs of Theorems \ref{Theorem_empirical}, \ref{Theorem_vark_approximation}, and \ref{Theorem_J_stat} from the main text.

First, in Section \ref{Section_Jacobi} we collect known statements about the asymptotics of the Jacobi ensemble of Definition \ref{Definition_Jacobi}, which will be used in our subsequent proofs.

Second, in Theorem \ref{Theorem_Var_k} of Section \ref{Section_Jacobi_model} we introduce a novel random matrix model for the Jacobi ensemble. Our proof of Theorem \ref{Theorem_Var_k} proceeds through certain intricate inductive computations of large-dimensional matrix integrals.

Third, in Section \ref{Section_perturbation} we connect the matrix model of Section \ref{Section_Jacobi_model} to the cointegration setting: for that we use the rotational symmetry of the Gaussian law to express the squared sample canonical correlations solving \eqref{eq_vark_eig} under the hypothesis $\widehat H_0$ in terms of a certain deterministic orthogonal matrix. Replacing this deterministic matrix by a uniformly random one, we arrive at the Jacobi ensemble of Theorem \ref{Theorem_Var_k}. We proceed by bounding the error in this replacement, which relies on the rigidity estimate for orthogonal matrices \eqref{eq_MM_bound}, but needs special care due to various matrix inversions involved in our procedures. Eventually, we arrive at Theorem \ref{Theorem_vark_approximation}. This theorem is our main technical result. Combining Theorem \ref{Theorem_vark_approximation} with Proposition \ref{Theorem_Jacobi_as} from Section \ref{Section_Jacobi}, we finish the proof of Theorem \ref{Theorem_J_stat} from the main text.

Finally, in Section \ref{Section_small_rank} we prove Theorem \ref{Theorem_empirical} by combining Theorem \ref{Theorem_vark_approximation} with Proposition \ref{Theorem_Jacobi_as} of Section \ref{Section_Jacobi} and general statements about small rank perturbations.

\subsection{Asymptotic of Jacobi ensemble}
\label{Section_Jacobi}

In this section we review the asymptotic results for the Jacobi ensemble $\J(N; p,q)$ introduced in the Definition \ref{Definition_Jacobi} as $N\to\infty$.

We assume that as $N\to\infty$, also $p,q\to\infty$, in such a way that
\begin{equation}
\label{eq_asymptotic_pars}
\frac{p}{N}=\frac{1}{2} (\p-1), \quad \p\ge 1,\qquad \frac{q}{N}=\frac{1}{2} (\q-1), \quad \q\ge 1,
\end{equation}
where $\p$ and $\q$ are two parameters, which stay bounded away from $1$ and from $\infty$ as $N\to\infty$.\footnote{\citet[Theorem 1]{Johnstone_Jacobi} suggests to use $\frac{p-1}{N}$ and $\frac{q-1}{N}$ instead of $\frac{p}{N}$ and $\frac{q}{N}$, respectively, in order to improve the speed of convergence. However, we found in \cite{BG} that for the tests in VAR($1$) case the usefulness of this correction depends on the exact value of the ratio $T/N$ and we are not going to pursue this direction here.}  We further define the \emph{equilibrium measure} $\mu_{\p,\q}$ of the Jacobi ensemble through:
\begin{equation}
\label{eq_Jacobi_equilibrium}
\mu_{\p,\q}(x)\, d x = \frac{\p+\q}{2\pi} \cdot \frac{\sqrt{(x-\lambda_-)(\lambda_+-x)}}{x (1-x)} \mathbf 1_{[\lambda_-,\lambda_+]}\, \d x,
\end{equation}
where the support $[\lambda_-,\lambda_+]$ of the measure is defined via
\begin{equation}
\lambda_\pm=\frac{1}{(\p+\q)^2}\left(\sqrt{\p(\p+\q-1)}\pm \sqrt{\q}  \right)^2.
\end{equation}
One can check that $0<\lambda_-<\lambda_+<1$ for every $\p,\q>1$.
Further, define
\begin{equation}
c_\pm=\frac{(\p+\q)}{2} \frac{\sqrt{\lambda_+-\lambda_-}}{\lambda_\pm (1-\lambda_\pm)},
\end{equation}
and note that
$$
\mu_{\p,\q}(x-\lambda_{\pm})\approx \frac{c_\pm}{\pi} \sqrt{|x-\lambda_{\pm}|}, \text{ as } x\to\lambda_{\pm}\quad \text{ inside }\quad  [\lambda_-,\lambda_+],
$$
where the normalization $\frac{1}{\pi} \sqrt{|x-\lambda_{\pm}|}$ was chosen to match the behavior of the Wigner semicircle law $\frac{1}{2\pi} \sqrt{4-x^2}$ near edges $\pm 2$.
\begin{proposition}[See \citet{Johnstone_Jacobi}, \cite{forrest}, and \cite{HanPanZhang_2016}]
	\label{Theorem_Jacobi_as}
	Suppose that $N,p,q\to\infty$ in such a way that $\p\ge 1 $ and $\q\ge 1 $ in \eqref{eq_asymptotic_pars} stay bounded.  For the second conclusions we additionally assume that $\q$ is bounded away from $1$ and for the third conclusion we additionally require $\p$ to be bounded away from $1$. Let $x_1\ge x_2\ge \dots\ge x_N$ be $N$ random eigenvalues of Jacobi ensemble $\J(N;p,q)$. Then
	\begin{enumerate}
		\item $\displaystyle \lim_{N\to\infty} \left|\frac{1}{N} \sum_{i=1}^N \delta_{x_i}- \mu_{\p,\q} \right|=0,$ weakly in probability.
		
		This means that for any continuous function $f(x)$ we have convergence in probability:
		\begin{equation}
		\label{eq_Jacobi_LLN}
		\lim_{N\to\infty} \left|\frac{1}{N}\sum_{i=1}^N f(x_i)-\int_{0}^1 f(x)\mu_{\p,\q}(x)\d x\right|=0.
		\end{equation}
		\item For $\{\aa_i\}_{i=1}^{\infty}$ as in Proposition \ref{Proposition_Airy_Gauss}, we have convergence in finite-dimensional distributions for the largest eigenvalues:
		\begin{equation}
		\label{eq_Jacobi_up_edge}
		\lim_{N\to\infty} \left\{ N^{2/3} c_+^{2/3} \left(x_i- \lambda_+\right)  \right\}_{i=1}^{\infty}\to \{\aa_i\}_{i=1}^{\infty}.
		\end{equation}
		In particular, $N^{2/3} c_+^{2/3} \left(x_1- \lambda_+\right)$ converges to the Tracy-Widom distribution $F_1$.
		\item We also have convergence in distribution for the smallest eigenvalues\footnote{The limiting processes $\{\aa_i\}_{i=1}^{\infty}$ arising for the largest and smallest eigenvalues are independent.}
		\begin{equation}
		\label{eq_Jacobi_low_edge}
		\lim_{N\to\infty} \left\{ N^{2/3} c_-^{2/3} \left(\lambda_--x_{N+1-i}\right)  \right\}_{i=1}^{\infty}\to \{\aa_i\}_{i=1}^{\infty}.
		\end{equation}
	\end{enumerate}
\end{proposition}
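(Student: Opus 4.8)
This proposition collects three classical-type facts about the real ($\beta=1$) Jacobi ensemble, and I would establish each by a standard route. The common starting point is the joint law of the eigenvalues: diagonalizing $\mathcal M$ and integrating out the orthogonal part in \eqref{eq_Jacobi_def}, the ordered eigenvalues $1>x_1>\dots>x_N>0$ have density proportional to
\begin{equation}
\prod_{1\le i<j\le N}|x_i-x_j|\cdot\prod_{i=1}^N x_i^{\,p-1}(1-x_i)^{\,q-1}\mathbf 1_{\{0<x_i<1\}}.
\end{equation}

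For part (1) I would run the usual Coulomb-gas / large-deviations argument. Writing $\mu_N=\frac1N\sum_i\delta_{x_i}$ and substituting \eqref{eq_asymptotic_pars}, the density above equals $\exp\!\big(-\tfrac{N^2}{2}\mathcal E(\mu_N)+o(N^2)\big)$ with the weighted logarithmic energy
\begin{equation}
\mathcal E(\mu)=-\iint\log|x-y|\,d\mu(x)\,d\mu(y)-\int\big((\p-1)\log x+(\q-1)\log(1-x)\big)\,d\mu(x).
\end{equation}
Since $[0,1]$ is compact and the external field diverges to $-\infty$ at both endpoints (pushing eigenvalues inside), $\mathcal E$ has a unique minimizer over probability measures, and the standard LDP for such ensembles gives exponential concentration of $\mu_N$ at that minimizer; hence $\mu_N$ converges weakly in probability to $\mu_{\p,\q}$ once the minimizer is identified with \eqref{eq_Jacobi_equilibrium}. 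That identification comes from the Euler--Lagrange equation, which upon differentiation reads $2\,\mathrm{p.v.}\!\int\frac{d\mu(y)}{x-y}=-\frac{\p-1}{x}+\frac{\q-1}{1-x}$ on the support; passing to the Stieltjes transform $G(z)=\int\frac{d\mu(y)}{z-y}$ turns this into a quadratic equation for $G$ whose solution has a single square-root cut on $[\lambda_-,\lambda_+]$, producing exactly the density \eqref{eq_Jacobi_equilibrium}. (Alternatively, realize $\J(N;p,q)$ as the eigenvalues of $(A+B)^{-1}A$ for independent real Wishart matrices $A,B$ of the appropriate shapes and quote the classical limiting-spectrum theorem for MANOVA matrices.)

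For parts (2) and (3) I would first note the reflection $x\mapsto 1-x$, which sends $\J(N;p,q)$ to $\J(N;q,p)$ and interchanges $\lambda_+\leftrightarrow\lambda_-$ and $c_+\leftrightarrow c_-$; thus (3) is (2) with $\p\leftrightarrow\q$, and it suffices to treat the upper edge $\lambda_+$. The hypothesis that $\q$ (resp.\ $\p$) stays bounded away from $1$ is used precisely to keep $\lambda_+<1$ (resp.\ $\lambda_->0$), so that the spectral edge does not collide with the hard wall at $1$ (resp.\ $0$) and is a genuine soft (Airy) edge. The plan is then to use the exact correlation functions of the $\beta=1$ Jacobi ensemble --- a Pfaffian point process with a $2\times2$ matrix kernel assembled from Jacobi polynomials $P_n^{(q-1,\,p-1)}$ and their Christoffel--Darboux partners --- and to carry out a steepest-descent (Riemann--Hilbert) asymptotic analysis of these polynomials in the double-scaling regime $n,p,q\to\infty$ with $p/n,q/n$ fixed by \eqref{eq_asymptotic_pars}. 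Near $\lambda_+$ the local parametrix is built from Airy functions, the rescaled kernel converges to the $\beta=1$ Airy kernel, and the correct spatial scaling is $N^{2/3}c_+^{2/3}$; this constant is pinned down by matching the square-root vanishing $\mu_{\p,\q}(x)\sim\frac{c_\pm}{\pi}\sqrt{|x-\lambda_\pm|}$ of the equilibrium density to the edge behavior of the semicircle law at $\pm2$, so that the limit is literally the sequence $\{\aa_i\}$ of Proposition \ref{Proposition_Airy_Gauss}. Convergence of the whole edge point process then gives \eqref{eq_Jacobi_up_edge} (and, by reflection, \eqref{eq_Jacobi_low_edge}) in finite-dimensional distributions, with the $i=1$ marginal being $F_1$. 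Since all of this is carried out in \citet{Johnstone_Jacobi} (largest eigenvalue and its $F_1$ limit), \citet{forrest} (full point-process description), and \citet{HanPanZhang_2016} (general ratio $\p,\q$), in practice I would quote them after checking that our parameter scaling and edge constants $\lambda_\pm,c_\pm$ match their conventions.

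\textbf{The main obstacle} is the edge asymptotics of Jacobi polynomials in this double-scaling regime, where both parameters grow linearly with the degree: fixed-parameter Plancherel--Rotach formulas do not apply, and one must run a Riemann--Hilbert analysis with an $N$-dependent weight, controlling the $g$-function and the location of the soft edge $\lambda_+$, which drifts toward the hard edge $1$ as $\q\downarrow1$ (the reason for the extra hypothesis). The $\beta=1$ Pfaffian structure adds a further layer of bookkeeping relative to the determinantal $\beta=2$ case. Once the Airy parametrix is in place, identifying the limit with $\sum_i\aa_i$ and reading off $c_\pm$ is a routine matching computation.
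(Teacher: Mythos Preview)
Your proposal is appropriate, but note that the paper does not actually prove this proposition: it is stated purely as a citation of known results (the section header explicitly says ``we review the asymptotic results for the Jacobi ensemble'' and the proposition is labeled ``See \citet{Johnstone_Jacobi}, \cite{forrest}, and \cite{HanPanZhang_2016}''), with no accompanying proof in the paper. You correctly recognize this at the end of your sketch when you say ``in practice I would quote them after checking that our parameter scaling and edge constants $\lambda_\pm,c_\pm$ match their conventions''---that sentence alone is the paper's entire ``proof,'' and the rest of your write-up, while a sound outline of the standard Coulomb-gas plus Riemann--Hilbert/Pfaffian route, goes well beyond what the paper provides.
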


\subsection{A new model for the Jacobi ensemble}

\label{Section_Jacobi_model}

The Jacobi ensemble appearing in Theorem \ref{Theorem_vark_approximation} originates in the following computation of exact distribution. In addition to real symmetric matrices ($\beta=1$ in the usual random matrix notations) it also covers the case of complex Hermitian matrices ($\beta=2$).

\begin{theorem} \label{Theorem_Var_k}
	Fix $k=1,2,\dots$ and assume $\T \ge (k+1)N$. Let $\V$ be an $N$--dimensional subspace in the $\T$--dimensional space and let $O$ be uniformly random orthogonal $\T\times \T$ matrix with determinant $1$ if $\beta=1$ ($O$ is a uniformly random unitary matrix if $\beta=2$). Let $P$ be an orthogonal projector on the space orthogonal to $O\V$, $O^2\V$,\dots, $O^{k-1}\V$. Let $P_1$ be a projector on the subspace  $P\V$ and $P_2$ be a projector on the subspace $PO^{k-1}(\1_\T+O)^{-1}\V$. Then non-zero eigenvalues of $P_1P_2P_1$ coincide with those of the Jacobi ensemble of  $N\times N$ real symmetric if $\beta=1$ (complex Hermitian if $\beta=2$)   matrices of density proportional to
	\begin{equation}
	\label{eq_var_k_answer}
	\det (\mathcal M)^{\frac{\beta}{2}N+\beta-2} \det(\1_N-\mathcal M)^{\frac{\beta}{2}(\T-(k+1)N+1)-1}\, d \mathcal M,\quad 0\le \mathcal M\le \1_N, \qquad \beta=1,2.
	\end{equation}
\end{theorem}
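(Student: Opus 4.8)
The plan is to prove Theorem~\ref{Theorem_Var_k} by induction on $k$, the inductive step being a distributional identity that replaces the pair (parameter $k$, ambient dimension $\T$) by the pair (parameter $k-1$, ambient dimension $\T-N$). Iterating it $k-1$ times reduces the statement to the case $k=1$, where $P=\1_\T$ and the matrix $\mathcal M:=P_1P_2P_1$ (restricted to $\mathrm{range}(P_1)$) records the squared cosines of the principal angles between $\V$ and $(\1_\T+O)^{-1}\V$ for a Haar orthogonal (resp.\ unitary) $O$. This base case is the exact finite-dimensional identity underlying \citet{BG}; it can also be verified directly via the Cayley transform $(\1_\T+O)^{-1}=\tfrac12\1_\T+A$ with $A^*=-A$, whose pushforward of Haar measure is the explicit matrix-Cauchy density, followed by a block integration over the components of $A$ that do not affect the angles between $\V$ and $(\1_\T+2A)\V$. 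Crucially, the base-case exponents are exactly those in \eqref{eq_var_k_answer}: $\tfrac{\beta}{2}N+\beta-2$ on $\det(\mathcal M)$ and $\tfrac{\beta}{2}\bigl((\T-(k-1)N)-2N+1\bigr)-1=\tfrac{\beta}{2}(\T-(k+1)N+1)-1$ on $\det(\1_N-\mathcal M)$, so the theorem follows once the inductive step is in place.

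Before the induction I would normalize the problem: by $O(\T)$- (resp.\ $U(\T)$-) invariance of Haar measure, fix $\V=\mathrm{span}(e_1,\dots,e_N)$, so the statement becomes an identity of Haar integrals over the group. I would also record two elementary facts used throughout: first, $P\V$ and $PO^{k-1}(\1_\T+O)^{-1}\V$ are both contained in $(O\V)^\perp$ because $P$ annihilates $O\V$; second, the eigenvalues of $P_1P_2P_1$ are the squared cosines of the principal angles between these two $N$-dimensional subspaces, so it suffices to control the joint law of those angles.

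For the inductive step I would disintegrate Haar measure with respect to the random $N$-dimensional subspace $O\V$ and pass to its orthogonal complement $\mathcal U:=(O\V)^\perp$, of dimension $\T-N$. This is precisely where the ``projections from $O(\T)$ to $O(\T-N)$'' advertised in the introduction enter: conditionally on $O\V$, the relevant part of $O$ induces a Haar-distributed orthogonal (unitary) structure on $\mathcal U$, and the task is to exhibit, inside $\mathcal U$, an $N$-dimensional subspace $\V'$ and a Haar-distributed orthogonal (unitary) operator $O'$ of $\mathcal U$ for which the whole configuration $\bigl(P\V,\ PO^{k-1}(\1_\T+O)^{-1}\V\bigr)$, viewed inside $\mathcal U$, coincides with the configuration $\bigl(P'\V',\ P'(O')^{k-2}(\1_{\mathcal U}+O')^{-1}\V'\bigr)$ that defines $\mathcal M$ for parameter $k-1$ in dimension $\T-N$ (here $P'$ is the projector in $\mathcal U$ onto the complement of $O'\V',\dots,(O')^{k-2}\V'$). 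The matching of the twisted factor $PO^{k-1}(\1_\T+O)^{-1}\V$ with $P'(O')^{k-2}(\1_{\mathcal U}+O')^{-1}\V'$ is where the block-inversion (Schur complement) formula for $(\1_\T+O)^{-1}$ relative to the splitting $\T=N+(\T-N)$ is used. Concretely this is carried out by writing the law of $\mathcal M$ as a Haar integral over $O(\T)$ and integrating out the block of $O$ associated to the direction $O\V$, which by an explicit Jacobian computation collapses it to the corresponding Haar integral over $O(\T-N)$ for parameter $k-1$; applying the induction hypothesis then closes the argument.

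I expect the crux to be exactly the content of that last paragraph. Making ``conditionally on $O\V$, what remains is a Haar orthogonal structure on a $(\T-N)$-dimensional space'' into a precise, usable statement requires care, because the naive compression $P_{\mathcal U}OP_{\mathcal U}$ of a Haar orthogonal matrix is a contraction, not an orthogonal matrix, so the correct reduced operator $O'$ must be defined with some ingenuity and its Haar-ness proved through an explicit change of variables together with its Jacobian. Entangled with this is the need to propagate the Jacobi parameters through the $k-1$ iterations without error — in particular, to verify that the $(\1_\T+O)^{-1}$ twist survives each Schur-complement reduction in the required form and to pin down the $\beta$-dependent shifts $\beta-2$ and $+1$ in \eqref{eq_var_k_answer} — which is the ``intricate inductive computation of large-dimensional matrix integrals'' the proof must execute. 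The cases $\beta=1$ and $\beta=2$ run in parallel, differing only in these exponent-bookkeeping constants.
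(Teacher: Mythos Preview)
Your proposal is correct and follows essentially the same inductive approach as the paper: induction on $k$ with base case $k=1$ from \citet{BG}, and an inductive step that reduces $(\T,k)$ to $(\T-N,k-1)$ via a Schur-complement/block-inversion computation on $(\1_\T+O)^{-1}$ together with a ``Haar-to-Haar'' projection from $SO(\T)$ to $SO(\T-N)$. The paper implements exactly the ingredients you anticipate: it first applies the substitution $\V\mapsto O^{1-k}\V$, $O\mapsto O^{-1}$ so that the subspace being quotiented out is the \emph{fixed} $\V$ rather than the random $O\V$, and then supplies the ``ingenious'' reduced operator you flag as the crux via the explicit formula $\tilde O=A-B(\1_N+D)^{-1}C$ (in the block decomposition of $O$), whose Haar law and independence from $B(\1_N+D)^{-1}$ are established through the Cayley transform and a direct Jacobian computation.
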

\begin{remark}
	\label{Remark_replace_in_def}
	We can replace $P O^{k-1}(\1_\T+O)^{-1}$ in the definition of $P_2$ with $P O^{k}(\1_\T+O)^{-1}$. Indeed, if $k>1$, then $P O^{k}(\1_\T+O)^{-1}\V+ P O^{k-1}(\1_\T+O)^{-1}\V= P O^{k-1} \V=0$. If $k=1$, then $P$ disappears (one can say that it becomes an identical operator) and the random operators $(\1_\T+O)^{-1}$ and $O(\1_\T+O)^{-1}=(\1_T+O^{-1})^{-1}$ has the same law, since the uniform measure on the orthogonal group is invariant under the inversion $O\mapsto O^{-1}$.
	
	By a similar argument applied inductively we can replace $P O^{k-1}(\1_\T+O)^{-1}$ with $P O(\1_\T+O)^{-1}$. Note, however, that for $k>1$ we can not replace it with $P (\1_\T+O)^{-1}$.
\end{remark}

The $k=1$ case of Theorem \ref{Theorem_Var_k} is established in \cite[Theorem 6 in Appendix]{BG}. The proof of Theorem \ref{Theorem_Var_k} uses the following three auxiliary ingredients.

\begin{lemma}[Block matrix inversion formula] \label{Lemma_block_inversion} For matrices $\mathbf A$, $\mathbf B$, $\mathbf C$, $\mathbf D$, we have:
	\begin{equation}
	\label{eq_block_inversion}
	\begin{pmatrix} \mathbf{A}& \mathbf{B}\\ \mathbf{C}& \mathbf{D}\end{pmatrix}^{-1} = \begin{pmatrix} \mathbf{Q}  & - \mathbf{Q} \mathbf{B}\mathbf{D}^{-1} \\ - \mathbf{D}^{-1}\mathbf{C} \mathbf{Q}
	& \mathbf{D}^{-1}+ \mathbf{D}^{-1} \mathbf{C} \mathbf{Q} \mathbf{B} \mathbf{D}^{-1} \end{pmatrix} ,\qquad \mathbf{Q} =  (\mathbf{A}- \mathbf{B} \mathbf{D}^{-1}\mathbf{C} )^{-1},
	\end{equation}
\end{lemma}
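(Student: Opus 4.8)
The plan is to verify the identity by direct block multiplication. Write $\mathbf Q=(\mathbf A-\mathbf B\mathbf D^{-1}\mathbf C)^{-1}$ and let $\mathbf M^{-1}$ denote the matrix on the right-hand side of \eqref{eq_block_inversion}. One multiplies $\begin{pmatrix}\mathbf A&\mathbf B\\ \mathbf C&\mathbf D\end{pmatrix}$ by $\mathbf M^{-1}$ and checks that the four resulting blocks are $\1$, $0$, $0$, $\1$. Concretely: the $(1,1)$ block is $\mathbf A\mathbf Q-\mathbf B\mathbf D^{-1}\mathbf C\mathbf Q=(\mathbf A-\mathbf B\mathbf D^{-1}\mathbf C)\mathbf Q=\1$; the $(1,2)$ block is
\[
 -\mathbf A\mathbf Q\mathbf B\mathbf D^{-1}+\mathbf B\mathbf D^{-1}+\mathbf B\mathbf D^{-1}\mathbf C\mathbf Q\mathbf B\mathbf D^{-1}
 =-(\mathbf A-\mathbf B\mathbf D^{-1}\mathbf C)\mathbf Q\mathbf B\mathbf D^{-1}+\mathbf B\mathbf D^{-1}=0;
\]
the $(2,1)$ block is $\mathbf C\mathbf Q-\mathbf D\mathbf D^{-1}\mathbf C\mathbf Q=0$; and the $(2,2)$ block is $-\mathbf C\mathbf Q\mathbf B\mathbf D^{-1}+\1+\mathbf C\mathbf Q\mathbf B\mathbf D^{-1}=\1$. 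The same computation with the factors in the opposite order (or the general fact that a one-sided inverse of a square matrix is two-sided) shows $\mathbf M^{-1}\mathbf M=\1$ as well.

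It should also be made explicit that the formula presumes $\mathbf D$ invertible and the Schur complement $\mathbf A-\mathbf B\mathbf D^{-1}\mathbf C$ invertible; under these two hypotheses every expression above is well defined and the displayed matrix is the genuine inverse. (In the applications in Sections \ref{Section_Jacobi_model}--\ref{Section_perturbation} these invertibility conditions will hold with probability tending to $1$, so no further care is needed at the point of use.)

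As an alternative one can derive rather than verify the formula, starting from the block LU-type factorization
\[
 \begin{pmatrix}\mathbf A&\mathbf B\\ \mathbf C&\mathbf D\end{pmatrix}
 =\begin{pmatrix}\1&\mathbf B\mathbf D^{-1}\\ 0&\1\end{pmatrix}
  \begin{pmatrix}\mathbf A-\mathbf B\mathbf D^{-1}\mathbf C&0\\ 0&\mathbf D\end{pmatrix}
  \begin{pmatrix}\1&0\\ \mathbf D^{-1}\mathbf C&\1\end{pmatrix},
\]
inverting each of the three factors (the unitriangular ones by negating the off-diagonal block, the block-diagonal one entrywise in blocks) and multiplying the inverses in reverse order. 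I do not expect any genuine obstacle here: the lemma is elementary and the only thing requiring attention is the bookkeeping of the non-commuting products and the statement of the implicit invertibility assumptions. I would therefore present the short direct-multiplication argument as the proof.
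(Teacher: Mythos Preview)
Your proof is correct and is precisely the ``direct computation'' the paper invokes: you multiply the block matrix by the proposed inverse and check that each of the four blocks gives the identity or zero. The paper does not spell out the block-by-block verification or the implicit invertibility assumptions, so your write-up is in fact more detailed than what appears there.
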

\begin{proof} Direct computation. \end{proof}

\begin{lemma}[Cayley transform] Suppose that all eigenvalues of $N\times N$ matrix $O$ are different from $-1$. Then $O$ is an orthogonal matrix with determinant $1$, if and only if the matrix $\R$ defined through
	\begin{equation}
	\label{eq_Cayley}
	\R=(\1_N-O)(\1_N+O)^{-1}=\frac{\1_N-O}{\1_N+O},\quad \text{ so that } \quad O=\frac{\1_N-\R}{\1_N+\R}
	\end{equation}
	is skew-symmetric, i.e., it satisfies $\R^*=-\R$.
\end{lemma}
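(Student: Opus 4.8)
The plan is to prove both implications by direct algebraic manipulation, using repeatedly that $\1_N-O$, $\1_N+O$, and their inverses are all polynomials in $O$ and hence mutually commute (and likewise $\1_N\pm\R$ commute). Before anything else I would record the invertibility facts that make the two formulas in \eqref{eq_Cayley} well posed: the hypothesis that $O$ has no eigenvalue equal to $-1$ is exactly what makes $\1_N+O$ invertible, while a real skew-symmetric $\R$ has purely imaginary spectrum, so $\1_N+\R$ (indeed $\1_N\pm\R$) is automatically invertible. With these in hand, no step of the computation ever divides by a singular matrix.

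For the forward direction, I would assume $O$ is orthogonal, so $O^{*}=O^{-1}$, and take adjoints in $\R=(\1_N-O)(\1_N+O)^{-1}$ to get $\R^{*}=(\1_N+O^{-1})^{-1}(\1_N-O^{-1})$. Factoring $O^{-1}$ out via $\1_N+O^{-1}=O^{-1}(\1_N+O)$ and $\1_N-O^{-1}=-O^{-1}(\1_N-O)$, cancelling it, and using commutativity then gives $\R^{*}=-(\1_N+O)^{-1}(\1_N-O)=-\R$. I would also remark that the determinant hypothesis is not an extra restriction here: an orthogonal matrix with no eigenvalue $-1$ automatically has determinant $1$, since its non-real eigenvalues come in unit-modulus conjugate pairs and its only admissible real eigenvalue is $+1$.

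For the reverse direction, I would assume $\R^{*}=-\R$ and take adjoints in $O=(\1_N-\R)(\1_N+\R)^{-1}$ to obtain $O^{*}=(\1_N-\R)^{-1}(\1_N+\R)$; multiplying and using that $\1_N\pm\R$ commute telescopes $O^{*}O$ to $\1_N$, so $O$ is orthogonal. Since $(\1_N-\R)^{*}=\1_N+\R$, we get $\det(\1_N-\R)=\det(\1_N+\R)$ and hence $\det O=1$. Finally, solving $O(\1_N+\R)=\1_N-\R$ for $\R$ gives $(\1_N+O)\R=\1_N-O$, i.e.\ $\R=(\1_N+O)^{-1}(\1_N-O)$, which is the stated expression and simultaneously shows that the two relations in \eqref{eq_Cayley} are mutually inverse.

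There is no genuine obstacle: this is the classical Cayley-transform correspondence. The only point requiring care — and the thing I would be explicit about — is the bookkeeping of invertibility, i.e.\ checking that $\1_N+O$ is invertible precisely under the stated eigenvalue hypothesis and that $\1_N\pm\R$ are invertible for every real skew-symmetric $\R$, so that all the manipulations above are legitimate.
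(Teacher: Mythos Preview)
Your proposal is correct and follows essentially the same approach as the paper's proof, just with more detail: the paper simply states that \eqref{eq_Cayley} yields $O^{*}=O^{-1}\iff \R^{*}=-\R$ and then computes $\det(\1_N+\R)=\det(\1_N+\R^{*})=\det(\1_N-\R)$ to get $\det O=1$. Your added observation that an orthogonal matrix with no eigenvalue $-1$ automatically has determinant $1$ is a nice bonus not made explicit in the paper.
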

\begin{proof} The formulas \eqref{eq_Cayley} imply that $O^*=O^{-1}$ if and only if $\R^*=-\R$. On the other hand, for a skew-symmetric $\R$, we have $\det(\1_N+\R)=\det(\1_N+\R^*)=\det(\1_N-\R)$. Hence, $\det(O)=\det\left(\frac{\1_N-\R}{\1_N+\R}\right)=1$.
\end{proof}

\begin{lemma}\label{Lemma_Haar_project} Choose two positive integers $M$, $N$ and set $\T=M+N$. Let $O$ be a uniformly random $\T\times \T$ orthogonal matrix with determinant $1$. Write $O$ in the block form according to $\T=M+N$ splitting:
	$$
	O=\begin{pmatrix} A& B\\ C& D\end{pmatrix}
	$$
	Then $\tilde O:= A-B(\1_N+D)^{-1} C$ is a $M\times M$ orthogonal matrix of determinant $1$ uniformly distributed among all such matrices. In addition, the  random matrices $\tilde O$ and $B(\1_N+D)^{-1}$ are independent.
\end{lemma}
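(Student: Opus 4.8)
The plan is to exploit the characterization of the Haar measure on $SO(\T)$ via its defining invariances, combined with the Cayley transform that was just recalled. First I would observe that $\tilde O = A - B(\1_N+D)^{-1}C$ is exactly the Schur complement appearing in the block inversion formula of Lemma \ref{Lemma_block_inversion}: writing $O^{-1} = O^*$ in block form and comparing with \eqref{eq_block_inversion} applied to the splitting $\T = M+N$ but with the roles arranged so that the $N\times N$ corner is $\1_N + D$ rather than $D$ itself, one sees that $\tilde O$ is (up to the sign bookkeeping) the inverse of the top-left $M\times M$ block of $(\1_\T + O)^{-1}$, or equivalently that it arises naturally when one forms $(\1_\T+O)^{-1}$ and reads off a sub-block. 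The cleanest route, though, is to go through the Cayley transform: set $\R = (\1_\T - O)(\1_\T+O)^{-1}$, which is skew-symmetric and, crucially, whose distribution is invariant under conjugation by block-diagonal orthogonal matrices $\mathrm{diag}(U_M, U_N)$ and, more to the point, has the property that the distribution of its top-left $M\times M$ block $\R_{11}$ (given nothing else) is again a ``Cayley image of Haar'' on the smaller group. This is because the map $O \mapsto \R$ linearizes the group, and the skew-symmetric Gaussian-type structure of $\R$ that makes $O$ Haar-distributed restricts cleanly to principal sub-blocks.

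Concretely, the key steps in order would be: (1) Apply the Cayley transform $O \mapsto \R$; record that $O$ Haar on $SO(\T)$ corresponds to a specific rotationally invariant law on skew-symmetric $\R$. (2) Compute the Schur complement $\tilde O$ in terms of $\R$: using $O = (\1_\T - \R)(\1_\T+\R)^{-1}$ and the block inversion formula \eqref{eq_block_inversion} on $\1_\T + \R$ (whose $(2,2)$ block is $\1_N + \R_{22}$, invertible since $\R_{22}$ is skew-symmetric hence has purely imaginary spectrum), show after simplification that $\tilde O = (\1_M - \R_{11})(\1_M + \R_{11})^{-1}$ where $\R_{11}$ is the top-left $M\times M$ block of $\R$ — i.e. the Schur complement operation in the $O$-world corresponds to simply \emph{restricting to a principal sub-block} in the $\R$-world. (3) Invoke the fact that a principal sub-block of a Haar-type skew-symmetric matrix has exactly the law that makes its Cayley image Haar on $SO(M)$; transporting back gives $\tilde O \sim$ Haar on $SO(M)$. (4) For independence, note that once we condition on $\R_{11}$, the remaining freedom in $\R$ (the blocks $\R_{12}$ and $\R_{22}$) is still ``Gaussian-like'' and independent of $\R_{11}$ in the appropriate sense; then express $B(\1_N+D)^{-1}$ in terms of $\R_{12}$, $\R_{22}$ (again via the block inversion formula) and check it is a function of the complementary blocks only, hence independent of $\tilde O$.

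I expect step (2) — verifying that the Schur complement $A - B(\1_N+D)^{-1}C$ really does collapse to the clean Cayley form $(\1_M-\R_{11})(\1_M+\R_{11})^{-1}$ — to be the main computational obstacle, because it requires carefully tracking how the $(1,1)$ block of $(\1_\T+\R)^{-1}$ interacts with the numerator $\1_\T - \R$ and showing all the cross terms involving $\R_{12}, \R_{22}$ cancel. An alternative, perhaps safer, route for steps (1)–(3) is a direct invariance argument: show that $\tilde O$ defined by the Schur complement formula satisfies $\tilde O \in SO(M)$ deterministically whenever $O \in SO(\T)$ (a computation using $O O^* = \1$ blockwise, exactly the kind that underlies Lemma \ref{Lemma_block_inversion}), and then show the map $O \mapsto \tilde O$ intertwines the left (or two-sided) multiplication action of $SO(M)$ embedded block-diagonally in $SO(\T)$ with left multiplication on $SO(M)$; by uniqueness of Haar measure this forces $\tilde O$ to be Haar. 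The independence claim would then follow by a similar equivariance bookkeeping, identifying $B(\1_N+D)^{-1}$ as the coordinate on a complementary homogeneous piece. The hard part in either route is the same in spirit: pushing the orthogonality relations through the block inversion bookkeeping without sign or transpose errors.
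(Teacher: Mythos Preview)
Your Cayley-transform route is exactly the paper's approach, and your step (2) is correct: one does get $\tilde O=(\1_M-\R_{11})(\1_M+\R_{11})^{-1}$ with $\R_{11}$ the top-left block of $\R$. But you have misjudged where the work lies, and step (4) contains a real error.

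Step (2) is the easy part; the paper dispatches it in a few lines via the block form of $\R$. The substance is in (3) and (4), which you treat as soft. The Haar measure in Cayley coordinates has density proportional to $\det(\1_\T+\R)^{1-\T}\,d\R$: this is not ``Gaussian-like,'' and the blocks $\R_{11},\R_{12},\R_{22}$ are \emph{not} independent under it. So in (3) you cannot simply invoke that the principal sub-block inherits the right marginal; that marginal must be computed by integrating out the other blocks, and this requires an explicit change of variables with Jacobians.

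More seriously, your step (4) claim is false: $B(\1_N+D)^{-1}$ is \emph{not} a function of $\R_{12},\R_{22}$ alone. One computes directly (and the paper does) that $B(\1_N+D)^{-1}=-(\1_M+\R_{11})^{-1}\R_{12}$, which manifestly depends on $\R_{11}$. Hence the independence of $\tilde O$ and $B(\1_N+D)^{-1}$ is not a ``disjoint blocks'' phenomenon. What actually happens is that one introduces new coordinates $\widehat W:=-(\1_M+\R_{11})^{-1}\R_{12}=B(\1_N+D)^{-1}$ and a suitably renormalized $\widehat\R_2$, computes the Jacobian of $(\R_{11},\R_{12},\R_{22})\mapsto(\R_{11},\widehat W,\widehat\R_2)$, and checks that $\det(\1_\T+\R)^{1-\T}$ times this Jacobian factors as a product of functions of $\R_{11}$, of $\widehat W$, and of $\widehat\R_2$ separately. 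That factorization delivers both the Haar law of $\tilde O$ and the independence simultaneously. Your plan is missing exactly this computation, and your proposed substitute for it would not work.
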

% Olshanski: В вопросе проекций ГИ вдохновлялся хуа-лукеном, пикреллом, старыми японскими работами - Шимамура? Неретин скорее узнал о возможности такого от ГИ. Ещё интересный вопрос: а можно ли добавить B в утверждение о проекциях для симметрической группы?
\begin{remark} \label{Remark_off_diag_corner} The law of $\widehat W:=B(\1_N+D)^{-1}$ is explicit. The computation \eqref{eq_Haar_decomposition} below implies that the density of $\widehat W$ is proportional to
	$$
	\det\bigl(\1_N+\widehat W^* \widehat W\bigr)^{1/2-M-N/2}\, d \widehat W.
	$$
\end{remark}
\begin{remark} The computation of the law of $\tilde O$ is mentioned in \cite{Olshanski_Harmonic} and \cite{Neretin_Hua} with its roots going back to \cite{Hua}. However, we could not locate the statements concerning also $B(\1_N+D)^{-1}$ in the literature.
\end{remark}

\begin{proof}[Proof of Lemma \ref{Lemma_Haar_project}]
	First, note that the distribution of eigenvalues of $D$ is absolutely continuous and, hence, $\1_N+D$ is almost surely invertible and the matrix $\tilde O$ is well-defined. Our next task is to show that $\tilde O$ is an orthogonal matrix with determinant $1$. We use Cayley transform for that. Combining \eqref{eq_Cayley} with \eqref{eq_block_inversion} we have
	\begin{multline}
	\label{eq_R_form}
	\R=\frac{\1_\T-O}{\1_\T+O}=\frac{2}{\1_\T+O}-\1_\T\\ =  \begin{pmatrix} 2 Q-\1_M  && - 2 Q B (\1_N+D)^{-1} \\ - 2 (\1_N+D)^{-1} C  Q
	&& 2(\1_N+D)^{-1}+ 2(\1_N+D)^{-1} C Q B (\1_N+D)^{-1}-\1_N \end{pmatrix},\\  Q =  (\1_M+A- B (\1_N+D)^{-1}C )^{-1}.
	\end{multline}
	Since $\R$ is skew-symmetric, so is its top--left $M\times M$ corner $\tilde\R=2 Q-\1_M$. We claim that $\tilde O$ is the Cayley transform of $\tilde \R$, which would imply that $\tilde O$ is orthogonal of determinant $1$. Indeed,
	\begin{equation}
	\label{eq_tO_tR}
	\frac{\1_M-\tilde \R}{\1_M+\tilde \R}=  \frac{2 \1_M-2Q}{2Q}= Q^{-1}-\1_M=A- B (\1_N+D)^{-1}C =\tilde O.
	\end{equation}
	
	It remains to compute the distributions of $\tilde O$ and $B(\1_N+D)^{-1}$ and show their independence. In terms of $\R$ the distribution of $O$ (as a uniformly random orthogonal matrix of determinant $1$) is given by the density proportional to
	\begin{equation}
	\label{eq_Cayley_distribution_1}
	\det(\1_\T-\R^2)^{-\frac{1}{2}\T+\frac{1}{2}}d\R=\det(\1_\T-\R)^{1-\T}d\R=\det(\1_\T+\R)^{1-\T}d\R,
	\end{equation}
	see, e.g., \citet[(2.55)]{forrest} and notice that $\det(\1_\T-\R)=\det(\1_T+\R)$ for the two equalities. We rewrite the block form \eqref{eq_R_form} of $\R$ as
	$$
	\R=\begin{pmatrix} \tilde \R & -W \\ W^* & \R_2\end{pmatrix},
	$$
	where $\R_2$ is  $N\times N$ skew-symmetric and $W$ is an arbitrary $M\times N$ matrix. We further introduce the notation
	$\widehat W:= (\1_M+\tilde \R)^{-1} W$. Recalling that $\tilde\R=2 Q-\1_M$, we transform
	\begin{equation}
	\label{eq_wWdef}
	\widehat W= 2(\1_M+\tilde \R)^{-1} Q B (\1_N+D)^{-1}= B (\1_N+D)^{-1}.
	\end{equation}
	We also define
	$$
	\widehat \R_2:= \bigl(\1_N+ \widehat W^* \widehat W\bigr)^{-1/2} \bigl(-\R_2+\widehat W^*\tilde \R \widehat W\bigr)  \bigl(\1_N+ \widehat W^* \widehat W\bigr)^{-1/2}.
	$$
	Note that $(\tilde R, \widehat W, \widehat R_2)$ is an alternative parameterization of $\R$, in which  $\widehat W$ is an arbitrary $M\times N$ matrix and $\widehat R_2$ is an arbitrary $N\times N$ skew-symmetric matrix.  Using the formula for the determinant of a block matrix
	\begin{equation}
	\label{eq_block_det}
	\det \begin{pmatrix} \mathbf{A}& \mathbf{B}\\ \mathbf{C}& \mathbf{D}\end{pmatrix}=\det \mathbf{A}\,\cdot\, \det (\mathbf{D} - \mathbf{C} \mathbf{A}^{-1}\mathbf{B}),
	\end{equation}
	we rewrite \eqref{eq_Cayley_distribution_1} as
	\begin{multline}
	\label{eq_Haar_decomposition_0}
	\det(\1_\T-\R)^{1-\T}\,d\tilde \R\, dW\, d \R_2=\det\begin{pmatrix} \1_M-\tilde \R & W \\ -W^* & \1_N-\R_2\end{pmatrix}^{1-\T}\,d\tilde \R\, dW\, d \R_2
	\\
	=\det(\1_M-\tilde \R)^{1-\T} \det(\1_N-\R_2 + W^* (\1_M-\tilde \R)^{-1} W)^{1-\T}
	%\begin{pmatrix}  \1_M & -(\1_N-\tilde \R)^{-1} W  \\ W^* & \1_N-\R_2\end{pmatrix}^{1-\T}
	\,d\tilde \R\, dW\, d \R_2
	\\=\det\bigl(\1_M-\tilde \R\bigr)^{1-\T}   \det\bigl(\1_N-\R_2+\widehat W^* (\1_M+\tilde \R) \widehat W\bigr)^{1-\T}\,d\tilde \R\, dW\, d \R_2
	\end{multline}
	Further, notice that
	$$
	\bigl(\1_N+\widehat W^* \widehat W\bigr)^{1/2}\bigl(\1_N+ \widehat \R_2\bigr)\bigl(\1_N+\widehat W^* \widehat W\bigr)^{1/2}= \1_N-\R_2+\widehat W^* (\1_M+\tilde \R) \widehat W.
	$$
	Hence, the last line of \eqref{eq_Haar_decomposition_0} is transformed into
	\begin{multline}
	\label{eq_Haar_decomposition}
	\det\bigl(\1_M-\tilde \R\bigr)^{1-\T} \det\bigl(\1_N+\widehat W^* \widehat W\bigr)^{1-\T}   \det\bigl(\1_N+ \widehat \R_2\bigr)^{1-\T}\,d\tilde \R\, dW\, d \R_2
	\\
	=\det\bigl(\1_M+\tilde \R\bigr)^{1-M} \det\bigl(\1_N+\widehat W^* \widehat W\bigr)^{1/2-M-N/2} \det\bigl(\1_N+ \widehat \R_2\bigr)^{1-M-N}\, d\tilde \R\, d \widehat W\, d \widehat \R_2,
	\end{multline}
	where in the last line we use $\T=M+N$ and change variables $d W\mapsto d\widehat W$ and $d \R_2\mapsto d\widehat \R_2$ using the general Jacobian computations:
	\begin{itemize}
		\item The map $Z\mapsto Q Z$ on $n\times m$ matrices has the Jacobian
		\begin{equation}
		\label{eq_mult_Jacobian}
		\left| \frac{\partial (Q Z)}{\partial Z}\right| = |\det Q|^{m},
		\end{equation}
		\item The map $Z\mapsto Q Z Q^*$ from the space of $n\times n$ skew-symmetric  matrices to itself has the Jacobian
		\begin{equation}
		\label{eq_skew_Jacobian}
		\left|\frac{\partial (Q Z Q^*)}{\partial Z}\right|=|\det Q|^{n-1}.
		\end{equation}
	\end{itemize}
	The first identity \eqref{eq_mult_Jacobian} follows from the observation that each column of $Z$ is transformed by linear map $Q$ and there are $m$ such columns. The second is similar and we refer to   \citet[(1.35)]{forrest} for details.
	
	The key important feature of the last line of \eqref{eq_Haar_decomposition} is that it has a product form, which implies the joint independence of $\tilde \R$, $\widehat W$, and $\widehat \R_2$. Hence, the density of $\tilde \R$ is proportional to $\det\bigl(1+\tilde \R\bigr)^{1-M}d \tilde \R$. Comparing with \eqref{eq_Cayley_distribution_1} and noting that the dimension changed from $\T$ to $M$, we conclude that $\tilde O$ is a uniformly random $M\times M$ orthogonal matrix of determinant $1$.
	
	Next, recalling that $\tilde O$ is a deterministic function of $\tilde \R$ by \eqref{eq_tO_tR}, we conclude that  $\tilde O$ is independent with $\widehat W$, which is precisely $B (\1_N+D)^{-1}$ by \eqref{eq_wWdef}.
\end{proof}

\begin{proof}[Proof of Theorem \ref{Theorem_Var_k}] We only give a proof for the real case $\beta=1$; the complex case can be proven by the same argument. The proof is induction in $k$ with base case $k=1$ being  \cite[Theorem 6 in Appendix]{BG} and the induction step being based on Lemma \ref{Lemma_Haar_project}.
	
	{\bf Step 1.} We first note that the particular choice of \emph{deterministic} space $\V$ in the statement of the theorem is not important: any other deterministic choice of $\V$ can be achieved by a change of basis of the $\T$--dimensional space, which keeps the probability distribution of $O$ and, hence, entire construction invariant. In particular, the probability distribution of $P_1 P_2 P_1$ is unchanged. However, we need to be more careful, if we would like to make $\V$ random, as correlations with $O$ might cause issues.
	
	\smallskip
	
	{\bf Step 2.} Take any $r \in\mathbb Z$. We claim that replacement of $\V$ with $O^r\V$ everywhere in the statement of Theorem \ref{Theorem_Var_k} does not change the eigenvalues of $P_1 P_2 P_1$. Indeed, the only important feature of $O^r$ here is that it is an orthogonal operator commuting with $O$. Hence,  the change $\V\mapsto O^r\V$ leads to the image of the projector $P$ being multiplied by $O^r$; in more details, the transformation takes the form $P\mapsto O^r P O^{-r}$. Further, $P\V$ gets transformed to $O^r P\V$ and $P_1$ undergoes a similar transformation: $P_1\mapsto O^r P_1 O^{-r}$. The same is true for $P_2$:
	it undergoes the transformation $P_2\mapsto O^r P_2 O^{-r}$. We conclude that the product $P_1 P_2 P_1$ is transformed into $O^{r} P_1 P_2 P_1 O^{-r}$. Since conjugations do not change eigenvalues, we are done.

	\smallskip
	
	The arguments of Steps 1 and 2 might give a feeling that we can actually replace $\V$ by any random space. However, this is not the case. Repeating the same arguments, we see that replacement $\V\mapsto A \V$ leads to the same eigenvalues of the projector $P_1 P_2 P_1$ as if we replaced $O\mapsto A^* O A$. In both Steps 1 and 2 $O$ had the same distribution as $A^* O A$, hence, the eigenvalues were unchanged. But in general, if $A$ is correlated with $O$ in a non-trivial way, then the distribution might change.\footnote{For instance, if $\V$ is spanned by eigenvectors of $O$, then the spaces $O\V$, $O^2 \V$,\dots,$O^{k-1} \V$ all coincide, which is a very different behavior from the case of deterministic $\V$.}
	
	\smallskip
	
	{\bf Step 3.} We now transform the statement of Theorem \ref{Theorem_Var_k} by replacing $\V$ with $O^{1-k} \V$ and further replacing $O$ by $O^{-1}$ everywhere. Since the uniform (Haar) measure on the orthogonal matrices is invariant under inversion, the law of eigenvalues of $P_1 P_2 P_1$ is unchanged and the ingredients of Theorem \ref{Theorem_Var_k} are now as follows:
	
	\begin{itemize}
		\item $O$ is a uniformly random $\T\times \T$ orthogonal matrix with determinant $1$ and $\V$ is an arbitrary (deterministic) $N$--dimensional subspace of the $\T$--dimensional space, whose choice is irrelevant for the statement.
		\item $P$ is the projector on the orthogonal complement of $O^{k-2}\V$, $O^{k-3}\V$, \dots, $O\V$, $\V$.
		
		\item $P_1$ is the projector on the subspace $P O^{k-1}\V$ and $P_2$ is the projector on the subspace $P O(\1_T+O)^{-1} \V$. (The latter can be replaced by $P  (\1_T+ O)^{-1} \V$ without changing the outcome. Indeed, for that we need start from $PO^{k}(\1_T+O)^{-1}$ instead of ${PO^{k-1}(\1_T+O)^{-1}}$, which is possible by Remark \ref{Remark_replace_in_def}).
		
		\item The claim is that the eigenvalues of $P_1 P_2 P_1$ are distributed as \eqref{eq_var_k_answer}.
	\end{itemize}
	We are going to prove this last statement by induction in $k$. For that we choose $\V$ to be the span of the last $N$ coordinate vectors, split $\T=M+N$ with $M=\T-N$ and project everything on the first $M$ coordinate vectors (which are orthogonal complement to $\V$). We rely on Lemma \ref{Lemma_Haar_project} and use $A,B,C,D$ and $\tilde O$ notation from that lemma.

	\smallskip
	
	{\bf Step 4.} We claim that the subspace in $M$--dimensional space spanned by the first $M$ coordinates of $O^{k-2}\V$, $O^{k-3}\V$, \dots, $O\V$ (since $\V$ has zero projection on the first $M$ coordinates, we do not need it here) is the same as the subspace spanned by $\tilde O^{k-3} \langle B\rangle$, $\tilde O^{k-4}\langle B\rangle$, \dots, $\langle B \rangle$, where  $\langle B \rangle$ is $N$--dimensional space spanned by columns of the $M\times N$ matrix $B$.
	
	Indeed, the first $M$ coordinates of $O\V$ are $\langle B \rangle$ by definition of the block structure in Lemma \ref{Lemma_Haar_project}. Further, to go from powers of $O$ to powers of $\tilde O$ we make the following observation: take a vector $w$ in $\T$--dimensional space and write it as $w={w_2 \choose w_1}$, where $w_1$ is $N$-dimensional vector (one can think of $w_1$ being in $\V$) and $w_2$ is $M$--dimensional vector (one can think of $w_2$ being in the orthogonal complement of $\V$) and write
	$$
	O{w_2\choose w_1}={u_2 \choose u_1}.
	$$
	Then the $M$--dimensional vector $u_2$ takes the form
	$$
	u_2=A w_2 + B w_1= \tilde O w_2 + B( (\1_N+D)^{-1} C w_2+ w_1).
	$$
	Since we only care about the linear span of columns and $\langle B \rangle$ already belongs to the desired linear span, the last term can be ignored and we arrive at $\tilde O w_2$, which then implies the claim.
	
	\smallskip
	
	{\bf Step 5.} Next, consider the projection of $O^{k-1} \V$ on the orthogonal complement to $O^{k-2}\V$, $O^{k-3}\V$, \dots, $O\V$, $\V$. This is the same as the the projection of the first $M$ coordinates of $O^{k-1}\V$ on the orthogonal complement (in $M$--dimensional space) to first $M$ coordinates of $O^{k-2}\V$, $O^{k-3}\V$, \dots, $O\V$. Hence, combining with the argument of Step 4, this is the same as the projection of $\tilde O^{k-2}  \langle B\rangle$ on the orthogonal complement of $\tilde O^{k-3} \langle B\rangle$, $\tilde O^{k-4}\langle B\rangle$, \dots, $\langle B \rangle$. It is convenient to note that $\langle B \rangle=\langle B (\1_N+ D)^{-1}\rangle$.
	
	\smallskip
	
	{\bf Step 6.} Finally, consider the projection of $(\1_T+O)^{-1} O \V$ on the orthogonal complement of $O^{k-2}\V$, $O^{k-3}\V$, \dots, $O\V$, $\V$. By Steps 4 and 5 this is the same as the projection of  the first $M$ coordinates of $(\1_T+O)^{-1} O \V$ on the orthogonal complement of  $\tilde O^{k-3} \langle B(\1_N+D)^{-1}\rangle$, $\tilde O^{k-4}\langle B(\1_N+D)^{-1}\rangle$, \dots, $\langle B (\1_N+D)^{-1} \rangle$. Representing $(\1_T+O)^{-1} O\V $ in the block form,  the first $M$ coordinates of $(\1_T+O)^{-1} O \V$ are the span of the columns of the sum of the top--left corner of $(\1_T+O)^{-1}$ multiplied by $B$ plus the top-right corner of $(\1_T+O)^{-1}$ multiplied by $D$. Using Lemma \ref{Lemma_block_inversion}, we get the span of the columns of
	\begin{multline*}
	(\1_M+A-B(\1_N+D)^{-1}C)^{-1} B   - (\1_M+A-B(\1_N+D)^{-1}C)^{-1}  B (\1_N+D)^{-1} D\\=(\1_M+\tilde O)^{-1} B(\1_N+D)^{-1},
	\end{multline*}
	which is the same as  $(\1_M+\tilde O)^{-1} \langle B (\1_N+D)^{-1}\rangle$.
	
	\smallskip
	{\bf Step 7.} Combining the results of Steps 6 and 7 with Lemma \ref{Lemma_Haar_project}, we identify the eigenvalues of $P_1 P_2 P_1$ with the eigenvalues of $\tilde P_1 \tilde P_2 \tilde P_1$ obtained by the following procedure:
	
	\begin{itemize}
		\item  $\tilde O$ is a uniformly random $M\times M$ orthogonal matrix with determinant $1$, where $M=\T-N$.
		
		\item $\tilde P$ is the projector on orthogonal complement of $\tilde O^{k-3}\langle B (\1_N+D)^{-1}\rangle $, $\tilde O^{k-4}{\langle B (\1_N+D)^{-1}\rangle}$, \dots, $\langle B(\1_N+D)^{-1}\rangle$.
		
		\item $\tilde P_1$ is the projector on the subspace $\tilde P \tilde O^{k-2}  \langle B (\1_N+D)^{-1}\rangle$ and $\tilde P_2$ is the projector on the subspace ${\tilde P (\1_M+\tilde O)^{-1} \langle B (\1_N+D)^{-1}\rangle}$.
	\end{itemize}
	Since  $\langle B(\1_N+D)^{-1}\rangle$ is independent from $\tilde O$ by Lemma \ref{Lemma_Haar_project}, this is the same form as the one at the end of Step 3, but with $k$ decreased by $1$, $\T$ decreased by $N$, and $\V$ replaced by  $\langle B (\1_N+D)^{-1}\rangle$. Decreasing $k$ by $1$ and $\T$ by $N$ leaves the formula \eqref{eq_var_k_answer} unchanged, hence, we can invoke the induction assumption, thus, finishing the proof.
\end{proof}

\subsection{A perturbation of the Jacobi ensemble.}

\label{Section_perturbation}

In this section we use Theorem \ref{Theorem_Var_k} to prove Theorem \ref{Theorem_vark_approximation}.

\smallskip

Recall the cyclic shift\footnote{Note that in \cite{BG} we expressed all the operators in terms of $F=L_c^{-1}$ rather than $L_c$.} operator $L_c$ acting in $T$--dimensional space. Let $V$ be the $(T-1)$--dimensional space orthogonal to the vector $(1,1,\dots,1)$, i.e., $V=\{(x_1,\dots,x_T)\mid x_1+\dots+x_T=0\}$. Note that $V$ is an invariant space for $L_c$ and let $L_V$ denote the restriction of $L_c$ on the subspace $V$.

Take a uniformly-random orthogonal (or unitary if $\beta=2$) operator $\tilde O$ acting in $(T-1)$--dimensional space $V$ and define an operator $\tilde L$ acting in $V$:
$$
\tilde L=- \tilde O L_V \tilde O^*.
$$

\begin{proposition}
	\label{Prop_gaussian_rotation}
	Assume $T> (k+1)N$ and let $\tilde L$ be as above. Take an arbitrary $N$--dimensional subspace $\mathcal U$ in $(T-1)$--dimensional space $V$. Let $P$ be the orthogonal projector on the space orthogonal to $\tilde L \mathcal U$, $\tilde L^2 \mathcal U$,\dots, $\tilde L^{k-1} \mathcal U$. Let $P_1$ be the projector on the subspace  $P \mathcal U$ and $P_2$ be the projector on the subspace $P \tilde L^{k}(\1_{V}+\tilde L)^{-1} \mathcal U$. Then the distributions of non-zero eigenvalues of $P_1P_2P_1$ coincides with that of the squared sample canonical correlations solving \eqref{eq_vark_eig} under the hypothesis $\widehat H_0$.
\end{proposition}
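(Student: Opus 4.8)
I would prove this by an exact-distribution computation: specialize Procedure~\ref{sscc_BG} to $\widehat H_0$, peel off everything deterministic, and recognize what is left as exactly the projector product $P_1P_2P_1$ of the statement. To begin, reduce to $\Lambda=\1_N$: the eigenvalues of $\tilde S_{k0}\tilde S_{00}^{-1}\tilde S_{0k}\tilde S_{kk}^{-1}$ are unaffected if $(\tilde R_0,\tilde R_k)$ is replaced by $(M\tilde R_0,M\tilde R_k)$ for any invertible $N\times N$ matrix $M$, and under $\widehat H_0$ the whole process is a fixed linear function of the $N\times T$ error matrix $\eps$, itself a linear image of a matrix of i.i.d.\ $\mathcal N(0,1)$ entries; so we may take $\eps$ to be that matrix.

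Next I would write the residuals explicitly. Under $\widehat H_0$, $\Delta X=\mu\mathbf 1^{*}+\eps$ as $N\times T$ matrices, and from \eqref{eq_detrending} one computes $\tilde X=X_0\mathbf 1^{*}+\eps(\Sigma-v\mathbf 1^{*})^{*}$, where $\Sigma$ is the strictly lower-triangular all-ones matrix (cyclic partial summation) and $v_t=(t-1)/T$; the essential cancellation is that the linear trend generated by $\mu$ is exactly removed by the detrending term. Using the representations $\tilde R_0=(\Delta X)P_{\bot\mathcal W}$ and $\tilde R_k=\tilde X(L_c^{k-1})^{*}P_{\bot\mathcal W}$ recorded after Procedure~\ref{sscc_BG}, that $\mathbf 1\in\mathcal W$ annihilates the $X_0\mathbf 1^{*}$ and $\mu\mathbf 1^{*}$ terms, and that $(\Sigma-v\mathbf 1^{*})\mathbf 1=0$ (so $\eps$ may be replaced by $\eps P_V$, $P_V:=\1_T-\tfrac{1}{T}\mathbf 1\mathbf 1^{*}$), I obtain
\begin{equation*}
\tilde R_0=(\eps P_V)P_{\bot\mathcal W},\qquad \tilde R_k=(\eps P_V)\,G^{*}P_{\bot\mathcal W},\qquad G:=L_c^{k-1}(\Sigma-v\mathbf 1^{*}).
\end{equation*}
A one-line check gives $(\1_T-L_c)\Sigma=L_c$ on $V$ (cyclic summation inverts the cyclic difference $\1_T-L_c$ there, up to one shift), so $G$ maps $V$ into $V$ with $G|_V=L_V^{k}(\1_V-L_V)^{-1}$, and $\mathcal W\cap V=\sum_{i=1}^{k-1}L_V^{i}\mathcal E$ where $\mathcal E:=\mathrm{row}(\eps P_V)\subset V$ is the (random) row span of $\eps P_V$.

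I would then pass from matrices to subspaces. The eigenvalues of $\tilde S_{k0}\tilde S_{00}^{-1}\tilde S_{0k}\tilde S_{kk}^{-1}$ are the squared cosines of the principal angles between $\mathrm{row}(\tilde R_0)$ and $\mathrm{row}(\tilde R_k)$, equivalently the non-zero eigenvalues of the product of the orthogonal projectors onto these two subspaces. Since $\mathbf 1\in\mathcal W$, both row spaces lie inside $V$, and the previous paragraph gives $\mathrm{row}(\tilde R_0)=P_{\bot\mathcal W'}\mathcal E$ and $\mathrm{row}(\tilde R_k)=P_{\bot\mathcal W'}\bigl(L_V^{k}(\1_V-L_V)^{-1}\mathcal E\bigr)$ with $\mathcal W'=\sum_{i=1}^{k-1}L_V^{i}\mathcal E$, all projections taken inside $V$. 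Finally, the rows of $\eps P_V$ are i.i.d.\ standard Gaussian vectors of the $(T-1)$-dimensional space $V$ and $N<T-1$ (since $T>(k+1)N$), so $\mathcal E$ is Haar-distributed on the Grassmannian of $N$-planes in $V$.

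It remains to line this up with the statement. Conjugating $P,P_1,P_2$ by $\tilde O^{*}$ leaves the eigenvalues of $P_1P_2P_1$ unchanged and, since $\tilde L=-\tilde O L_V\tilde O^{*}$, turns the subspaces $\tilde L^{i}\mathcal U$ into $\tilde O L_V^{i}\mathcal U'$ and $\tilde L^{k}(\1_V+\tilde L)^{-1}\mathcal U$ into $\tilde O\,L_V^{k}(\1_V-L_V)^{-1}\mathcal U'$ (overall signs are immaterial), where $\mathcal U':=\tilde O^{*}\mathcal U$; hence the non-zero eigenvalues of $P_1P_2P_1$ equal those of the product of the projectors onto $P_{\bot\mathcal W''}\mathcal U'$ and $P_{\bot\mathcal W''}\bigl(L_V^{k}(\1_V-L_V)^{-1}\mathcal U'\bigr)$ with $\mathcal W''=\sum_{i=1}^{k-1}L_V^{i}\mathcal U'$ --- the exact recipe found above for the cointegration side, with $\mathcal U'$ playing the role of $\mathcal E$. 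As $\mathcal U'$ is Haar-distributed on the Grassmannian of $N$-planes in $V$, it has the same law as $\mathcal E$, and the two distributions of non-zero eigenvalues coincide. I expect the second paragraph to be the main obstacle: checking that all deterministic pieces drop out and, above all, identifying $G|_V$ with the cyclic integration operator $L_V^{k}(\1_V-L_V)^{-1}$ so that it matches $\tilde L^{k}(\1_V+\tilde L)^{-1}$ --- this is precisely why the statement carries the minus sign in $\tilde L=-\tilde O L_V\tilde O^{*}$ and why the freedom in Remark~\ref{Remark_replace_in_def} to shift the power of $O$ is used; the remaining steps are routine linear algebra, generalizing the $k=1$ computation of \citet{BG}.
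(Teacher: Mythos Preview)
Your proof is correct and follows essentially the same approach as the paper's: reduce to $\Lambda=\1_N$, express $\tilde R_0$ and $\tilde R_k$ through $\eps P_V$ and the cyclic summation operator, identify the latter on $V$ with $L_V(\1_V-L_V)^{-1}$, and then use rotational invariance of the Gaussian to exchange the random row span $\mathcal E$ for $\tilde O^*\mathcal U$. One small imprecision worth noting: $G$ does not map $V$ into $V$ (your operator $\Sigma$ already fails this), so the identity you want is $P_VG|_V=L_V^{k}(\1_V-L_V)^{-1}$ rather than $G|_V$; this is harmless because everything is subsequently hit by $P_{\bot\mathcal W}$, which factors through $P_V$, and your argument goes through unchanged.
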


Comparing Proposition \ref{Prop_gaussian_rotation} with Theorem \ref{Theorem_Var_k}  and Remark \ref{Remark_replace_in_def} one notices that the differences are in restricting on the subspace $V$ (hence, decreasing the dimension by $1$) and in replacement $O\leftrightarrow \tilde L$.

\begin{proof}[Proof of Proposition \ref{Prop_gaussian_rotation}] \quad
	
	{\bf Step 1.} We start by transforming the Gaussian noise $\eps_t$. Let $\eps$ be $N\times T$ matrix, whose $t$-th column is $\eps_t$. Take any non-degenerate $N\times N$ matrix $A$ and transform $\eps\mapsto A\eps$. Thus, we leave $X_0$ unchanged and recalculate $X_t$, $1\le t \le T$. We claim that the canonical correlations solving Eq.~\eqref{eq_vark_eig} are unchanged. Indeed, the linear subspace $\mathcal W$ stays the same and so does the projector $P_{\bot \mathcal W}$. For each $t=1,2,\dots,T$, the vector $\Delta X_t$ is transformed by $\Delta X_t\mapsto A\Delta X_t+(\1_N-A)\mu$ and $\tilde X_t$ is transformed by
	$ \tilde X_t\mapsto A\tilde X_t + (\1_N-A) X_0.$
	Recall that the space $\mathcal W$ includes vector $(1,\dots,1)$, which leads to the projector $P_{\bot \mathcal W}$ canceling  the additional terms $(\1_N-A)\mu$ and $(\1_N-A)X_0$ in the last two formulas. Hence, the matrices $\tilde R_0$ and $\tilde R_k$ are transformed by $\tilde R_0\mapsto A \tilde R_0$ and $\tilde R_k\mapsto A \tilde R_k$. Therefore,
	$$
	\tilde S_{k0} \tilde S_{00}^{-1} \tilde S_{0k} \mapsto A \tilde S_{k0} \tilde S_{00}^{-1} \tilde S_{0k} A^*, \qquad \tilde S_{kk}\mapsto A \tilde S_{kk} A^*.
	$$
	We conclude that Eq.~\eqref{eq_vark_eig} is multiplied by $\det(A) \det(A^*)$ and, hence, its roots are preserved.
	
	By choosing an $A=\Lambda^{-1/2}$ the covariance matrix $\Lambda$ becomes identical. Hence, for the rest of the proof we assume without loss of generality that $\Lambda$ is identical, which means that the matrix elements of $\eps$ are i.i.d.\ standard Gaussians.
	
	\medskip
	
	{\bf Step 2.} Let us now reduce the canonical correlations solving Eq.~\eqref{eq_vark_eig} to eigenvalues for a product of projectors. By definition the canonical correlations are eigenvalues of $N\times N$ matrix
	$$
	\tilde S_{k0} \tilde S_{00}^{-1} \tilde S_{0k} \tilde S_{kk}^{-1}= \tilde R_k \tilde R_0^* (\tilde R_0 \tilde R_0^*)^{-1} \tilde R_0 \tilde R_k^* (\tilde R_k \tilde R_k^*)^{-1}
	$$
	Note that for any two rectangular matrices $A$ and $B$ of the same sizes the non-zero eigenvalues of $A B^*$ and of $B^* A$ coincide. Hence, the desired canonical correlations are also eigenvalues of $T\times T$ matrix
	$$
	\bigl[ \tilde R_0^* (\tilde R_0 \tilde R_0^*)^{-1} \tilde R_0 \bigr] \cdot \bigl[\tilde R_k^* (\tilde R_k \tilde R_k^*)^{-1} \tilde R_k \bigr].
	$$
	The last matrix is a product of two projectors:\footnote{For a closer match to the proposition that we are proving, note also that if $P_1$ and $P_2$ are projectors, then eigenvalues of $P_1P_2$ and $P_1 P_2 P_1$ are the same.} the first one projects on the space spanned by columns of $\tilde R_0^*$ and the second one projects on columns of $\tilde R_k^*$.
	
	{\bf Step 3.}
	The next step is to express via $\eps$ various matrices involved in constructing $\tilde R_0$ and $\tilde R_k$. Let $\mathcal P$ be the orthogonal projector on the subspace $V$. Under $\widehat H_0$ we have $\Delta X_t=\mu+\eps_t$. Also
	$$
	(\Delta X \mathcal P)_t= \mu+\eps_t-\frac{1}{T}\sum_{\tau=1}^T (\mu+\eps_\tau)= \eps_t-\frac{1}{T}\sum_{\tau=1}^T \eps_\tau, \qquad \Delta X \mathcal P=\eps \mathcal P .
	$$
	Further, we define the $T\times T$ summation matrix  $\Phi$. It has $1$'s below the diagonal and $0$'s on the diagonal and everywhere above the diagonal:
	
	$$
	\Phi=\begin{pmatrix} 0&0&0&\dots&0\\ 1& 0 &0&\dots &0\\ 1& 1& 0 &\dots & 0 \\ && \ddots \\ 1&1 &\dots & 1 &0\end{pmatrix}.
	$$
	We set
	$$
	\tilde \Phi =\mathcal P \Phi \mathcal P.
	$$
	By a straightforward linear algebra (see \cite[Section 9.2]{BG} for some details) one shows that the linear operator $\tilde \Phi$ preserves the space $V$ (orthogonal to $(1,1,\dots,1)$). In addition, its restriction on the subspace $V$ coincides with $L_V(\1_V-L_V)^{-1}$,
	where $\1_V$ is the identical operator acting in $V$.
	
	We can write
	\begin{multline}
	\tilde X_t= X_{t-1}-\frac{t-1}{T}(X_T-X_0)=X_0 + (t-1)\mu+ \sum_{\tau=1}^{t-1} \eps_\tau - \frac{t-1}{T} \left(T\mu +\sum_{\tau=1}^{T} \eps_\tau\right)\\= X_0 +  \sum_{\tau=1}^{t-1} \eps_\tau - \frac{t-1}{T} \sum_{\tau=1}^{T} \eps_\tau.
	\end{multline}
	We claim that $\tilde X \mathcal  P= \eps \tilde \Phi^*$. Indeed, $ \tilde X \mathcal P$ coincides with $\dbtilde X \mathcal P $, where
	$$
	\dbtilde X_t=\tilde X_t- X_0= \sum_{\tau=1}^{t-1} \eps_\tau - \frac{t-1}{T} \sum_{\tau=1}^{T} \eps_\tau=\sum_{\tau=1}^{t-1} \left( \eps_\tau - \frac{1}{T} \sum_{s=1}^{T} \eps_s\right).
	$$
	Since $\Phi$ is the summation operator, we have $\dbtilde X = ( \Phi ( \eps \mathcal P)^* )^* = \eps \mathcal P \Phi^*$ and the claim is proven because $\tilde \Phi^*=\mathcal P\Phi^*\mathcal P$.

	{\bf Step 4.} Previous steps yield the following expressions for $\tilde R_0$ and $\tilde R_k$. Take the $N$--dimensional space $\tilde{\mathcal U}$ (belonging to $(T-1)$-dimensional space $V$) spanned by the columns of $\mathcal P \eps^*$. Let $\tilde P$ be the orthogonal projector on the space orthogonal to $L_c \tilde{\mathcal U}$, $L_c^2 \tilde{\mathcal U}$, \dots, $L_c^{k-1} \tilde{\mathcal U}$. (Note that $L_c$ can be replaced by $L_V$ in the last definition without changing $\tilde P$). Then the space spanned by $N$ columns of $\tilde R_0^*$ is $\tilde P \tilde{\mathcal U}$. On the other hand, the space spanned by $N$ columns of $\tilde R_k^*$ is $\tilde P L_c^{k-1} \tilde \Phi \tilde{\mathcal U}=\tilde P L_V^{k} (\1_V-L_V)^{-1} \tilde{\mathcal U}$. At this point, we see strong similarities with objects in the statement of Proposition \ref{Prop_gaussian_rotation} with main difference being in the assignment of randomness: $L_c$ is deterministic and $\tilde{\mathcal U}$ is random, but $\tilde L$ is random and $\mathcal U$ is deterministic. Thus, it remains to relocate the random part.
	
	For that we notice that due to the rotational invariance of the Gaussian law (here it is important that we made the covariance matrix $\Lambda$ identical on the first step), the space $\tilde{\mathcal U}$ spanned by the columns of $\mathcal P \eps^*$ has the same law as $\tilde O^* \mathcal U$. The reason is that both laws give uniformly random $N$--dimensional subspace of $(T-1)$--dimensional space $V$.
	
	Since everything was previously expressed through the span  of columns of $\mathcal P \eps^*$, denote $\tilde {\mathcal U}$, we now simply replace those by the columns of $\tilde O^* \mathcal U$.
	Then the space orthogonal to $L_c \tilde{\mathcal U}$, $L_c^2 \tilde{\mathcal U}$, \dots, $L_c^{k-1} \tilde{\mathcal U}$ becomes the space orthogonal to $L_c \tilde O^* \mathcal U $, $L_c^2 \tilde O^* \mathcal U $, \dots, $L_c^{k-1} \tilde O^* \mathcal U $. Equivalently, this is the space orthogonal to $ \tilde O^* \tilde L  \mathcal U $, $ \tilde O^* \tilde L^2 \mathcal U $, \dots, $\tilde O^* \tilde L^{k-1} \mathcal U $. $\tilde P$ is the projector on this space.
	We conclude that the law of canonical correlations \eqref{eq_vark_eig} is the same as the law of non-zero eigenvalues of the product of two projectors: the first one projects on the subspace $\tilde P \tilde O^* \mathcal U$ and the second one projects on the subspace $\tilde P  L_c^k (\1_V-L_c)^{-1} \tilde O^* \mathcal U=\tilde P \tilde O^*  \tilde L^k (\1_V+\tilde L)^{-1} \mathcal U $. Up to a change of basis (by matrix $\tilde O$), which does not change the eigenvalues, we have arrived precisely at the expression from the statement of the proposition.
\end{proof}

The next proposition explains the effect of the replacement $O\leftrightarrow  \tilde L$  on the eigenvalues of the product of projectors in Theorem \ref{Theorem_Var_k} and Proposition \ref{Prop_gaussian_rotation}. We need to introduce some additional notations.

Choose positive integers $k$, $N$, and $\T$, such that $\T\ge (k+1) N$ and an arbitrary $N$--dimensional subspace $\mathcal V$ in $\T$--dimensional space.
Let
$$f^{k,N,\T;\mathcal V}: SO(\T)\to \{0\le x_1\le x_2\le\dots \le x_N\le 1 \}$$ be a map from the group $SO(\T)$ of orthogonal $\T\times \T$ matrices of determinant $1$ to $N$--tuples of reals on $[0,1]$ interval, defined by the following procedure: Take $O\in SO(\T)$.  Let $P$ be the orthogonal projector on the space orthogonal to $O \mathcal V$, $O^2 \mathcal V$,\dots, $O^{k-1} \mathcal V$. Let $P_1$ be the projector on the subspace  $P \mathcal V$ and $P_2$ be the projector on the subspace $P O^{k}(\1_{\T}+O)^{-1} \mathcal V$. Then $f^{k,N,\T;\mathcal V}$ maps $O$ to $N$ largest eigenvalues of  $P_1P_2P_1$.

We also need three norms:
\begin{enumerate}
	\item $\|v\|_2$ is the $L_2$ norm of a vector $v=(v_1,v_2,\dots,v_N)$, defined as $\|v\|_2=\sqrt{\sum_{i=1}^N v_i^2}$.
	\item $\|v\|_{\infty}$ is the supremum norm of a vector $v=(v_1,\dots,v_N)$, defined as $\|v\|_{\infty}=\max_{i} |v_i|$.
	\item $\|A\|_2$ is the spectral norm of a matrix $A$, defined as the square root of the largest eigenvalue of $A A^*$. Equivalently, $\|A\|_2=\max_{v} \frac{\|Av\|_2}{\|v\|_2}$.
	
\end{enumerate}

\begin{proposition} \label{Proposition_map_continuity} Suppose that $k$ is fixed, while $N$ is growing and $\T$ depends on $N$ in such a way that $\frac{\T}{N}\in [k+1 + C_1, C_2]$ for some $C_1,C_2>0$. Let $O_1$ and $O_2$ be two $\T\times \T$ random matrices, such that:
	\begin{itemize}
		\item $O_1$ is a uniformly random $\T\times \T$ orthogonal matrix with determinant $1$.
		\item The eigenvalues of $O_2$ are almost surely different from $-1$.
		\item For each $\eps>0$ we have
		\begin{equation}
		\lim_{N\to\infty} {\rm Prob} \left( \|O_1-O_2\|_2< \frac{1}{N^{1-\eps}}  \right)=1.
		\end{equation}
	\end{itemize}
	Then for each $\eps>0$ we have
	\begin{equation}
	\lim_{N\to\infty} {\rm Prob} \left( \|f^{k,N,\T;\mathcal V}(O_1)-f^{k,N,\T;\mathcal V}(O_2)\|_\infty< \frac{1}{N^{1-\eps}}  \right)=1.
	\end{equation}
\end{proposition}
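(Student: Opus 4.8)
The plan is to prove Proposition \ref{Proposition_map_continuity} by a quantitative continuity argument. First I would extend $f:=f^{k,N,\T;\mathcal V}$ from $SO(\T)$ to the open full-measure set $\mathcal G$ of matrices $O$ with $-1\notin\mathrm{spec}(O)$ for which $O\mathcal V,\dots,O^{k-1}\mathcal V$ together span a $(k-1)N$-dimensional subspace in general position, so that $P$, $P_1$, $P_2$ are projectors onto subspaces of the expected dimensions; both $O_1$ and $O_2$ lie in $\mathcal G$ with probability tending to $1$. The heart of the matter is then the claim that, on a ball of radius $N^{-1/2}$ around $O_1$, the map $f$ is Lipschitz with a random constant $L$ such that $\mathrm{Prob}(L\le N^{\delta})\to 1$ for every $\delta>0$. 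Granting this, I would fix the target $\eps>0$, apply the hypothesis with $\eps/2$ in place of $\eps$ and the Lipschitz bound with $\delta=\eps/2$; on the intersection of these two high-probability events, $O_2$ lies in the ball and
\[
 \|f(O_1)-f(O_2)\|_\infty\ \le\ L\cdot\|O_1-O_2\|_2\ <\ N^{\eps/2}\cdot N^{-1+\eps/2}\ =\ N^{-1+\eps},
\]
which is exactly the conclusion. So everything reduces to producing such an $L$.

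To this end I would decompose $f$ into a bounded (in $N$) number of elementary operations: matrix multiplication; inversion of matrices that are well-conditioned on the relevant event; the map $A\mapsto A(A^*A)^{-1}A^*$ sending a full-column-rank matrix to the orthogonal projector onto its column span; and extraction of the ordered eigenvalues of a symmetric matrix. Matrix multiplication is Lipschitz with constant controlled by operator norms (all $O(1)$ here), the ordered-eigenvalue map is $1$-Lipschitz in the operator norm by Weyl's inequality, and $A\mapsto A(A^*A)^{-1}A^*$ is smooth with local Lipschitz constant controlled by $\|A\|_2$ and $1/\sigma_{\min}(A)$. Hence $L$ becomes a fixed rational expression in the following condition numbers, evaluated at $O_1$: (i) $1/\sigma_{\min}$ of the matrix formed by stacking orthonormal bases of $O\mathcal V,\dots,O^{k-1}\mathcal V$ (a non-degeneracy statement for the principal angles between these subspaces and for their joint span); (ii) $1/\sigma_{\min}$ of the matrices whose columns span $P\mathcal V$ and $PO^{k}(\1_\T+O)^{-1}\mathcal V$; and (iii) the conditioning of $\1_\T+O$ insofar as it enters the subspace obtained by applying $(\1_\T+O)^{-1}$ — which, by Remark \ref{Remark_replace_in_def}, may be taken to be $(\1_\T+O)^{-1}O\mathcal V$.

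Items (i) and (ii) are non-degeneracy statements for a $(k+1)$-tuple of $N$-dimensional subspaces of a $\T$-dimensional space that, after the Haar rotation, are in general position — this is where $\T\ge(k+1)N$ and $\T/N$ bounded away from $k+1$ enter. I would establish the required $N^{-o(1)}$ lower bounds on the relevant singular values from standard small-ball / anti-concentration estimates for Gram determinants of Haar-rotated subspaces, i.e.\ the same flavour of rigidity that is packaged in \eqref{eq_MM_bound}; these also give $\|P\mathcal V\|_2,\|PO^k(\1_\T+O)^{-1}\mathcal V\|_2=O(1)$ and that $O\mathcal V,\dots,O^{k-1}\mathcal V$ have the expected pairwise separation.

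I expect item (iii) to be the main obstacle and the place where the "special care due to matrix inversions" is needed. The operator $\1_\T+O_1$ is genuinely ill-conditioned: the spectrum of a Haar orthogonal matrix almost surely has an eigenvalue at distance of order $1/\T$ from $-1$, so $\|(\1_\T+O_1)^{-1}\|_2$ is of order $\T\sim N$, and propagating the perturbation $O_1\leadsto O_2$ naively through $(\1_\T+O)^{-1}$ would lose a factor of order $N$ and destroy the estimate. The way around it is that $(\1_\T+O)^{-1}$ never has to be differentiated as an operator: only the subspace $(\1_\T+O)^{-1}O\mathcal V$ matters, and it equals $\ker\bigl(P_{(O\mathcal V)^{\perp}}(\1_\T+O)\bigr)$, the kernel of a matrix that depends on $O$ \emph{without} any inversion (and in which the projector $P_{(O\mathcal V)^{\perp}}$ is itself perfectly conditioned, since $O$ orthogonal maps an orthonormal basis of $\mathcal V$ to one of $O\mathcal V$). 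Perturbing $O_1$ to $O_2$ changes this matrix by at most $\|O_1-O_2\|_2$ up to an $O(1)$ factor, with no amplification, so the orthogonal projector onto its kernel moves by at most that quantity divided by the smallest nonzero singular value of $P_{(O_1\mathcal V)^{\perp}}(\1_\T+O_1)$; what is therefore needed is an $N^{-o(1)}$ \emph{lower} bound on this last singular value, not an upper bound on $\|(\1_\T+O_1)^{-1}\|_2$. I would obtain it by combining the rigidity of the Haar spectrum near $-1$ — a spectral gap $\gtrsim\T^{-1-\delta}$ together with control of the overlap between the corresponding near-$(-1)$ eigenvector and the fixed subspace $\mathcal V$ — with the anti-concentration estimates from item (i): the net effect is that the dangerous near-$(-1)$ eigendirection of $O_1$ is absorbed into the kernel $(\1_\T+O)^{-1}O\mathcal V$ itself, rather than into its orthogonal complement, so it does not degrade the singular value that governs the kernel's stability. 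Assembling this with the elementary Lipschitz estimates of the preceding paragraphs produces the desired $L$ and closes the argument.
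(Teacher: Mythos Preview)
Your approach is genuinely different from the paper's. The paper argues by induction on $k$, with the base case $k=1$ taken from \citet{BG}. For the step it uses the identity $\tilde f^{k,N,\T;\mathcal V_0}(O)=\tilde f^{k-1,N,\T-N;\langle B\rangle}\bigl(A-B(\1_N+D)^{-1}C\bigr)$ (blocks as in \eqref{eq_block_notations}) coming from Steps~4--7 of Theorem~\ref{Theorem_Var_k}, together with Lemma~\ref{Lemma_Haar_project}: the reduced matrix $\tilde O=A-B(\1_N+D)^{-1}C$ is again Haar and independent of $B(\1_N+D)^{-1}$, so after conjugating to a common reference subspace via Lemma~\ref{Lemma_match_spaces} the inductive hypothesis applies. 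The only inversion that ever occurs is $(\1_N+D)^{-1}$, and since $D$ is the $N\times N$ corner of a Haar matrix, $D^*D$ is a Jacobi ensemble with spectrum bounded away from $1$ (see \eqref{eq_corner_distribution}); thus $(\1_N+D)^{-1}$ is uniformly well-conditioned and the ill-conditioned full operator $(\1_\T+O)^{-1}$ is never confronted. What the inductive route buys is exactly this: it trades the hard inversion for an easy one.

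Your direct Lipschitz decomposition and the rewriting $(\1_\T+O)^{-1}O\mathcal V=\ker\bigl(P_{(O\mathcal V)^\perp}(\1_\T+O)\bigr)$ are both correct, but the heart of item~(iii) --- the $N^{-o(1)}$ lower bound on the smallest \emph{nonzero} singular value of $P_{(O\mathcal V)^\perp}(\1_\T+O)$ --- is asserted, not proved. The ``absorption'' heuristic does not close it: if $v$ is an eigenvector of $O_1$ with eigenvalue at distance $\sim 1/\T$ from $-1$, then $\|P_{(O\mathcal V)^\perp}(\1_\T+O)v\|\lesssim 1/\T$, so what you actually need is that the component of $v$ orthogonal to the kernel is itself $O(N^{-1+o(1)})$, and likewise for every linear combination of the many near-$(-1)$ eigenvectors. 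This is a delicate quantitative statement about the alignment of two \emph{dependent} random subspaces of $\T$-space, and neither eigenvalue rigidity \eqref{eq_MM_bound} nor generic anti-concentration for Gram determinants delivers it. A smaller but similar gap sits in items~(i)--(ii): the subspaces $O\mathcal V,\dots,O^{k-1}\mathcal V$ are not independent, so their quantitative general position is not a standard small-ball estimate either. Both gaps may well be closable, but as written the proposal replaces the paper's structural induction with singular-value estimates whose proofs would be of comparable difficulty and are only sketched.
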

Proposition \ref{Proposition_map_continuity} claims a continuity of map $f^{k,N,\T;\mathcal V}$. The proof needs care because of the inversions in the definition of the map $f^{k,N,\T;\mathcal V}$.

\begin{remark} \label{Remark_map_continuity_complex}
	Proposition \ref{Proposition_map_continuity} has a version for complex numbers, in which all orthogonal matrices are replaced by unitary matrices. The proof of the complex version is the same.
\end{remark}

The proof of Proposition \ref{Proposition_map_continuity} relies on three lemmas which we prove later in this section. For these lemmas we  write matrices $O_1$ and $O_2$ of Proposition \ref{Proposition_map_continuity} in the block forms according to the splitting $\T=(\T-N)+N$:
\begin{equation}
\label{eq_block_notations}
O_1=\begin{pmatrix} A_1 & B_1\\ C_1 & D_1\end{pmatrix},\qquad  O_2=\begin{pmatrix} A_2 & B_2\\ C_2 & D_2\end{pmatrix}.
\end{equation}

\begin{lemma}
	\label{Lemma_no_eigenvalues}
	Let $O$ be a $\T\times T$ orthogonal matrix of determinant $1$ written in the block form  $O=\begin{pmatrix} A & B\\ C & D\end{pmatrix}$ according to the splitting $\T=(\T-N)+N$. If all eigenvalues of $O$ are different from $-1$, then so are the eigenvalues of $D$ and of $A-B(\1_N+D)^{-1}C$.
\end{lemma}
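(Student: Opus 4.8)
The plan is to establish the two assertions in turn and in the stated order, since the matrix $\tilde O:=A-B(\1_N+D)^{-1}C$ is only defined once $\1_N+D$ is known to be invertible. Throughout I regard vectors in $\mathbb R^{\T}$ as column pairs according to the splitting $\T=(\T-N)+N$.

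First I would show that $D$ has no eigenvalue $-1$, equivalently that $\1_N+D$ is invertible, by exhibiting an eigenvector of $O$. Suppose $Dv=-v$ for some nonzero $v\in\mathbb R^N$ — a real eigenvector exists because $-1$ is real — and set $w=\binom{0}{v}$ with $0\in\mathbb R^{\T-N}$. Then $Ow=\binom{Bv}{Dv}=\binom{Bv}{-v}$, and orthogonality of $O$ gives $\|Bv\|_2^2+\|v\|_2^2=\|Ow\|_2^2=\|w\|_2^2=\|v\|_2^2$, which forces $Bv=0$; hence $Ow=\binom{0}{-v}=-w$, so $-1$ would be an eigenvalue of $O$, contrary to hypothesis.

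Next, with $\tilde O$ now well defined, I would show it has no eigenvalue $-1$ by the same device. Suppose $\tilde O u=-u$ for some nonzero $u\in\mathbb R^{\T-N}$, and let $w$ be the column pair with top block $u$ and bottom block $y:=-(\1_N+D)^{-1}Cu$. A direct block computation shows that the top block of $Ow$ equals $Au+By=\tilde O u=-u$ while the bottom block equals $Cu+Dy=\bigl(\1_N-D(\1_N+D)^{-1}\bigr)Cu=(\1_N+D)^{-1}Cu=-y$, so $Ow=-w$ with $w\neq 0$, again a contradiction. Equivalently, once invertibility of $\1_N+D$ is in hand, one may use the Schur-complement determinant identity $\det(\1_\T+O)=\det(\1_N+D)\,\det(\1_{\T-N}+\tilde O)$ — the variant of \eqref{eq_block_det} in which the bottom-right block is the one inverted — and deduce $\det(\1_{\T-N}+\tilde O)\neq 0$ from $\det(\1_\T+O)\neq 0$ and $\det(\1_N+D)\neq 0$.

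I do not expect a genuine obstacle here, as the lemma is elementary linear algebra. The only point requiring a little care is keeping the two steps in the right order, because the formula defining $\tilde O$ presupposes the first step; a secondary, minor point is that the eigenvalue in question is the real number $-1$, so orthogonality identities such as $\|Ow\|_2=\|w\|_2$ may be applied over $\mathbb R$ without any complexification. It is worth noting, finally, that the second step is the real specialization of the Cayley-transform computation already performed in the proof of Lemma \ref{Lemma_Haar_project}: there $\tilde O$ is identified as the Cayley transform of the skew-symmetric top-left corner $\tilde\R$ of $\R=(\1_\T-O)(\1_\T+O)^{-1}$ (see \eqref{eq_R_form}--\eqref{eq_tO_tR}), and then $\1_{\T-N}+\tilde O=2(\1_{\T-N}+\tilde\R)^{-1}$ is invertible precisely because $\tilde\R$ is skew-symmetric. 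All of this carries over verbatim to unitary matrices, which is what is needed for Remark \ref{Remark_map_continuity_complex}.
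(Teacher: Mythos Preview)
Your proof is correct. For the first assertion your argument is essentially identical to the paper's (the paper writes the orthogonality relation as $B^*B+D^*D=\1_N$, you use the equivalent norm identity $\|Ow\|_2=\|w\|_2$). For the second assertion the paper instead invokes the Cayley-transform representation \eqref{eq_tO_tR} directly: writing $\tilde O=(\1_{\T-N}-\tilde\R)(\1_{\T-N}+\tilde\R)^{-1}$ with $\tilde\R$ skew-symmetric, the equation $\tilde O v=-v$ would give $(\1_{\T-N}-\tilde\R)v=-(\1_{\T-N}+\tilde\R)v$, i.e.\ $v=0$. Your two alternatives---lifting an eigenvector of $\tilde O$ to an eigenvector of $O$, or the Schur-complement determinant identity---are more self-contained and avoid re-invoking the computation from Lemma~\ref{Lemma_Haar_project}; the paper's route is shorter given that \eqref{eq_tO_tR} is already available, but yours would work verbatim even without that earlier lemma.
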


\begin{lemma}
	\label{Lemma_match_projections}
	Under the assumptions of Proposition \ref{Proposition_map_continuity} we have
	\begin{equation}
	\lim_{N\to\infty} {\rm Prob} \left( \| (A_1-B_1(\1_{N}+D_1)^{-1} C_1)-(A_2-B_2(\1_{N}+D_2)^{-1} C_2)\|_2< \frac{1}{N^{1-\eps}}  \right)=1.
	\end{equation}
\end{lemma}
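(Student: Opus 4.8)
The plan is to control the map $O \mapsto A - B(\1_N + D)^{-1} C$, where $A, B, C, D$ are the blocks of $O$ as in \eqref{eq_block_notations}, by treating $\tilde O_2 := A_2 - B_2(\1_N + D_2)^{-1} C_2$ as a small perturbation of $\tilde O_1 := A_1 - B_1(\1_N + D_1)^{-1} C_1$. Writing $M_i := (\1_N + D_i)^{-1}$, I would start from the telescoping identity
\[
\tilde O_1 - \tilde O_2 = (A_1 - A_2) - (B_1 - B_2) M_1 C_1 - B_2 (M_1 - M_2) C_1 - B_2 M_2 (C_1 - C_2)
\]
together with the resolvent identity $M_1 - M_2 = M_1 (D_2 - D_1) M_2$. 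Since $B_i$ and $C_i$ are sub-blocks of orthogonal matrices, $\|B_i\|_2, \|C_i\|_2 \le 1$, while $\|A_1 - A_2\|_2$, $\|B_1 - B_2\|_2$, $\|C_1 - C_2\|_2$, $\|D_1 - D_2\|_2$ are all at most $\|O_1 - O_2\|_2$. Collecting the terms, this gives the deterministic bound
\[
\|\tilde O_1 - \tilde O_2\|_2 \le \|O_1 - O_2\|_2 \cdot \bigl(1 + \|M_1\|_2\bigr)\bigl(1 + \|M_2\|_2\bigr).
\]
Thus the whole lemma reduces to showing that $\|M_1\|_2$ and $\|M_2\|_2$ are bounded by a constant with probability tending to $1$. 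The inversion in $M_i = (\1_N + D_i)^{-1}$ is the only source of amplification, and obtaining this uniform bound is the crux of the argument.

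For that I would use a spectral estimate on the $N \times N$ corner $D_1$ of the Haar-random $O_1$. Since $\sigma_{\min}(\1_N + D_1) \ge 1 - \|D_1\|_2$, it suffices to prove that $\|D_1\|_2 \le 1 - \delta_0$ with probability tending to $1$ for some $\delta_0 > 0$ depending only on $k$, $C_1$, $C_2$. It is classical that, for a Haar-random orthogonal $O_1$, the eigenvalues of $D_1 D_1^*$ follow a (real, $\beta=1$) Jacobi ensemble; under the constraint $\T/N \in [k+1+C_1, C_2]$ — which in particular forces $\T/N > 2$, so $\T > 2N$ and the ensemble is non-degenerate — its parameters are such that the right edge $\lambda_+$ of the corresponding equilibrium measure stays in a compact subinterval of $(0,1)$. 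Hence $\lambda_{\max}(D_1 D_1^*) \le \tfrac12(1 + \sup \lambda_+) < 1$ with probability tending to $1$, by the large-deviation bound for the top eigenvalue of a Jacobi ensemble (see e.g.\ \citet[Section 2.6.2]{anderson2010introduction}); equivalently $\|D_1\|_2 \le 1 - \delta_0$ with probability tending to $1$, which yields $\|M_1\|_2 \le \delta_0^{-1}$. For $M_2$, on the event $\|O_1 - O_2\|_2 < N^{-1+\eps}$ we have $\|D_2\|_2 \le \|D_1\|_2 + N^{-1+\eps} \le 1 - \delta_0/2$ for $N$ large, so $\|M_2\|_2 \le 2\delta_0^{-1}$ on the intersection of the two good events, which still has probability tending to $1$. (Off this event $\tilde O_2$ is still well defined, by Lemma \ref{Lemma_no_eigenvalues} and the hypothesis that $O_2$ has no eigenvalue $-1$.)

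To conclude, on the good event just described the first two displays combine to give $\|\tilde O_1 - \tilde O_2\|_2 \le \kappa \, \|O_1 - O_2\|_2$ with $\kappa := (1 + \delta_0^{-1})(1 + 2\delta_0^{-1})$, a constant depending only on $k$, $C_1$, $C_2$. Given a target $\eps > 0$, I would apply the hypothesis of Proposition \ref{Proposition_map_continuity} with exponent $\eps/2$: with probability tending to $1$, $\|O_1 - O_2\|_2 < N^{-1+\eps/2}$, whence $\|\tilde O_1 - \tilde O_2\|_2 < \kappa N^{-1+\eps/2} < N^{-1+\eps}$ once $N$ is large enough that $\kappa < N^{\eps/2}$. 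This is precisely the claim. The complex ($\beta = 2$) case needed for Remark \ref{Remark_map_continuity_complex} is identical, with unitary matrices and the complex Jacobi ensemble in place of the orthogonal/real ones. The one genuinely nontrivial point is the uniform lower bound $1 - \|D_1\|_2 \ge \delta_0$; the rest is operator-norm bookkeeping.
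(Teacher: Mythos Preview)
Your proof is correct and follows essentially the same approach as the paper: both reduce the lemma to a uniform bound on $\|(\1_N+D_1)^{-1}\|_2$ obtained from the explicit Jacobi-ensemble law of $D_1^* D_1$ (the paper invokes Proposition \ref{Theorem_Jacobi_as}, you invoke the large-deviation bound), and then transfer the bound to $(\1_N+D_2)^{-1}$ by the closeness of $D_2$ to $D_1$. Your version is slightly more explicit, writing out the telescoping/resolvent identity and the clean factorized Lipschitz constant $(1+\|M_1\|_2)(1+\|M_2\|_2)$, whereas the paper simply appeals to the Lipschitz property of addition, multiplication, and inversion under the relevant boundedness assumptions; the substance is the same.
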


\begin{lemma}
	\label{Lemma_match_spaces}
	Under the assumptions of Proposition \ref{Proposition_map_continuity}, let $\tilde{\mathcal  V}_0$ be the $N$--dimensional subspace of $(\T-N)$--dimensional space spanned by the last $N$ coordinate vectors.
	There exists an $(\T-N)\times (\T-N)$ orthogonal matrix $U_1$, depending only on $B_1(\1_N+D_1)^{-1}$ and an $(\T-N)\times (\T-N)$ orthogonal matrix, $U_2$, depending both on $B_1(\1_N+D_1)^{-1}$ and on $B_2(\1_N+D_2)^{-1}$, such that $U_1\tilde{\mathcal  V}_0=\langle B_1\rangle$,  $U_2\tilde{\mathcal  V}_0=\langle B_2\rangle$  and
	\begin{equation}
	\lim_{N\to\infty} {\rm Prob} \left( \|U_1-U_2\|_2< \frac{1}{N^{1-\eps}}  \right)=1.
	\end{equation}
\end{lemma}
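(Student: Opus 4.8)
The plan is to collapse the whole statement to a single perturbation estimate for a ``direct rotation'' of subspaces. Write $W_i:=B_i(\1_N+D_i)^{-1}$ for $i=1,2$; these $(\T-N)\times N$ matrices are well defined because $\1_N+D_i$ is invertible (almost surely for the Haar matrix $O_1$, and by Lemma \ref{Lemma_no_eigenvalues} together with the hypothesis on $O_2$), and $\langle W_i\rangle=\langle B_i\rangle$ since $(\1_N+D_i)^{-1}$ is invertible. So it suffices to build orthogonal $U_1,U_2$ with $U_j\tilde{\mathcal V}_0=\langle W_j\rangle$, $U_1$ a function of $W_1$ alone, $U_2$ a function of $(W_1,W_2)$, and $\|U_1-U_2\|_2<N^{-1+\eps}$ with probability tending to $1$ (``w.h.p.''). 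I would construct $U_1$ first from $W_1$ and then set $U_2:=VU_1$, where $V$ is the orthogonal matrix rotating $\langle W_1\rangle$ onto $\langle W_2\rangle$; since $U_1$ is orthogonal, $\|U_1-U_2\|_2=\|(\1_{\T-N}-V)U_1\|_2=\|\1_{\T-N}-V\|_2$, so the statement reduces to showing $\|\1_{\T-N}-V\|_2=O(N^{-1+\eps})$ w.h.p.

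For $U_1$, on the (probability-one) event that $W_1$ has full column rank $N$ I would Gram--Schmidt the columns of $W_1$ to get an orthonormal basis $q_1,\dots,q_N$ of $\langle W_1\rangle$, complete it by orthonormalizing the coordinate vectors $e_1,e_2,\dots$ against $q_1,\dots,q_N$, and set $U_1:=[\,q_{N+1}\,|\,\cdots\,|\,q_{\T-N}\,|\,q_1\,|\,\cdots\,|\,q_N\,]$, placing $q_1,\dots,q_N$ in the last $N$ columns so that $U_1\tilde{\mathcal V}_0=\langle W_1\rangle=\langle B_1\rangle$; this $U_1$ depends only on $W_1$. (Crucially, global continuity of $U_1$ in $W_1$ is not needed --- only $V$ carries the quantitative information --- so the non-Lipschitz Gram--Schmidt choice is harmless.) For $V$, let $Q_j=W_j(W_j^*W_j)^{-1}W_j^*$ be the orthogonal projector onto $\langle W_j\rangle$. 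Whenever $\|Q_1-Q_2\|_2<1$, I would use the standard direct-rotation (Kato) formula
\begin{equation*}
 V=\bigl[Q_2Q_1+(\1_{\T-N}-Q_2)(\1_{\T-N}-Q_1)\bigr]\bigl[\1_{\T-N}-(Q_1-Q_2)^2\bigr]^{-1/2},
\end{equation*}
which is orthogonal and satisfies $VQ_1=Q_2V$, hence $V\langle W_1\rangle=\langle W_2\rangle$ and $U_2\tilde{\mathcal V}_0=V\langle B_1\rangle=\langle B_2\rangle$, with $U_2$ a function of $(W_1,W_2)$; expanding the formula gives $\|\1_{\T-N}-V\|_2\le c\,\|Q_1-Q_2\|_2$ as long as $\|Q_1-Q_2\|_2$ stays away from $1$. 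On the complementary bad events one simply sets $U_1=U_2=\1_{\T-N}$.

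It then remains to prove $\|Q_1-Q_2\|_2=O(N^{-1+\eps})$ w.h.p. From $\|O_1-O_2\|_2<N^{-1+\eps}$ I get the same bound for the corners $B_i,D_i$. Orthonormality of the last $N$ columns of $O_1$ gives $B_1^*B_1+D_1^*D_1=\1_N$, so $\sigma_{\min}(B_1)^2=1-\|D_1\|_2^2$; since $D_1$ is an $N\times N$ corner of a Haar matrix on $SO(\T)$ with $\T/N\ge k+1+C_1$ bounded away from $2$, the squared singular values of $D_1$ form a Jacobi-type $(\beta=1)$ ensemble with bulk in $[\lambda_-,\lambda_+]\subset(0,1)$, and the large-deviation bound for its upper edge (cf.\ \citet[Section 2.6.2]{anderson2010introduction}) yields $\|D_1\|_2\le 1-\delta$ w.h.p.\ for some $\delta>0$. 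Hence $\|(\1_N+D_1)^{-1}\|_2$ is bounded above and $\sigma_{\min}(B_1)$ is bounded below w.h.p., and the same holds for $D_2,B_2$ by closeness of $O_1$ and $O_2$. Consequently $\sigma_{\min}(W_j)$ is bounded below; propagating the corner bounds through the inverse (using $\|B_2\|_2\le 1$ and $\|(\1_N+D_1)^{-1}-(\1_N+D_2)^{-1}\|_2\le\|(\1_N+D_1)^{-1}\|_2\,\|D_1-D_2\|_2\,\|(\1_N+D_2)^{-1}\|_2$) gives $\|W_1-W_2\|_2=O(N^{-1+\eps})$, and the standard perturbation estimate for projectors onto column spans of full-rank matrices yields $\|Q_1-Q_2\|_2=O\bigl(\sigma_{\min}(W_1)^{-1}\|W_1-W_2\|_2\bigr)=O(N^{-1+\eps})$. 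Since $\eps$ is arbitrary, running the argument with $\eps/2$ and absorbing the implied constant for large $N$ upgrades $O(N^{-1+\eps/2})$ to a genuine bound $<N^{-1+\eps}$.

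I expect the only non-routine step to be the control of $\|D_1\|_2$ away from $1$ --- equivalently, of $\sigma_{\min}(B_1)$ away from $0$ and of $\|(\1_N+D_1)^{-1}\|_2$ from above --- since this is exactly where the hypothesis $\T/N\ge k+1+C_1>2$ enters, through the fact that the edge of the corresponding truncated-Haar (Jacobi) spectrum stays strictly inside $(0,1)$. Everything else --- pushing a spectral-norm bound through a matrix inverse, a projector, and Kato's rotation formula --- is standard linear algebra; and the complex ($\beta=2$) variant needed for Remark \ref{Remark_map_continuity_complex} goes through verbatim with unitary matrices in place of orthogonal ones.
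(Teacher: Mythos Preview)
Your proposal is correct and takes a genuinely different route from the paper. Both arguments share the same architecture: build $U_1$ from $W_1=B_1(\1_N+D_1)^{-1}$ alone by some Gram--Schmidt procedure (where no quantitative control is needed), and then obtain $U_2$ as a small orthogonal correction of $U_1$ governed by the closeness of the two column spans. The paper implements the correction by first orthonormalizing $W_i$ to $X_i$, passing to the $U_1$--basis so that $X_1$ becomes $\binom{\1_N}{0}$ and $X_2$ is $N^{-1+\eps}$--close to it, and then running an explicit block Cholesky on $F=U_1^{-1}X_2$ to manufacture an orthogonal $\tilde U_2$ with $\|\tilde U_2-\1\|_2<N^{-1+\eps}$ and setting $U_2=U_1\tilde U_2$. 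You instead invoke Kato's direct rotation $V=[Q_2Q_1+(\1-Q_2)(\1-Q_1)][\1-(Q_1-Q_2)^2]^{-1/2}$ between the projectors $Q_j$, and reduce everything to the standard projector perturbation bound $\|Q_1-Q_2\|_2=O(\sigma_{\min}(W_1)^{-1}\|W_1-W_2\|_2)$. Your route is shorter and leans on off-the-shelf subspace perturbation theory; the paper's is more self-contained and avoids citing the Kato formula. The one substantive analytic input---that the Jacobi law of the corner forces $\|D_1\|_2\le 1-\delta$ (equivalently $\sigma_{\min}(B_1)\ge\delta'$) with high probability under the standing ratio assumption---is the same in both, and you identify it correctly. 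Your observation that $U_1$ need not depend continuously on $W_1$ because all quantitative information sits in $V$ is exactly the point the paper makes (in different language) when it explains why one cannot simply repeat the $U_1$ construction for $U_2$.
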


\begin{proof}[Proof of Proposition \ref{Proposition_map_continuity}]
	The proof is induction in $k$ with the base case $k=1$ proven in \cite{BG}, see the continuity of $M(Z)$ in the proof of Proposition 13 there. For the induction step we recycle the ideas in the proof of Theorem \ref{Theorem_Var_k}.
	
	First, recall a property of function $f$, which we established in Steps 1 and 2 of Theorem \ref{Theorem_Var_k}: if $U$ is a $\T\times\T$ orthogonal matrix, then
	\begin{equation}
	\label{eq_f_conjugation}
	f^{k,N,\T;U \mathcal V}(O)= f^{k,N,\T;\mathcal V}(U^*OU).
	\end{equation}
	Note that conjugations (by the same orthogonal matrix for $O_1$ and $O_2$) leave the three conditions of Proposition \ref{Proposition_map_continuity} unchanged, hence, the \eqref{eq_f_conjugation} implies that statement of proposition remains the same for any choice $\mathcal V$.
	
	Second, we make the replacements of Steps 2 and 3 of Theorem \ref{Theorem_Var_k} individually for $f^{k,N,\T;\mathcal V}(O_1)$ and $f^{k,N,\T;\mathcal V}(O_2)$. The replacement $\mathcal V\mapsto O^{k-1} \mathcal V$ does not change the eigenvalues of $P_1 P_2 P_1$, while inversion of $O_1$ and $O_2$ keeps the conditions of Proposition \ref{Proposition_map_continuity} unchanged. Summing up, we replace the map $O\mapsto f^{k,N,\T; \mathcal V}(O)$ in Proposition \ref{Proposition_map_continuity} by a new map $O\mapsto \tilde f^{k,N,\T; \mathcal V_0}(O)$ defined through: Let $\mathcal V_0$ be the subspace spanned by the last $N$ coordinate vectors in $\T$--dimensional space. Let $P$ be the orthogonal projector on the space orthogonal to $\mathcal V, O \mathcal V$, $O^2 \mathcal V$,\dots, $O^{k-2} \mathcal V$. Let $P_1$ be the projector on the
	subspace  $P O^{k-1} \mathcal V$ and $P_2$ be the projector on the subspace $P (\1_{\T}+O) \mathcal V$ (or, equivalently, on $P O (\1_\T+O)\mathcal V$). Then $\tilde f^{k,N,\T;\mathcal V_0}(O)$ is $N$ largest eigenvalues of  $P_1P_2P_1$.

	Using the block notations \eqref{eq_block_notations}, steps 4-7 in the proof of Theorem \ref{Theorem_Var_k} imply the following almost sure identities:
	\begin{equation}
	\label{eq_x1}
	\tilde f^{k,N,\T;\mathcal V_0}(O_1)=\tilde f^{k-1,N,\T-N;\langle B_1\rangle }(A_1-B_1(\1_{N}+D_1)^{-1} C_1),
	\end{equation}
	\begin{equation}
	\label{eq_x2}
	\tilde f^{k,N,\T;\mathcal V_0}(O_2)=\tilde f^{k-1,N,\T-N;\langle B_2\rangle }(A_2-B_2(\1_{N}+D_2)^{-1} C_2).
	\end{equation}
	We would like to check that the right-hand sides of \eqref{eq_x1} and \eqref{eq_x2} are close by using the induction assumption.
	
	Using \eqref{eq_f_conjugation} and Lemma \ref{Lemma_match_spaces}, we rewrite the right-hand sides of \eqref{eq_x1} and \eqref{eq_x2} as:
	\begin{equation}
	\label{eq_x3}
	\tilde f^{k-1,N,\T-N;\tilde{\mathcal  V}_0}\bigl( U_1^*(A_1-B_1(\1_{N}+D_1)^{-1} C_1) U_1\bigr); \quad \quad \tilde f^{k-1,N,\T-N;\tilde{\mathcal  V}_0}\bigl( U_2^*(A_2-B_2(\1_{N}+D_2)^{-1} C_2) U_2\bigr).
	\end{equation}
	Let us check that we can apply the induction assumption to deduce that the expressions of \eqref{eq_x3} are close to each other:
	\begin{itemize}
		\item By Lemma \ref{Lemma_Haar_project},  $A_1-B_1(\1_{N}+D_1)^{-1} C_1$ is a uniformly random $(\T-N)\times(\T-N)$ orthogonal matrix. By Lemma \ref{Lemma_match_spaces}, $U_1$ is a function of $B_1(\1_N+D_1)^{-1}$. Hence, using Lemma \ref{Lemma_Haar_project} again, we conclude that $U_1$ is independent from $A_1-B_1(\1_{N}+D_1)^{-1} C_1$. Therefore,  $U_1^*(A_1-B_1(\1_{N}+D_1)^{-1} C_1) U_1$ is a uniformly random orthogonal matrix, as desired.
		\item By Lemma \ref{Lemma_no_eigenvalues},  $(\1_{N}+D_2)^{-1}$ is well-defined and no eigenvalues of $A_2-B_2(\1_{N}+D_2)^{-1} C_2$ are equal to $-1$. Hence, the eigenvalues of $U_2^*(A_2-B_2(\1_{N}+D_2)^{-1} C_2) U_2$ are almost surely different from $-1$.
		
		\item Combining Lemmas \ref{Lemma_match_projections} and  \ref{Lemma_match_spaces} we conclude that
		\begin{multline*}
		\lim_{N\to\infty} {\rm Prob} \left( \| U_1^*(A_1-B_1(\1_{N}+D_1)^{-1} C_1)U_1-U_2^*(A_2-B_2(\1_{N}+D_2)^{-1} C_2)U_2\|_2< \frac{1}{N^{1-\eps}}  \right)\\=1.
		\end{multline*}
	\end{itemize}
	Hence, using the $(k-1)$ statement, the expressions in \eqref{eq_x3} are close to each other as $N\to\infty$ and, therefore, \eqref{eq_x1} is close to \eqref{eq_x2}.
\end{proof}

\begin{proof}[Proof of Lemma \ref{Lemma_no_eigenvalues}]
	Let us show that $D$ has no eigenvalues $-1$. We argue by contradiction and assume that there exists an $N$--dimensional vector $v$ of length $1$ such that $Dv=-v$. Note that $B^*B+D^*D=\1_N$ by orthogonality of $O$. Hence, using the notation $\langle \cdot,\cdot\rangle$ for the scalar product, we have
	$$
	\langle Bv, Bv \rangle=   \langle B^*Bv, v \rangle=  \langle (\1_N-D^* D)v, v \rangle=\langle v,v\rangle -\langle Dv, Dv\rangle=1-1=0.
	$$
	Therefore, $Bv=0$, which readily implies that the $\T$--dimensional vector ${0\choose v}$ is an eigenvector of $O$ with eigenvalue $-1$. Contradiction.
	
	Next, for the matrix $A-B(\1_{N}+D)^{-1} C$, let us use its representation as a Cayley transform developed in \eqref{eq_tO_tR}:
	$$
	A-B(\1_{N}+D)^{-1} C=\frac{\1_{\T-N}-\R}{\1_{\T-N}+\R},
	$$
	where $\R$ is a $(\T-N)\times(\T-N)$ skew-symmetric matrix. If $v$ was an eigenvector of $A-B(\1_{N}+D)^{-1} C$ with eigenvalue $-1$, then we would have
	$$
	(\1_{\T-N}-\R)v=-(\1_{\T-N}+\R)v,
	$$
	which is impossible for non-zero $v$.
\end{proof}

\begin{proof}[Proof of Lemma \ref{Lemma_match_projections}]
	Note that whenever $X$ is a submatrix of $Y$, we have $\|X\|_2\le \|Y\|_2$. Hence, the spectral norms of the differences $A_1-A_2$, $B_1-B_2$, $C_1-C_2$, $D_1-D_2$ are all small with probability tending to $1$ as $N\to\infty$. Addition, multiplication, and inversion of matrices are all Lipschitz operations as long as  factors are bounded for the multiplication and singular values are bounded away from $0$ for the inversion.  Therefore, it remains to show that the norms of the factors $B_1$, $(\1_N+D_1)^{-1}$, and $C_1$ are uniformly bounded (since $B_1$, $(\1_N+D_1)^{-1}$, and $C_1$ are close to $B_2$, $(\1_N+D_2)^{-1}$, and $C_2$, respectively, the norms of the latter are then going to be bounded as well). For $B_1$ and $C_1$ the bound on the norm is straightforward, as they are submatrices of $O_1$, whose norm is $1$. Hence, $\|B_1\|_2\le 1$ and $\|C_1\|_2\le 1$.
	
	In order to deal with $(\1_N+D_1)^{-1}$ we rely on the fact that the distribution of the symmetric $N\times N$ matrix $Y=D_1^* D_1$ is explicit. It has density (see, e.g., \cite[(3.113) and the formula immediately after]{forrest}) proportional to:
	\begin{equation}
	\label{eq_corner_distribution}
	\det Y^{-1/2} \det(\1_N-Y)^{\frac{\T}{2}-N-1/2}\, d Y, \qquad 0< Y< \1_N.
	\end{equation}
	This is a particular case of the Jacobi ensemble of Definition \ref{Definition_Jacobi} and we can use the large $N$ asymptotic of the latter recorded in Proposition \ref{Theorem_Jacobi_as}. Therefore, there exists a constant $0<c<1$, such that all the eigenvalues of $Y$ are smaller than $c$ with probability tending to $1$ as $N\to\infty$. Hence, by the triangular inequality
	$$
	\min_{\|v\|_2=1} \|(\1_N+D_1)v\|_2\ge 1-\sqrt{c},
	$$
	with probability tending to $1$ as $N\to\infty$. We conclude that
	$$
	\lim_{N\to\infty} {\rm Prob}\left(
	\|(\1_N+D_1)^{-1}\|<\frac{1}{1-\sqrt{c}}\right)=1. \qedhere.
	$$
\end{proof}

\begin{proof}[Proof of Lemma \ref{Lemma_match_spaces}]
	We will be proving a slightly different statement, in which $\tilde {\mathcal V}_0$ is the span of the first (rather than last) coordinate vectors. The desired statement of the theorem is then obtained by replacing $
	U_1\mapsto U_1 \cdot \mathfrak S$, and $U_2\mapsto U_2 \cdot \mathfrak S$,
	where $\mathfrak S$ is the (orthogonal matrix) which swaps $i$th and $(\T-N+1-i)$th basis vectors for $i=1,2,\dots,\T-N$.

	We know that the matrices $B_1$ and $B_2$ are close to each other and our aim is to show that the orthonormal bases of $\langle B_1\rangle =\langle B_1 (\1_N+D_1)^{-1}\rangle$ and its orthogonal complement, and $\langle B_2\rangle =\langle B_2 (\1_N+D_2)^{-1}\rangle$ and its orthogonal complement can be chosen to also be close to each other. For that we need to produce some formulas for these bases, which is what we do in the rest of the proof. The delicacy of this argument stems from the fact that given a space, in general, there might be no continuous way to produce an orthogonal matrix, such that the space is spanned by its first columns. (For instance, by the hairy ball theorem one can not continuously complement a unit vector in $3$--dimensional space to an orthonormal basis.) Hence, we need to be more careful.
	
	We start by replacing $B_1$ with
	$$X_1:= B_1 (\1_N+D_1)^{-1} \bigl((\1_N+D_1^*)^{-1} B_1^* B_1 (\1_N+D_1)^{-1}\bigr)^{-1/2}$$
	and replacing $B_2$ with
	$$
	X_2:= B_2 (\1_N+D_2)^{-1} \bigl((\1_N+D_2^*)^{-1} B_2^* B_2 (\1_N+D_1)^{-1}\bigr)^{-1/2}.
	$$
	Clearly, $\langle B_1\rangle=\langle X_1\rangle$ and $\langle B_2\rangle=\langle X_2\rangle$. The advantage of $X_1$ and $X_2$ is that their columns are orthonormal. Indeed,
	\begin{multline*}
	X_1^* X_1=\bigl((\1_N+D_1^*)^{-1} B_1^* B_1 (\1_N+D_1)^{-1}\bigr)^{-1/2}(\1_N+D_1^*)^{-1}  B_1^* B_1 (\1_N+D_1)^{-1}\\ \times \bigl((\1_N+D_1^*)^{-1} B_1^* B_1 (\1_N+D_1)^{-1}\bigr)^{-1/2}=\1_N
	\end{multline*}
	and similarly for $X_2$.
	
	{\bf Claim.} $X_1$ and $X_2$ are asymptotically close to each other:
	\begin{equation}
	\label{eq_X_close}
	\lim_{N\to\infty} {\rm Prob} \left( \|X_1-X_2\|_2< \frac{1}{N^{1-\eps}}  \right)=1.
	\end{equation}
	Note that $X_1$ and $X_2$ are built out of $O_1$ and $O_2$ with operations of addition, multiplication,  inversion, and square root. The first one is Lipschitz in spectral norm, the second one is Lipschitz as long as the factors are uniformly bounded, and for the last two we additionally need the singular values of the factors to be uniformly bounded away from $0$ uniformly\footnote{For the square root operation on positive-definite matrices $x\mapsto \sqrt{x}$ we can first rescale $x$ so that its spectrum belongs to $[c_0,1]$ segment for some $c_0>0$ and then use Taylor series expansion of the square root:
		$
		\sqrt{x}=\sqrt{1+ (x-1)}= 1 + \frac{x-1}{2}- \frac{1}{4} (x-1)^2+\dots
		$
		to deduce the Lipschitz property.
	}. We already explained in the proof of Lemma \ref{Lemma_match_projections} that $B_1$ has spectral norm at most $1$ and that $(\1_N+D_1)$ (and hence also its inverse and its transpose) has singular values bounded away from $0$ and $\infty$. Hence, it remains only to deal with $B_1^* B_1$ in the definition of $X_1$. Since $B_1$ is a $(\T-N)\times N$ submatrix of uniformly random $\T\times \T$ matrix, the law of $\Lambda=B_1^* B_1$ is explicit.
	It has density (see, e.g., \cite[(3.113) and the formula immediately after]{forrest}) proportional to:
	\begin{equation}
	%\label{eq_corner_distribution}
	\det \Lambda^{\frac{\T}{2}-N-1/2} \det(\1_N-\Lambda)^{-1/2}\, d \Lambda, \qquad 0< \Lambda< \1_N.
	\end{equation}
	This is a particular case of the Jacobi ensemble of Definition \ref{Definition_Jacobi} and we can use the large $N$ asymptotic of the latter recorded in Proposition \ref{Theorem_Jacobi_as}, which implies that the eigenvalues of $\Lambda$ are bounded away from $0$ as $N\to\infty$. The claim is proven.
	
	\smallskip

	Next, we produce the desired orthogonal matrix $U_1$ by the Gramm-Schmidt orthogonalization procedure: letting $e_k$ be the $k$--th coordinate vector in $(\T-N)$--dimensional space, and $X_1^k$ be the $k$--th column of $X_1$, we start from $(\T-N)$ vectors
	$$X_1^1,X_1^2,\dots, X_1^N,\, e_{N+1},e_{N+2},\dots, e_{\T-N}$$
	and orthogonalize them. This is a valid procedure, since the Gramm matrix of the above vectors is almost surely non-degenerate (this is equivalent to the non-degeneracy of the top $N\times N$ corner of $X_1$, which is true due to absolute continuity of the distribution of this corner with respect to the Lebesgue measure on $N\times N$ matrices that can be deduced from Remark \ref{Remark_off_diag_corner}).
	
	We set the columns of $U_1$ to be the vectors from the orthogonalization procedure. Since the vectors are orthonormal, $U_1$ is orthogonal. Note that since the columns of $X_1$ are orthonormal, the first $N$ steps of the orthogonalization procedure are trivial and the first $N$ columns of $U_1$ are $X_1^1,X_1^2,\dots, X_1^N$. In particular, these $N$ columns span $\langle B_1\rangle$, as desired.
	
	We proceed to the construction of $U_2$. It is tempting to do exactly the same procedure (with all indices $1$ replaced by indices $2$), but that is not going to work: the problem is that while the top $N\times N$ corner of $X_1$ was almost surely non-degenerate, but it can have singular values arbitrary close to $0$. Eventually, this leads to unstability of the orthogonalization procedure and, hence, there is no way to guarantee that the results of orthogonalization for $X_1$ and $X_2$ are close to each other.
	
	Therefore, we proceed in a different way. Set $F:= U_1^{-1} X_2$. Because the first $N$ columns of $U_1$ are $X_1$, we have
	$$
	U_1^{-1} X_1={\1_{N}\choose 0_{(\T-2N)\times N}},
	$$
	where $0_{(\T-2N)\times N}$ stays for the $(\T-2N)\times N$ filled with $0$ matrix elements. Hence, since $X_1$ and $X_2$ were close, we have
	\begin{equation}
	\label{eq_close_to_identity}
	\lim_{N\to\infty} {\rm Prob} \left( \left\|F-{ \1_{N}\choose 0_{(\T-2N)\times N}}\right\|_2< \frac{1}{N^{1-\eps}}  \right)=1.
	\end{equation}
	Let $F^k$, $k=1,\dots,N$, denote the columns of $F$ and consider $\T-N$ vectors
	$$
	F^1,F^2,\dots,F^N,\, e_{N+1},e_{N+2},\dots,e_{\T-N}.
	$$
	We are going to orthogonalize these vectors. The advantage over the procedure we used for $X_1$ is that now the top $N\times N$ submatrix of $F$ is close to identity, which is going to make the orthogonalization procedure well-behaved. In order to make the orthogonalization procedure explicit, we are going to use a block version of the Cholesky decomposition.
	
	For that set $M=\T-2N$ and write $F$ in the block form according to the splitting $\T-N= N+M$:
	$$
	F={Y_2\choose Z_2}.
	$$
	Let $W_2$ denote the $(\T-N)\times(\T-N)$ matrix written in the $N+M$ block form as
	$$
	W_2=\begin{pmatrix} Y_2&0 \\  Z_2& \1_M\end{pmatrix}.
	$$
	We would like to perform orthogonalization of the columns of $W_2$. For that we first compute
	\begin{equation}
	\label{eq_WW}
	W_2^*W_2=\begin{pmatrix} Y_2^* Y_2+Z_2^*Z_2 & Z_2^*\\ Z_2& \1_M \end{pmatrix}.
	\end{equation}
	We further would like to represent $W_2^* W_2$ as
	\begin{equation}
	\label{eq_Cholesky}
	W_2^* W_2= \begin{pmatrix} \1_N& 0 \\ Q_2 & \1_M\end{pmatrix} \begin{pmatrix} G_2 & 0\\ 0 & H_2\end{pmatrix}  \begin{pmatrix} \1_N& Q_2^* \\ 0 & \1_M\end{pmatrix}=\begin{pmatrix} G_2& G_2Q_2^*\\ Q_2 G_2 & Q_2 G_2 Q_2^* + H_2\end{pmatrix}.
	\end{equation}
	Comparing with \eqref{eq_WW} we conclude that
	\begin{equation}
	\label{eq_Ch_formulas}
	G_2= Y_2^* Y_2+Z_2^*Z_2=F^* F, \quad  Q_2= Z_2 (F^* F)^{-1}, \quad H_2= \1_M- Z_2 (F^* F)^{-1} Z_2^*.
	\end{equation}
	Since the spectral norm of a submartix is at most the spectral norm of the matrix, \eqref{eq_close_to_identity} implies that
	\begin{equation}
	\label{eq_x7}
	\lim_{N\to\infty} {\rm Prob} \left( \left\|Y_2-\1_{N}\right\|_2< \frac{1}{N^{1-\eps}}  \right)=1, \qquad \lim_{N\to\infty} {\rm Prob} \left( \left\|Z_2-0_{M\times N}\right\|_2< \frac{1}{N^{1-\eps}}  \right)=1.
	\end{equation}
	Therefore, $W_2$ is close to $\1_{M+N}$, $G_2$ is close to $\1_N$, $Q_2$ is close to $0_{N\times M}$, $H_2$ is close to $\1_{M}$.

	Note that $G_2$ and $H_2$ are positive-definite symmetric matrices, hence, they have well-defined square roots. In addition,
	$$
	\begin{pmatrix} \1_N& Q_2^* \\ 0 & \1_M\end{pmatrix}^{-1}= \begin{pmatrix} \1_N& -Q_2^* \\ 0 & \1_M\end{pmatrix}.
	$$
	The goal of all these manipulations with matrices is to define
	\begin{equation}
	\label{eq_tU_def}
	\tilde U_2=W_2 \cdot \begin{pmatrix} \1_N& -Q_2^* \\ 0 & \1_M\end{pmatrix} \cdot  \begin{pmatrix} G_2^{-1/2} & 0\\ 0 & H_2^{-1/2}\end{pmatrix}.
	\end{equation}
	The two key properties of $\tilde U_2$ are:
	\begin{itemize}
		\item The span of the first $N$ columns of $\tilde U_2$ coincides with $\langle F\rangle$.
		\item $\tilde U_2$ is orthogonal. Indeed, using \eqref{eq_Cholesky} we have
		$$
		\tilde U_2 \tilde U_2^*= W_2  \begin{pmatrix} \1_N& -Q_2^* \\ 0 & \1_M\end{pmatrix} \begin{pmatrix} G_2^{-1} & 0\\ 0 & H_2^{-1}\end{pmatrix} \begin{pmatrix} \1_N& 0\\ -Q_2 & \1_M\end{pmatrix} W_2^* = W_2 ( W_2^* W_2)^{-1} W_2^*=\1_{\T-N}.
		$$
	\end{itemize}
	Hence, we can finally set
	$$
	U_2:=  U_1 \cdot  \tilde U_2,
	$$
	We have
	$$
	U_2 \tilde {\mathcal V}_0=\langle U_1 F \rangle=\langle X_2\rangle=\langle B_2\rangle
	,$$
	as desired. It remains to show that the matrix $\tilde U_2$ is very close to identity, as this would imply that $U_2$ is close to $U_1$. For that we consider each factor in \eqref{eq_tU_def} and see that they are close to identical matrices by \eqref{eq_x7}. Hence,
	$$
	\lim_{N\to\infty} {\rm Prob} \left( \left\|\tilde U_2-\1_{N+M}\right\|_2< \frac{1}{N^{1-\eps}}  \right)=1,
	$$
	as desired.
\end{proof}

\begin{proof}[Proof of Theorem \ref{Theorem_vark_approximation}]
	We start by explicitly constructing the desired coupling. For the Jacobi ensemble we use the realization of Theorem \ref{Theorem_Var_k} and for the matrix of the Johansen test we use the realization of Proposition \ref{Prop_gaussian_rotation}. We set $\T=T-1$ to match the notations and it remains to couple $O$ of Theorem \ref{Theorem_Var_k} with $\tilde L=-\tilde O L_V\tilde O^*$ of Proposition \ref{Prop_gaussian_rotation}.

	The eigenvalues of $L_V$ are all roots of unity of order $T$ different from $1$. In the complex case $\beta=2$ we can diagonalize $L_V$ to turn it into $\T\times \T = (T-1)\times (T-1)$ diagonal matrix with the roots of unity on the diagonal. In the real case $\beta=1$, the matrix $L_V$ should be  block-diagonalized (with blocks of size $2$ and one additional block of size $1$ corresponding to eigenvalue $-1$ if $\T$ is even): the pair of complex conjugate roots of unity $\omega $ and $\bar \omega$ gives rise to the $2\times 2$ matrix of rotation by the angle $|\arg(\omega)|$. Let us denote by $D$ the resulting (block) diagonal matrix multiplied by $-1$. In order to avoid ambiguity about the order of eigenvalues, we assume that the blocks correspond to the increasing order of $|\arg(-\omega)|$, i.e., the top-left $2\times 2$ corner of $D$ corresponds to the pair of the closest to $1$ eigenvalues of $D$.
	
	The eigenvalues of $O$ also lie on the unit circle and if $\beta=1$, then they come in complex-conjugate pairs. Hence, $O$ can be similarly block-diagonalized (we do not need to multiply by $-1$ this time) and we denote through $D^{\text{rand}}$ the result. The distinction with $L_V$ is that the eigenvalues are \emph{random} and so is $D^{\text{rand}}$. The law of the eigenvalues of $O$ is explicitly known in the random-matrix literature. Both for $\beta=1$ and $\beta=2$ they form a determinantal point process on the unit circle with explicit kernel. The repulsion between the eigenvalues leads to them being very close to evenly spaced as $T\to\infty$. We summarize this property in the following statement (which is a manifestation of a more general rigidity of eigenvalues, see, e.g., \citet{ErdosYau}), whose proof can be found in \citet[Lemma 10, $m=1$, $u=T^{\delta}$ case, and Section 5]{MeckesMeckes}.
	
	{\bf Claim.} There exist constants $c_1(\beta),c_2(\beta)>0$, such that for $\beta=1,2$, every $\delta>0$, there exists $\T_0(\delta)$ and for every  $\T>\T_0(\delta)$ we have \footnote{All the constants can be made explicit, following \citet{MeckesMeckes}.}
	\begin{equation}
	\label{eq_MM_bound}
	\mathrm{Prob}\left(\max_{1\le i,j<\T} \bigl| D-D^{\text{rand}}\bigr|_{ij} > \frac{1}{\T^{1-\delta}}\right) < c_1(\beta) \cdot \T \cdot \exp\left( - c_2(\beta) \frac{\T^{2\delta}}{\log \T} \right).
	\end{equation}
	
	\medskip
	
	We remark that since $D$ and $D^{\text{rand}}$ are block-diagonal, the bound on the maximum matrix element of their difference is equivalent to a similar bound for any other norm, e.g., for the spectral norm, which we used in Proposition \ref{Proposition_map_continuity}.
	
	We now choose another $\T\times \T$ uniformly-random orthogonal (or unitary if $\beta=2$) matrix $O_2$ (independent from the rest), replace $-\tilde O L_V \tilde O^*$ with $O_2 D O_2^*$ and replace $O$ with $O_2 D^{\text{rand}} O_2^*$. The invariance of the uniform measure on the orthogonal group $SO(N)$ (or on the unitary group $U(N)$ if $\beta=2$) with respect to right/left multiplications, implies the distributional identities:
	$$
	-\tilde O L_V \tilde O^* \stackrel{d}{=} O_2 D O_2^*,\qquad O\stackrel{d}{=} O_2 D^{\text{rand}} O_2^*.
	$$
	The right-hand sides of the identities provide the desired coupling and \eqref{eq_MM_bound} implies that these two random matrices are close to each other as $\T\to\infty$.
	
	It now remains to apply Proposition \ref{Proposition_map_continuity} (see also Remark \ref{Remark_map_continuity_complex}) with the first matrix being $O_2 D^{\text{rand}} O_2^*$ and the second matrix being $O_2 D O_2^*$.
\end{proof}

%\section{Conclusion} \label{Section_conclusion}

\subsection{Small rank perturbations}

\label{Section_small_rank}

In this section we prove Theorem \ref{Theorem_empirical} by combining Theorem \ref{Theorem_vark_approximation} with Proposition \ref{Theorem_Jacobi_as} and general statements about small rank perturbations. The key step of the proof is the following observation:

\begin{theorem} Let $R_i$, $i=0,k$ be as in Section \ref{Section_setting} for $X_t$ solving Eq.~\eqref{var_k} and let  $\tilde R_{i}$, $i=0,k$ be as in Section \ref{Section_modified_test} under $\widehat H_0$ for  $X_t$ solving Eq.~\eqref{eq_H0}. Suppose that $X_0$ and the noises $\eps_t$ used in the constructions of $R_{i}$ and $\tilde R_{i}$ are the same. Introduce $T\times T$ projection matrices:
	\begin{equation}
	\label{eq_x9}
	P_0= R_0^* (R_0 R_0^*)^{-1} R_0, \quad   P_k= R_k^* (R_k R_k^*)^{-1} R_k, \quad  \tilde P_0=\tilde  R_0^* (\tilde R_0 \tilde R_0^*)^{-1} \tilde R_0, \quad   \tilde P_k= \tilde R_k^* (\tilde R_k \tilde R_k^*)^{-1} \tilde R_k.
	\end{equation}
	Then under the assumptions \eqref{eq_limit_regime}, \eqref{eq_rank_restriction} of Theorem \ref{Theorem_empirical} we have
	\begin{equation}
	\label{eq_rank_diff}
	\lim_{N\to\infty} \frac{1}{N} \rank\left( P_0 P_k P_0-\tilde P_0 \tilde P_k \tilde P_0  \right)=0.
	\end{equation}
\end{theorem}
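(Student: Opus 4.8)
The plan is to express every residual matrix as the noise matrix $\eps$ times a deterministic operator and then to lean on two elementary facts. The first is rank subadditivity: $\rank(A+B)\le\rank A+\rank B$ and $\rank(AG)\le\min(\rank A,\rank G)$. The second is \emph{row-space stability}: if $A$ and $B$ have the same number of columns, then the orthogonal projectors onto $\mathrm{row}(A)$ and $\mathrm{row}(B)$ differ in rank by at most $2\rank(A-B)$; indeed, the kernel of the difference of these two projectors contains both $\mathrm{row}(A)\cap\mathrm{row}(B)$ and $(\mathrm{row}(A)+\mathrm{row}(B))^{\bot}$, so the rank of the difference is at most $\dim(\mathrm{row}(A)+\mathrm{row}(B))-\dim(\mathrm{row}(A)\cap\mathrm{row}(B))$, and this is at most $2\rank(A-B)$. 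Granting these, I would first telescope
\begin{equation*}
 P_0P_kP_0-\tilde P_0\tilde P_k\tilde P_0=(P_0-\tilde P_0)P_kP_0+\tilde P_0(P_k-\tilde P_k)P_0+\tilde P_0\tilde P_k(P_0-\tilde P_0),
\end{equation*}
so that $\rank(P_0P_kP_0-\tilde P_0\tilde P_k\tilde P_0)\le 2\rank(P_0-\tilde P_0)+\rank(P_k-\tilde P_k)$; since $P_0,\tilde P_0$ are the orthogonal projectors onto $\mathrm{row}(R_0),\mathrm{row}(\tilde R_0)$ (and likewise $P_k,\tilde P_k$ for $\mathrm{row}(R_k),\mathrm{row}(\tilde R_k)$), row-space stability reduces the claim to $\rank(R_0-\tilde R_0)=o(N)$ and $\rank(R_k-\tilde R_k)=o(N)$.

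Next I would use the rank restriction \eqref{eq_rank_restriction} to decompose the data. Write $\Delta X^{(1)}$ for the $N\times T$ matrix of increments of the process solving \eqref{var_k}; the process solving \eqref{eq_H0} has increments $\Delta X^{(2)}=\mu\mathbf 1^{*}+\eps$, and by assumption both are built from the \emph{same} $\eps$ and $X_0$. Iterating \eqref{var_k} in matrix form gives $\Delta X^{(1)}-\eps=\sum_{i=1}^{k-1}\Gamma_i(\Delta X^{(1)})S_i+\Pi\,X^{(1)}_{\mathrm{past}}+\Phi D+B$, where the $S_i$ are deterministic shift matrices, $X^{(1)}_{\mathrm{past}}$ is the matrix of past levels, and $B$ absorbs the $O(k)$ boundary terms coming from the fixed initial conditions; hence $\rank(\Delta X^{(1)}-\eps)\le\rank\Pi+\sum_i\rank\Gamma_i+d_D+O(k)=o(N)$ by \eqref{eq_rank_restriction}, while $\Delta X^{(2)}-\eps=\mu\mathbf 1^{*}$ has rank $\le1$. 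Summing increments, $X^{(1)}_{\mathrm{past}}=\eps\,M_0+(\text{rank }o(N))$ for a deterministic shift-and-sum matrix $M_0$.

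Now, writing the residuals through the orthogonal projectors $P_{\bot\mathcal W_J}$ and $P_{\bot\mathcal W}$ onto the complements of the regressor spans of Procedures \ref{sscc_J} and \ref{sscc_BG}, and using $\mathbf 1\in\mathcal W$ (so $\mathbf 1^{*}P_{\bot\mathcal W}=0$), one obtains $R_0=\eps\,P_{\bot\mathcal W_J}+(\text{rank }o(N))$, $\tilde R_0=\eps\,P_{\bot\mathcal W}$, $R_k=\eps\,M_0\,P_{\bot\mathcal W_J}+(\text{rank }o(N))$ and $\tilde R_k=\eps\,M_0'\,P_{\bot\mathcal W}$, where the expression of $\tilde R_k$ in terms of $\eps$ (with $M_0'$ the cyclic analogue of $M_0$) is Steps~3--4 of the proof of Proposition \ref{Prop_gaussian_rotation}. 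It then remains to check that (i) $\rank(P_{\mathcal W_J}-P_{\mathcal W})=o(N)$ and (ii) $\rank(M_0-M_0')=O(k)$. For (i): each of $\mathcal W_J$, $\mathcal W$ is spanned, up to $d_D+O(1)$ extra dimensions, by the rows of $(\Delta X^{(1)})(L^{i})^{*}$ resp.\ $(\Delta X^{(2)})(L_c^{i})^{*}$ for $1\le i\le k-1$; replacing $\Delta X^{(j)}$ by $\eps$ costs rank $o(N)$ (previous step) and replacing the cyclic lag $L_c$ by the ordinary lag $L$ costs a further $O(k)$ (their $i$-th powers differ by rank $O(k)$ since $L_c-L$ has rank one), so both spans are $o(N)$-close to the common subspace $\sum_{i}\mathrm{row}(\eps L^{i})$. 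Claim (ii) is the same computation for the summation operators (the cyclic summation operator differs from the ordinary one by rank $O(1)$). Combining (i), (ii), $\rank(\eps G)\le\rank G$ and row-space stability yields $\rank(R_0-\tilde R_0)=o(N)$ and $\rank(R_k-\tilde R_k)=o(N)$, hence \eqref{eq_rank_diff}; all quantities here are deterministic functions of $(\eps,X_0)$, and the inverses in $P_0,P_k,\tilde P_0,\tilde P_k$ exist almost surely.

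The main obstacle is not a single deep estimate but the bookkeeping: one must enumerate every discrepancy between Procedure \ref{sscc_J} applied to \eqref{var_k} and Procedure \ref{sscc_BG} applied to \eqref{eq_H0} --- the nonzero $\Gamma_i,\Pi,\Phi,D$, ordinary versus cyclic lag and summation, the de-trending, the cyclic reindexing, and the use of $X_{t-k}$ versus $\tilde X_{t-k+1\mid T}$ --- and verify that each perturbs the relevant matrices or subspaces by rank $o(N)$ (those governed by \eqref{eq_rank_restriction}) or by a fixed $O(k)$ (the purely combinatorial ones). The only input beyond linear algebra is \eqref{eq_rank_restriction}, entering solely through $\rank(\Delta X^{(1)}-\eps)=o(N)$.
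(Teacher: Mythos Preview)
Your proposal is correct and follows essentially the same approach as the paper: reduce \eqref{eq_rank_diff} to $\rank(R_i-\tilde R_i)=o(N)$, then establish the latter by tracking each discrepancy between the two procedures (nonzero $\Gamma_i,\Pi,\Phi D$, ordinary versus cyclic shifts, detrending, regressor count) and showing every one contributes only $o(N)$ or $O(k)$ to the rank. Your row-space stability lemma is a slightly cleaner route from $\rank(R_i-\tilde R_i)$ to $\rank(P_i-\tilde P_i)$ than the paper's telescoping through the inversion property $\rank(A^{-1}-B^{-1})=\rank(A-B)$, but the substance is identical.
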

\begin{proof} Throughout the proof we assume that the matrices $R_i R_i^*$ and $\tilde R_i \tilde R_i^*$ are invertible. In principle, invertibility might fail for some $N$: in such situation we can still use Moore–Penrose inverse in order for the statements to make sense, and we are not going to detail this.
	
	In the following argument we use various properties of ranks:
	\begin{itemize}
		\item If a matrix $A$ differs from a matrix $B$ only in $\mathfrak r$ columns, then $\rank(A-B)\le \mathfrak r$;
		\item $\rank(C A - CB)\le \rank(A-B)$;
		\item $\rank(A+B)\le \rank(A) + \rank(B)$;
		\item If matrices $A$ and $B$ are invertible, then $\rank(A^{-1}-B^{-1})=\rank (A-B)$.
	\end{itemize}

	We refer to the time series defined by \eqref{var_k} as $X_t$ and to the time series defined by \eqref{eq_H0} as $\mathcal X_t$. We form two $N\times T$ matrix $X$ and $\mathcal X$ with columns $X_t$ and $\mathcal  X_t$, $t=1,\dots,T$, respectively.  Our first task is to show that $\tfrac{1}{N}\rank (X-\mathcal X)\to 0$ as $N\to\infty$.
	
	For that we subtract \eqref{eq_H0} from \eqref{var_k} to get:
	\begin{equation}
	\label{eq_x4}
	\Delta X_t - \Delta \mathcal X_t = \Pi X_{t-k}+\sum\limits_{i=1}^{k-1}\Gamma_i\Delta X_{t-i}+\Phi D_t-\mu, \quad t=1,2,\dots,T.
	\end{equation}
	In the matrix form, \eqref{eq_x4} represents the $N\times T$ matrix $\Delta(X-\mathcal X)$ with columns  $\Delta X_t - \Delta \mathcal X_t$ as a sum of $k+2$ low rank matrices, with the total rank (coming from the right-hand side of \eqref{eq_x4}) at most
	\begin{equation}
	\label{eq_x5}
	\rank(\Pi)+\sum_{i=1}^{k-1} \rank(\Gamma_i)+d_D+1.
	\end{equation}
	The matrix $X-\mathcal X$ is obtained from $\Delta(X-\mathcal X)$ by multiplication by the summation matrix
	$$
	\Phi=\begin{pmatrix} 1&0&0&\dots&0\\ 1& 1 &0&\dots &0\\ 1& 1& 1 &\dots & 0 \\ && \ddots \\ 1&1 &\dots & 1 &1.\end{pmatrix}
	$$
	Hence, the rank of $X-\mathcal X$ is at most \eqref{eq_x5} and $\tfrac{1}{N}\rank (X-\mathcal X)\to 0$  by assumption \eqref{eq_rank_restriction} of Theorem \ref{Theorem_empirical}.

	Next, we should take into account that the procedures for constructing $R_0$, $R_k$ from $X$ and $\tilde R_0$, $\tilde R_k$ from $\mathcal X$ are slightly different.  Namely, the latter involves cyclic shifts of indices, rather than usual shifts, involves regressing over only constants, rather than $d_D$ deterministic terms $D_t$, and finally involves detrending \eqref{eq_detrending}. However, cyclic shifts only affect the first $k-1$ indices $t$ and, hence, lead to bounded difference in ranks, and similarly for detrending. Regressing on $d_D$ terms leads to another $O(d_D)$ difference in ranks, which is negligible after division by $N$ in the limit $N\to\infty$ by \eqref{eq_rank_restriction}. The conclusion is that $R_0$, $R_k$ from one side and $\tilde R_0$, $\tilde R_k$ on the the other side are constructed from two finite sets of matrices, which differ by small rank perturbations, by finitely many of operations of addition, multiplication, and inversion. Each of these operations preserves the smallness of the rank of perturbations and, hence,
	\begin{equation}
	\label{eq_x6}
	\lim_{N\to\infty} \frac{1}{N} \rank(R_0-\tilde R_0)=0,\quad \lim_{N\to\infty} \frac{1}{N} \rank(R_k-\tilde R_k)=0.
	\end{equation}
	Since $P_0 P_k P_0$ and $\tilde P_0 \tilde P_k \tilde P_0$ are obtained from $R_0$, $R_k$ and $\tilde R_0$, $\tilde R_k$, respectively, by the same algebraic operations, \eqref{eq_rank_diff} follows from \eqref{eq_x6}.
\end{proof}

\begin{proof}[Proof of Theorem \ref{Theorem_empirical}]  Let $\lambda_1\ge \lambda_2\ge \dots\ge \lambda_N$ denote the eigenvalues of $\mathcal C =S_{kk}^{-1} S_{k0} S_{00}^{-1} S_{0k}$ and let $\tilde \lambda_1\ge \tilde \lambda_2\ge \dots\ge \tilde \lambda_N$ denote the eigenvalues of $\tilde{\mathcal C}=\tilde S_{kk}^{-1} \tilde S_{k0} \tilde S_{00}^{-1} \tilde S_{0k}$.
	Combining Theorem \ref{Theorem_vark_approximation} with Proposition \ref{Theorem_Jacobi_as} we conclude that the empirical measure of $\tilde \lambda_i$ converges:
	\begin{equation}
	\label{eq_empirical_conv}
	\lim_{N\to\infty} \frac{1}{N}\sum_{i=1}^N \delta_{\tilde \lambda_i} = \mu_{2,\tau-k}.
	\end{equation}
	We would like to show that $\tilde \lambda_i$ can be replaced by $\lambda_i$ in \eqref{eq_empirical_conv}. Note that although the spectra of matrices $\mathcal C$ and $\tilde{\mathcal C}$ are real, but these matrices are not symmetric. Similarly to \eqref{eq_rank_diff}, $\frac{1}{N}\rank(\mathcal C-\tilde{\mathcal C})\to 0$, however, in general, for non-symmetric matrices even rank $1$ perturbations can lead to significant changes in the spectrum. Hence, we need to be more careful and symmetrize $\mathcal C$ and $\tilde {\mathcal C}$ by using projectors as in \eqref{eq_rank_diff}.
	
	Note that for any two $K\times M$ matrices $A$ and $B$, the non-zero eigenvalues of $A B^*$ and of $B^* A$ coincide. Recalling that $S_{ij}=R_i R_j^*$ and using the notations \eqref{eq_x9}, we conclude that the eigenvalues of $\mathcal C$ are the same as $N$ largest eigenvalues of $P_k P_0$. Since $P_0^2=P_0$, they are also the same as $N$ largest eigenvalues of $P_k P_0 P_0$ and the same as those of $P_0 P_k P_0$. Similarly, the eigenvalues of $\tilde{\mathcal C}$ are the same as $N$ largest eigenvalues of $\tilde P_0 \tilde P_k \tilde P_0$.
	
	Denote $\mathfrak r=\rank(P_0 P_k P_0-\tilde P_0 \tilde P_k \tilde P_0)$. All the involved matrices are symmetric and we can use classical inequalities between eigenvalues of a Hermitian matrix $A$ and Hermitian matrix $A+B$, where $B$ has rank $\mathfrak r$, see, e.g., \citet[Corollary 4.3.5]{Horn_Johnson}. In our situation the inequalities read
	\begin{equation}
	\label{eq_rank_inequality}
	\lambda_{m-\mathfrak r}\ge \tilde \lambda_m\ge \lambda_{m+\mathfrak r}, \quad 1\le m-\mathfrak r\le m+\mathfrak r\le N.
	\end{equation}
	Therefore, for any points $0<a<b<1$,
	$$
	\left|\#\{1\le i \le N \mid \lambda_i\in [a,b]\}-\#\{1\le i \le N \mid \tilde \lambda_i\in [a,b]\}\right|\le 2\mathfrak r.
	$$
	Hence, \eqref{eq_empirical_conv} implies
	\begin{equation}
	\label{eq_empirical_conv_2}
	\lim_{N\to\infty} \frac{1}{N}\sum_{i=1}^N \delta_{\lambda_i}= \mu_{2,\tau-k}.\qedhere
	\end{equation}
	
\end{proof}

\section{Appendix 2. Discussion of asymptotics under $H_0$ and $H_1$}
\label{Section_Appendix_2}

The goal of this section is to discuss the asymptotics of the test statistic $LR_{N,T}(r)$ of \eqref{LR_NT} under various data generating processes \eqref{var_k_restr} generalizing $\widehat H_0$ of \eqref{eq_strong_vark} and Theorem \ref{Theorem_J_stat}.

\subsection{Beyond $\widehat H_0$}\label{appendix_H0}
We start by working under a slightly more restrictive assumption than \eqref{eq_rank_restriction} of Theorem \ref{Theorem_empirical}. Let $\|A\|_2$ be the spectral norm of a matrix $A$.

\begin{conjecture} \label{Conjecture} Fix some $k\in\mathbb{N},\,C>0$. Suppose that the data generating process is
 \begin{equation}\label{eq_var_k_conjecture}
\Delta X_t=\mu+\sum\limits_{i=1}^{k-1}\Gamma_i\Delta X_{t-i}+\eps_t,\qquad t=1,\ldots,T,\qquad\qquad \text{where}
\end{equation}
\begin{enumerate}
 %\item The noise $\eps_t$ is Gaussian, mean zero, i.i.d.\ in time, and with non-degenerate fixed time covariance matrix $\Lambda$ satisfying $\|\Lambda\|_2<C$ and $\|\Lambda^{-1}\|_2<C$;
 \item $\eps_t\thicksim\text{i.i.d.}~\mathcal{N}(0,\Lambda)$ and the covariance matrix $\Lambda$ satisfies $\|\Lambda\|_2<C$ and $\|\Lambda^{-1}\|_2<C$;
 \item $\|\Gamma_i\|_2<C$ and $\rank(\Gamma_i)<C$ for all $1\le i \le k-1$;
 \item All roots of the following characteristic equation \eqref{eq_char_eq} satisfy\footnote{This guarantees that $\Delta X_t$ is $I(0)$ process, which is a standard assumption in the cointegration literature.} $|z|>1+C^{-1}$:
 \begin{equation}
 \label{eq_char_eq}
  \det\left(\1_N-\sum_{i=1}^{k-1} \Gamma_i  z^i\right)=0;
 \end{equation}
 \item $\|\Gamma_j \Delta X_{1-i}\|_2\le C$ and $\|\Gamma_j \mu\|\le C$ for all $1\le i,j \le k-1$.
\end{enumerate}
 Then as $T,N\to\infty$ in such a way that $\frac{T}{N}\in[k+1+C^{-1},C]$, the conclusion of Theorem  \ref{Theorem_J_stat} continues to hold with the same $c_1(N,T)$ and $c_2(N,T)$:
 	\begin{equation}
	\label{eq_statistic_limit_repeat}
	 \frac{\sum_{i=1}^{r} \ln(1-\tilde{\lambda}_i)- r \cdot c_1(N,T)}{ N^{-2/3}  c_2(N,T)}  \, \xrightarrow[T,N\to\infty]{d} \sum_{i=1}^r \aa_i.
	\end{equation}
\end{conjecture}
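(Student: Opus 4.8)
The plan is to realize the data generating process \eqref{eq_var_k_conjecture} as a \emph{bounded-rank} perturbation of the pure null \eqref{eq_H0}, re-run the reduction behind Theorem~\ref{Theorem_vark_approximation}, and then try to upgrade the bounded-rank comparison from the level of the empirical spectrum (as in the proof of Theorem~\ref{Theorem_empirical}) all the way to the top $r$ eigenvalues. First I would whiten the noise: conjugating the whole data by $A=\Lambda^{-1/2}$, i.e.\ $X_t\mapsto AX_t$ for every $t$, leaves each $\tilde\lambda_i$ unchanged (the residuals $\tilde R_i$ get left-multiplied by $A$, so the matrix $\tilde S_{k0}\tilde S_{00}^{-1}\tilde S_{0k}\tilde S_{kk}^{-1}$ is merely conjugated by $A$) and replaces $\Gamma_i\mapsto A\Gamma_i A^{-1}$, $\Lambda\mapsto\1_N$; the rank of each $\Gamma_i$ and the roots of \eqref{eq_char_eq} are invariant, and the norm bounds degrade only by the factor $\|\Lambda^{1/2}\|_2\|\Lambda^{-1/2}\|_2$, so all four hypotheses survive with a larger $C$. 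Hence we may assume $\Lambda=\1_N$.

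Write the $N\times T$ data matrix as $\Delta X=\mu\mathbf 1^*+\eps+E$, where $\mathbf 1=(1,\dots,1)$ and $E:=\Delta X-\eps-\mu\mathbf 1^*$. Directly from the defining equation, the $t$-th column of $E$ equals $\sum_{i=1}^{k-1}\Gamma_i\Delta X_{t-i}$, so $E$ has rank at most $\sum_i\rank(\Gamma_i)<(k-1)C$; assumptions (2)--(4) together with the stationarity assumption (3) ensure in addition that $E$ has the natural order of magnitude $\|E\|_2=O(\sqrt T)$ (through the geometrically decaying moving-average coefficients) rather than an explosive one. The regressor matrix $\tilde Z_1$ of Procedure~\ref{sscc_BG} has rows equal to cyclic shifts (by $1,\dots,k-1$) of the rows of $\mu\mathbf 1^*+\eps+E$ together with $\mathbf 1$; writing $W$ for its rowspace and $W^\circ$ for the corresponding rowspace built from the pure-null data $\mu\mathbf 1^*+\eps$, one has $W\subseteq W^\circ+W_E$ and $W^\circ\subseteq W+W_E$ with $\dim W_E\le(k-1)\rank(E)=O(1)$, hence $\rank(P_{\bot W}-P_{\bot W^\circ})=O(1)$. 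Since $\mathbf 1\in W$, it kills $\mu\mathbf 1^*$, so $\tilde R_0=\tilde Z_0 P_{\bot W}=\eps P_{\bot W}+EP_{\bot W}$ differs from the pure-null $\tilde R_0^\circ=\eps P_{\bot W^\circ}$ by a matrix of rank $O(1)$; running the same bookkeeping through the detrended partial sums (a fixed deterministic linear map applied to $\eps+E$ — the initial value and the linear drift it would produce are annihilated by the detrending and by $P_{\bot W}\ni\mathbf 1$), one gets $\tilde R_k=(\eps+E)\tilde\Phi^*(L_c^{k-1})^*P_{\bot W}$, which differs from $\tilde R_k^\circ=\eps\tilde\Phi^*(L_c^{k-1})^*P_{\bot W^\circ}$ by a matrix of rank $O(1)$ as well. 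Propagating these $O(1)$-rank differences through the algebraic operations defining the symmetrized matrix $\tilde M:=\tilde P_0\tilde P_k\tilde P_0$ exactly as in the last part of the proof of Theorem~\ref{Theorem_empirical} (using assumption (3) to keep $\tilde R_i\tilde R_i^*$ uniformly well-conditioned, which is possible since $T/N\ge k+1+C^{-1}$), one obtains $\rank(\tilde M-\tilde M^\circ)=O(1)$; and by Theorem~\ref{Theorem_vark_approximation} together with Proposition~\ref{Theorem_Jacobi_as} the eigenvalues of $\tilde M^\circ$ already satisfy \eqref{eq_statistic_limit_repeat} with the stated $c_1,c_2$.

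The main obstacle is the transfer of this edge behavior from $\tilde M^\circ$ to $\tilde M$. A rank-$\mathfrak r$ symmetric perturbation yields, through Weyl's inequality, only the interlacing $\tilde\lambda_{m+\mathfrak r}\le\tilde\lambda^\circ_m\le\tilde\lambda_{m-\mathfrak r}$ with $\mathfrak r=O(1)$; this recovers the convergence of the empirical measure to $\mu_{2,\tau-k}$ and forces $\tilde\lambda_{\mathfrak r+1}$ and below down to $\lambda_++o(1)$, but it leaves the top $\mathfrak r$ eigenvalues of $\tilde M$ uncontrolled from above — and this is precisely the mechanism by which a genuinely small-rank $\Pi\neq0$, or a misspecified lag order $k$, \emph{does} create outliers separating from $\lambda_+$ (cf.\ Figure~\ref{small_rk_pic} and Section~\ref{section_MC_var_order}). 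Note also that one cannot instead detour through the coupling/continuity argument of Proposition~\ref{Proposition_map_continuity}: although $E$ is of bounded rank, it has spectral norm of order $\sqrt T$, so it is not a small-norm perturbation of the orthogonal-matrix realization. To close the gap one would condition on the bounded-dimensional ``deformation data'' (the column spaces of $\Gamma_1,\dots,\Gamma_{k-1}$ and the $O(1)$-dimensional rowspaces of $E$ and of its image under the fixed detrending/cumulation/shift operator); by the rotational invariance of $\eps$ the residual randomness is still effectively Haar, so the perturbed model becomes a \emph{deterministic} bounded-rank, bounded-codimension deformation of the product-of-projectors ensemble of Theorem~\ref{Theorem_Var_k}, and one must show that for \emph{generic} such deformations — generic in the sense of not being aligned so as to inflate the correlation between the ranges of $\tilde R_0$ and $\tilde R_k$, which is exactly what $\Pi\ne0$ does — the largest $r$ eigenvalues remain pinned at the edge $\lambda_+$ with Airy$_1$ fluctuations. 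This is a sub-critical (``no BBP transition'') statement that would require a quantitative local-law/resolvent analysis of the Theorem~\ref{Theorem_Var_k} ensemble near its upper edge, which is beyond the techniques developed here; this is why \eqref{eq_statistic_limit_repeat} is stated only as a conjecture, supported by the Monte Carlo evidence of Figures~\ref{nonzeroGamma} and~\ref{nonzeroGamma_VAR3}.
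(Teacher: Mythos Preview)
Your reduction is sound and your diagnosis of the obstacle is correct: the whitening step, the identification of the data as a bounded-rank (and $O(\sqrt T)$-norm) perturbation of the pure null, the propagation of $O(1)$ rank differences to $\tilde M-\tilde M^\circ$, and the observation that Weyl interlacing alone cannot pin the top $\mathfrak r$ eigenvalues to the edge --- all of this is right, and the paper agrees that a full proof is currently out of reach. So you have correctly understood both why this is a conjecture and what the missing ingredient is.

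Where your heuristic differs from the paper's is in \emph{what evidence is offered} for the sub-critical behavior. You stop at naming the machinery that would be needed (a local law / resolvent analysis near the upper edge of the Theorem~\ref{Theorem_Var_k} ensemble, to rule out a BBP transition for generic bounded-rank deformations). The paper instead gives a concrete computation in a representative special case ($k=2$, $\Lambda=\1_N$, $\mu=0$, $\Gamma_1=\theta E_{11}$): it invokes a dichotomy, drawn from \cite{BG_CCA}, that a spike appears if and only if the sample correlation between the ``special'' vectors in the row spaces of $\tilde R_0$ and $\tilde R_k$ exceeds some $r^2_{\rm critical}$, and then explicitly computes that the correlation between the new first rows of $\tilde R_0$ and $\tilde R_k$ tends to $0$ as $T\to\infty$ (using $|\theta|<1$, i.e.\ assumption~(3)). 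An auxiliary lemma (Lemma~\ref{Lemma_projecting_two_vectors}) shows that orthogonal projection onto the independent Haar-random complement preserves the near-orthogonality. This gives positive quantitative evidence that the perturbation is sub-critical, rather than just identifying the abstract statement that would have to be proved. Your framing is cleaner and points at the right general theorem; the paper's is messier but actually moves the needle on plausibility.
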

We do not expect the conditions in Conjecture \ref{Conjecture} to be optimal. For instance, the Gaussianity assumption can likely be relaxed, as the simulations of \citet[Section 7.1]{BG}  indicate, and it is plausible that $\rank(\Gamma_i)<C$ condition can be replaced with slow growth of $\rank(\Gamma_i)$, as in Theorem \ref{Theorem_empirical}. Nevertheless, we wanted to record Conjecture \ref{Conjecture} in the present form, as a precise statement to be addressed in the future work. We are not giving a proof of Conjecture \ref{Conjecture} here: the required mathematical apparatus does not exist so far. Instead, we are going to provide a heuristic argument for its validity based on our recent results in \citet{BG_CCA} in a related, yet different setting.

\citet{BG_CCA} studied the following general setting: let $\UU$ and $\VV$ be two random linear subspaces in $S$--dimensional space with $\dim(\UU)=K$, $\dim(\VV)=M$ and all three numbers $K,M,S$ assumed to be growing to infinity. In addition, suppose that there are $\mathbbm q$ special vectors $\u_1,\dots,\u_{\mathbbm q}$ inside $\UU$ and other $\mathbbm q$ special vectors $\v_1,\dots,\v_{\mathbbm q}$ inside $\VV$, where $\mathbbm q$ is assumed to stay finite as other parameters grow. We directly observe $\UU$ and $\VV$, but not $\u_1,\dots,\u_{\mathbbm q}$ or $\v_1,\dots,\v_{\mathbbm q}$.  Can we reconstruct $\u_1,\dots,\u_{\mathbbm q}$, $\v_1,\dots,\v_{\mathbbm q}$, or at least identify their presence by looking at the squared sample canonical correlations between $\UU$ and $\VV$ and corresponding vectors?

The connection to our cointegration tests comes from taking as $\UU$ the space spanned by the $N$ rows of $\tilde{R}_0$, as defined after \eqref{res_BG}, and as $\VV$ the space spanned by the rows of $\tilde R_k$. The value ${\mathbbm q}$ corresponds to the cointegration rank and $\v_1,\dots,\v_{\mathbbm q}$ correspond to the cointegrating relationships.

While any finite $\mathbbm q$ can be analyzed in a similar fashion, let us stick to $\mathbbm q=1$ case for simplicity, so that we have a single vector $\u\in\UU$ and another vector $\v\in\VV$. The most important quantity is the sample squared correlation coefficient $r^2$ between vectors $\u$ and $\v$. It turns out that if $r^2$ is large (i.e., close to $1$, because $0\le r^2 \le 1$), then the largest canonical correlation between $\UU$ and $\VV$ is clearly separated from the rest (reminiscent of Figure \ref{small_rk_pic}) and the corresponding eigenvectors can be used to extract information on $\u$ and $\v$. On the other hand, if $r^2$ is small, then the histogram of the canonical correlations does not have such a spiked eigenvalue and all the information about $\u$ and $\v$ is washed out.
%Drawing closer analogy with our tests on the presence of cointegration, the presence of $\u$ and $\v$ can be detected (and the null hypothesis that there are no special vectors $\u$ and $\v$ can be rejected) whenever $r^2$ is larger than a critical value $r^2_{\rm{critical}}$.
\citet[Theorem 2.5, Theorem 3.2, Theorem 3.3, Theorem 3.4]{BG_CCA} proved the existence of $r^2_{\rm{critical}}\in(0,1)$ separating the above two regimes for a variety of settings for the data generating process for $\UU$, $\VV$, $\u$, and $\v$, see also \citet{bao2019canonical,yang2022limiting}. However, the results of \cite{BG_CCA} do not address the setting relevant to cointegration and further new ideas would be necessary to find the value of $r^2_{\rm{critical}}$ for cointegration or rigorously prove its existence. Nevertheless, because the cointegration testing is also based on canonical correlations, one expects that the same phenomenology is true for it and, therefore, there should be the following dichotomy:
\begin{enumerate}
 \item If the linear subspace (in $T$--dimensional space) spanned by the $N$ rows of $\tilde{R}_0$ (as defined after \eqref{res_BG}) has a special vector $\u$ and the linear subspace spanned by rows of $\tilde{R}_k$ has a special vector $\v$, such that the sample squared correlation coefficient between $\u$ and $\v$ is atypically large compared to correlation coefficients of other vectors (e.g., if it is close to $1$), then the histogram of all squared canonical correlations would have a spike as in Figure \ref{small_rk_pic} and we should be able to reject the null of no cointegration.
 \item Otherwise, there would be no spikes (e.g., as in Figure \ref{SP100pic}) and we expect validity of asymptotics as in \eqref{eq_statistic_limit} and \eqref{eq_statistic_limit_repeat} consistent with the hypothesis of no cointegration.
\end{enumerate}

We now present heuristics in favor of Conjecture \ref{Conjecture} based on this dichotomy. Some of the technical details are omitted as we try to express the key ideas instead.

\begin{proof}[Heuristics for Conjecture \ref{Conjecture}]
 For simplicity of the presentation we stick to the case $k=2$, take the covariance matrix $\Lambda$ to be identical, set $\mu=0$, and let $\Gamma_1$ to be a matrix, which has $\theta$ in the upper-left corner and $0$ everywhere else. Clearly, $\rank(\Gamma_1)=1$ and the only root of \eqref{eq_char_eq} is $1/\theta$, hence, the third condition in the statement of Conjecture \ref{Conjecture} turns into $|\theta|<1$.

 Note that if we look only at the last $(N-1)$ out of $N$ coordinates of $X_t$, then we are in the setting of Theorem \ref{Theorem_J_stat} and asymptotics \eqref{eq_statistic_limit} holds. In particular, the largest canonical correlation is not separated from the rest. Hence, we only need to investigate how the addition of the special first row changes the situation. We will rely on the above dichotomy for our assessment. There are two ways how the addition of the first coordinate changes the setting compared to the situation when it did not exist (and $N$ was smaller by $1$):
 \begin{enumerate}
  \item The matrices $\tilde{R}_0$ and $\tilde R_k$ have a new first row each. Hence, we should check how large is the correlation between these first rows, if they are viewed as the special vectors $\u$ and $\v$.
  \item We projected the data orthogonally to $\tilde{Z}_{1t}$ in Step 3 of the procedure, see \eqref{res_BG}. The $\tilde{Z}_{1t}$ matrix also has a new first row, hence, we are now decreasing the dimension by $1$ via projecting orthogonally to an additional vector.
 \end{enumerate}
 Let $y_t$, $t=1,2,\dots,T$ denote the first coordinate of $X_t$. It solves the scalar recurrence
 \begin{equation}
 \label{eq_x10}
  \Delta y_t = \theta \Delta y_{t-1} + \xi_t,
 \end{equation}
 where $\xi_t$ is the first coordinate of $\eps_t$, and therefore a Gaussian $\mathcal N(0,1)$ random variable, i.i.d.\ in time $t$. Iterating \eqref{eq_x10}, we get
 \begin{equation}
  \Delta y_t= \theta^{t}(y_0-y_{-1})+\sum_{\tau=1}^t \theta^{t-\tau} \xi_\tau , \qquad y_t=y_0 + \frac{\theta-\theta^{t+1}}{1-\theta}(y_0-y_{-1})+\sum_{\tau=1}^t  \frac{1-\theta^{t+1-\tau}}{1-\theta}\xi_\tau.
 \end{equation}
 Next, we make the detrending of Step 1 in Procedure \ref{sscc_BG} of Section \ref{Section_modified_test}. As in \eqref{eq_detrending}, we define
 \begin{multline}
 \tilde y_t = y_{t-1} - \frac{t-1}{T} (y_T-y_0)\\= y_0 + \frac{\theta-\theta^{t}}{1-\theta}(y_0-y_{-1})+\sum_{\tau=1}^{t-1} \frac{1-\theta^{t-\tau}}{1-\theta}\xi_\tau-  \frac{t-1}{T} \left(\frac{\theta-\theta^{T+1}}{1-\theta}(y_0-y_{-1})+\sum_{\tau=1}^T  \frac{1-\theta^{T+1-\tau}}{1-\theta}\xi_\tau\right).
\end{multline}
Recalling cyclic shifts of Step 2 in Procedure \ref{sscc_BG}, we set
$$
 \tilde z_{t}=\tilde y_{t-1},\quad 2\le t \le T, \qquad \tilde z_{1}=\tilde y_T.
$$
The vector $\tilde z_{t}$, $1\le t\le T$, is the first row of $\tilde Z_{2t}$, $1\le t\le T$, viewed as an $N\times T$ matrix. Simultaneously, the vector $\Delta y_t$, $1\le t\le T$, is the first row of $\tilde Z_{0t}$. Recalling Step 3 in Procedure \ref{sscc_BG} and its restatement in terms of projectors at the end of Section \ref{Section_modified_test_statistics}, we analyze the sample correlation coefficients between the vectors $\tilde z_t$ and $\Delta y_t$ projected orthogonally to the constant vector and $\tilde Z_{1t}$. The first row of $\tilde Z_{1t}$ is $\Delta y_{t-1}$ (with cyclic shift of index, so that the $t=1$ coordinate is actually $\Delta y_T$). Hence, we are allowed to subtract multiples of the constant vector and multiples of $\Delta y_{t-1}$ from either of $\tilde z_t$ or $\Delta y_t$ without changing the desired sample correlation coefficient. Therefore, subtracting multiples of constants, we replace $\tilde z_t$ with a vector whose $t$-th coordinate for $2\le t \le T$ is
\begin{equation}
\label{eq_x11}
 \frac{-\theta^{t-1}-\frac{t-T/2}{T}(1-\theta^{T+1})}{1-\theta}(y_0-y_{-1})+\sum_{\tau=1}^{t-2} \frac{1-\theta^{t-\tau-1}}{1-\theta}\xi_\tau-  \frac{t-2}{T} \left(\sum_{\tau=1}^T  \frac{1-\theta^{T+1-\tau}}{1-\theta}\xi_\tau\right),
\end{equation}
and the first coordinate is given by a similar expression, which we omit. By subtracting $\theta \Delta y_{t-1}$, we replace $\Delta y_t$ with a vector whose $t$-th coordinate for $2\le t \le T$ is simply $\xi_t$

\smallskip

\noindent {\bf Claim.} The squared sample correlation coefficient between the vectors \eqref{eq_x11} and $\xi_t$ tends to $0$ as $T\to\infty$.

\smallskip

The claim follows from three computations:
\begin{enumerate}
\item[A)] The scalar product between these two vectors grows as $O(T)$.
\item[B)] The scalar product of \eqref{eq_x11} with itself is of order $T^2$.
\item[C)] The scalar product of the vector $\xi_t$, $1\le t\le T$, with itself is of order $T$.
\end{enumerate}

Each computation is a straightforward application of the Law of Large Numbers and Central Limit Theorem for i.i.d.\ random variables and we leave the details to the reader. We only note that the condition $|\theta|<1$ and boundness of $y_0-y_{-1}$ are both used here. Together, these computations imply that the sample correlation coefficient is of order $O(T^{-1})$, thus proving the claim.

\bigskip

In order to further pass from the two vectors in the claim to the first rows of the two matrices $\tilde{R}_0$ and $\tilde R_k$, we need to project orthogonally to the constant vector, to the vector $\Delta y_{t-1}$ (which is the first row of $\tilde Z_{1t}$), and to the remaining $(N-1)$ rows of $N\times T$ matrix $\tilde Z_{1t}$, $1\le t \le T$. One can check that projecting orthogonally to the first two vectors does not change the conclusion of the claim --- this is simply because these two vectors are very close to being orthogonal to the vectors of the claim. Showing that projecting orthogonally to the last $N-1$ rows of $\tilde Z_{1t}$ preserves the conclusion of the claim is a more challenging computation, which we record in the following abstract lemma, which is proven later.

\begin{lemma} \label{Lemma_projecting_two_vectors}
 Suppose that as $T\to\infty$ we are given a $T$--dimensional space and the following random data inside it: two vectors $\ba$ and $\bb$, such that the angle\footnote{Note that the cosine of the angle between $\ba$ and $\bb$ matches the sample correlation coefficient $\frac{\langle \ba,\bb\rangle}{\sqrt{\langle \ba,\ba\rangle \langle \bb,\bb\rangle}}$.} between them tends to $\pi/2$ as $T\to\infty$, and a linear subspace $\mathcal V$ of dimension $M$. We assume that the ratio $M/T$ tends to a number $\alpha$ such that $0<\alpha<1$ and that $\mathcal V$ is uniformly distributed among all subspaces of dimension $M$ and is independent of $\ba$ and $\bb$. Then the angle between orthogonal projections of $\ba$ and $\bb$ onto $\mathcal V$ tends to $\pi/2$ as $T\to\infty$.
\end{lemma}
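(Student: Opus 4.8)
The plan is to reduce first to deterministic unit vectors, then to a handful of moment estimates for the Haar measure. Since $\mathcal V$ is independent of the pair $(\ba,\bb)$ and Haar-distributed, one may condition on $(\ba,\bb)$; rescaling $\ba$ and $\bb$ to unit length changes no angle, so it suffices to treat a \emph{deterministic} sequence of unit vectors with $\langle\ba,\bb\rangle=\epsilon_T\to0$ (a random $\epsilon_T\to0$ in probability is then handled by a subsequence argument: pass to a subsequence along which $\epsilon_T\to0$ almost surely and use the independence of $\mathcal V$).

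Next I would realize the orthogonal projector onto $\mathcal V$ as $P=OJO^{*}$, where $O$ is Haar-distributed on $O(T)$ (resp.\ $U(T)$ in the complex case) and $J=\operatorname{diag}(1,\dots,1,0,\dots,0)$ has exactly $M$ ones. Writing $\bb=\epsilon_T\,\ba+\sqrt{1-\epsilon_T^{2}}\,\bb_{\perp}$ with $\bb_{\perp}\perp\ba$ a unit vector, and setting $\xi:=O^{*}\ba$ and $\eta:=O^{*}\bb_{\perp}$, the pair $(\xi,\eta)$ has the law of the first two columns of a Haar orthogonal matrix: $\xi$ is uniform on the unit sphere $S^{T-1}$ and, conditionally on $\xi$, $\eta$ is uniform on the unit sphere of $\xi^{\perp}$. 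The relevant quantities become
\[
\|P\ba\|_2^{2}=\sum_{i=1}^{M}\xi_i^{2},\qquad \langle P\ba,P\bb\rangle=\epsilon_T\sum_{i=1}^{M}\xi_i^{2}+\sqrt{1-\epsilon_T^{2}}\sum_{i=1}^{M}\xi_i\eta_i,
\]
while $\|P\bb\|_2^{2}$ has the same law as $\|P\ba\|_2^{2}$.

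The heart of the argument is then three short computations using the low moments of the uniform law on $S^{T-1}$, together with the tower property and the identity $\mathbb E[\eta_i\eta_j\mid\xi]=\frac{1}{T-1}(\delta_{ij}-\xi_i\xi_j)$. First, $\mathbb E\,\|P\ba\|_2^{2}=M/T\to\alpha$ and $\operatorname{Var}\bigl(\|P\ba\|_2^{2}\bigr)=O(1/T)$, so $\|P\ba\|_2^{2}\to\alpha$ and $\|P\bb\|_2^{2}\to\alpha$ in probability. Second, $\mathbb E\bigl[\sum_{i=1}^{M}\xi_i\eta_i\bigr]=0$ and
\[
\mathbb E\biggl[\biggl(\sum_{i=1}^{M}\xi_i\eta_i\biggr)^{2}\biggr]=\frac{M}{T(T+2)}\cdot\frac{T-M}{T-1}=O(1/T),
\]
so $\sum_{i=1}^{M}\xi_i\eta_i\to0$ in probability. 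Combining, $\langle P\ba,P\bb\rangle=\epsilon_T\bigl(\alpha+o_{\mathbb P}(1)\bigr)+o_{\mathbb P}(1)=o_{\mathbb P}(1)$, while $\|P\ba\|_2\,\|P\bb\|_2\to\alpha>0$; in particular $P\ba\neq0$ and $P\bb\neq0$ with probability tending to $1$, and the cosine of the angle between $P\ba$ and $P\bb$, namely $\langle P\ba,P\bb\rangle/\bigl(\|P\ba\|_2\|P\bb\|_2\bigr)$, tends to $0$ in probability, which is the assertion. The unitary case is identical after replacing $\eta_i\eta_j$ by $\overline{\eta_i}\eta_j$ throughout.

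The step I expect to be the main obstacle is purely computational: pinning down the second and fourth moments of the uniform law on $S^{T-1}$ (and the cross-covariances $\operatorname{Cov}(\xi_i^{2},\xi_j^{2})$ and $\mathbb E[\xi_i\xi_j\eta_i\eta_j]$ for $i\neq j$) and checking that, after summing the $O(M)$ diagonal and $O(M^{2})$ off-diagonal terms, the totals are still $O(1/T)$ — conceptually routine, but the place where a stray factor is easily lost. An alternative that sidesteps the explicit moments is to invoke concentration of measure on $O(T)$: the maps $O\mapsto\|P\ba\|_2$ and $O\mapsto\langle P\ba,P\bb_{\perp}\rangle$ are $1$-Lipschitz in $O$, hence concentrate exponentially around their means $\sqrt{M/T}+O(T^{-1/2})$ and $O(T^{-1/2})$ respectively, so one only needs those two mean estimates plus the standard Lipschitz concentration inequality for the Haar measure on the orthogonal (unitary) group; this route also upgrades the conclusion to almost sure convergence.
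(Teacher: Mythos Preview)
Your proof is correct and complete; the moment computations on the sphere are accurate (in particular your second moment $\frac{M(T-M)}{T(T+2)(T-1)}$ for $\sum_{i\le M}\xi_i\eta_i$ checks out), and the reduction to deterministic unit vectors via conditioning and a subsequence argument is valid.

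The paper takes a different, more structural route. Instead of computing moments coordinate by coordinate, it works with the two-dimensional space $\mathcal W=\mathrm{span}(\ba,\bb)$ and looks at the two canonical correlations $c_1\ge c_2$ between $\mathcal W$ and $\mathcal V$. Because $\mathcal V$ is Haar-distributed, the joint law of $(c_1^2,c_2^2)$ is the eigenvalue distribution of a $2\times2$ Jacobi ensemble $\J(2;\tfrac{M-1}{2},\tfrac{T-M-1}{2})$, and a direct asymptotic of that density shows both $c_1^2$ and $c_2^2$ concentrate at $\alpha$. Writing $\ba,\bb$ in the canonical basis $\w_1,\w_2$ of $\mathcal W$, the projected cosine squared becomes $\frac{(c_1^2a_1b_1+c_2^2a_2b_2)^2}{(c_1^2a_1^2+c_2^2a_2^2)(c_1^2b_1^2+c_2^2b_2^2)}$; since $c_1^2,c_2^2\to\alpha$ jointly, this collapses to the original cosine squared $\frac{\langle\ba,\bb\rangle^2}{\|\ba\|^2\|\bb\|^2}\to0$. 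The paper's argument is shorter and ties into the Jacobi-ensemble theme running through the rest of the text, but it relies on knowing the explicit density for canonical correlations with a Haar subspace. Your approach is more elementary and self-contained, and as you note, it upgrades cleanly via Lipschitz concentration on $O(T)$ to exponential tail bounds.
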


The lemma is applicable in our situation, because the last $N-1$ rows of $\tilde Z_{1t}$, $1\le t \le T$, are formed by $(N-1)T$ i.i.d.~$\mathcal N(0,1)$ random variables, independent from $y_t$. Because of the invariance of the Gaussian law with identical covariance matrix under orthogonal transformations, the distribution of the space spanned by these $N-1$ rows is invariant under orthogonal transformations, which is the same as being uniformly distributed.

\medskip

The overall conclusion from the discussion is that the sample correlation coefficient between the new first rows of the matrices $\tilde{R}_0$ and $\tilde R_k$ tends to $0$ as $N,T\to\infty$. Hence, by the dichotomy, these rows can not be special vectors which cause the appearance of a spike in the histogram of eigenvalues. Therefore, we expect that \eqref{eq_statistic_limit_repeat} holds.
\end{proof}
\begin{remark}
 One additional effect which we have not examined in the above heuristics is that the last $N-1$ rows of $\tilde Z_{0}$ and $\tilde Z_2$ matrices should also be projected orthogonally to $\Delta y_{t-1}$ (in addition to projecting orthogonally to $\tilde Z_1$ covered by the setting of Theorem \ref{Theorem_J_stat}). Because $\Delta y_{t-1}$ is independent from the rest and is close to being orthogonal to every other vector entering into the procedure, we do not expect this effect to significantly change the asymptotics of the canonical correlations.
\end{remark}
We now come back to Lemma \ref{Lemma_projecting_two_vectors}.
\begin{proof}[Proof of Lemma \ref{Lemma_projecting_two_vectors}]
 Let $\mathcal W$ denote the two-dimensional space spanned by the vectors $\ba$ and $\bb$. Let us introduce canonical bases of spaces $\mathcal W$ and $\mathcal V$, see \citet[Chapter 12]{anderson1958introduction} or \citet[Section 11.3]{Muirhead_book} for the general introduction to canonical correlations and corresponding variables. Thus, we choose an orthonormal basis of $\mathcal W$, $\w_1,\w_2\in\mathcal W$ and an orthonormal basis of $\mathcal V$, $\v_1,\dots,\v_M\in \mathcal V$, such that $\langle \w_1, \v_1\rangle=c_1$, $\langle \w_2, \v_2 \rangle=c_2$ and all other scalar products $\langle \w_i,\v_j\rangle$ are zeros. We can assume without loss of generality that $1\ge c_1\ge c_2\ge 0$. These numbers are canonical correlations between spaces $\mathcal W$ and $\mathcal V$. Because the space $\mathcal W$ is uniformly distributed along all $M$--dimensional subspaces, the distribution of the squared correlations $(c_1^2,c_2^2)$ is explicit, it equals the distribution of eigenvalues of the Jacobi ensemble $\J(2; \frac{M-1}{2},\frac{T-M-1}{2})$, see  \citet[Corollary 11.3.3]{Muirhead_book}, \citet[Sections 2.1.1, 2.1.2]{Johnstone_Jacobi}, and references therein. This means that the joint density of $(c_1^2,c_2^2)$ denoted $\rho(x,y)$ is proportional to:
 \begin{equation}
  \rho(x,y)\sim (x-y)\, x^{ \frac{M-3}{2}}\, (1-x)^{\frac{T-M-3}{2}}\, y^{ \frac{M-3}{2}}\, (1-y)^{\frac{T-M-3}{2}}.
 \end{equation}
 As $T,M\to\infty$, the density $\rho(x,y)$ is sharply concentrated around its maximum. Hence, directly computing the asymptotics of $\rho(x,y)$ we find that
 \begin{equation}
  \label{eq_x12}
  \lim_{T\to\infty} (c_1^2, c_2^2)= \lim_{T\to\infty} \left(\frac{M}{T}, \frac{M}{T} \right)=(\alpha,\alpha).
 \end{equation}
 Let us expand $\ba$ and $\bb$ in $(\w_1,\w_2)$ basis:
 $$
  \ba= a_1 \w_1 + a_2 \w_2, \qquad \bb=b_1\w_1+ b_2 \w_2.
 $$
 The squared cosine of the angle between $\ba$ and $\bb$ is then computed as
 \begin{equation}
 \label{eq_x13}
  \frac{ (a_1 b_1 + a_2 b_2)^2}{( a_1^2+a_2^2)(b_1^2+b_2^2)}.
 \end{equation}
 The orthogonal projections of $\ba$ and $\bb$ onto $\mathcal V$ are
 $$
  \mathrm{proj}(\ba)= c_1 a_1 \v_1 + c_2 a_2 \v_2, \qquad \mathrm{proj}(\bb)= c_1 b_1 \v_1 + c_2 b_2 \v_2.
 $$
 Hence, the squared cosine of the angle between two projections is
 \begin{equation}
 \label{eq_x14}
  \frac{ (c_1^2 a_1 b_1 + c_2^2 a_2 b_2)^2}{( c_1^2 a_1^2+ c_2^2 a_2^2)(c_1^2 b_1^2+ c_2^2 b_2^2)}.
 \end{equation}
 Using \eqref{eq_x12}, it becomes clear that \eqref{eq_x14} tends to $0$ as $T\to\infty$ whenever \eqref{eq_x13} does.
\end{proof}

\subsection{Power}\label{appendix_power}

We proceed to our next computation, supplementing Conjecture \ref{Conjecture}. This time we would like to explain what changes in the asymptotics, if the data generating process satisfies the alternative $H_1$, rather than the null-hypothesis $H_0$. For the clarity of the exposition, we only concentrate on one particular instance of $k=1$ case here and deal with the $N$-dimensional data generating process
\begin{equation}
\label{eq_x15}
 \Delta X_t=\theta E_{11} X_{t-1}+\eps_t,\qquad t=1,\ldots,T,\qquad\qquad \text{where}
\end{equation}
$\eps_t \thicksim\text{i.i.d.}~\mathcal{N}(0,\Lambda)$, $E_{11}$ is the matrix with $1$ in top-left corner and $0$s everywhere else. $\theta$ is a real parameter, we set $\beta=1+\theta$, which implies that the first coordinate of $X_t$ is a scalar process $y_t$ solving
\begin{equation}\label{eq_ytH1}
 y_t=\beta y_{t-1} + \xi_t, \qquad \xi_t \text{ is the first coordinate of }\eps_t.
\end{equation}
Note that $\xi_t$ are i.i.d.\ $\mathcal N(0,\sigma^2)$ random variables for some constant $\sigma^2$.
Because in the notations of \eqref{var_k_restr} the matrix $\Pi$ now has rank $1$, one hopes that our test statistic of Section \ref{Section_modified_test_statistics} under \eqref{eq_x15} behaves significantly differently than under $H_0$ of Conjecture \ref{Corollary_test_stat}. This would imply that the no-cointegration test based on Theorem \ref{Theorem_J_stat} has high power against the rank one alternative \eqref{eq_x15}. Let us prove that this is indeed true for large values of the ratio $T/N$.
\begin{proposition} \label{Proposition_H1_inequality}
  In the notations of \eqref{eq_x15}--\eqref{eq_ytH1} assume that $|\beta|<1$ and let $\sigma^2$ be the variance of $\xi_t$. Let $\tilde \lambda_1\ge \dots \ge \tilde \lambda_N$ be eigenvalues of the matrix $\tilde{\mathcal C}$ from Section \ref{Section_modified_test_statistics} constructed using the $k=1$ procedure. For each $\epsilon>0$, we have
 \begin{equation}
 \label{eq_x16}
  \lim_{T\to\infty} \mathrm{Prob}\left(\tilde \lambda_1 > \frac{1}{\frac{2}{1-\beta}+\frac{1+\beta}{6\sigma^2}\left(y_T-y_0\right)^2}-\epsilon \right) =1,
 \end{equation}
 where $N$ can depend on $T$ in \eqref{eq_x16} in an arbitrary way.
\end{proposition}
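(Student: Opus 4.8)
The plan is to find an explicit pair of unit vectors --- one in the row span of $\tilde R_0$ and one in the row span of $\tilde R_k$ --- whose squared sample correlation is bounded below (asymptotically) by the quantity in \eqref{eq_x16}; since $\tilde\lambda_1$ is the \emph{largest} squared sample canonical correlation between these two row spans, it dominates the squared correlation of any such pair. First I would exploit the same ``select the first coordinate'' mechanism used in the heuristics of Section \ref{appendix_H0}: under \eqref{eq_x15} the first coordinate $y_t$ satisfies the scalar AR(1) recursion \eqref{eq_ytH1} with $|\beta|<1$, so $y_t$ is (asymptotically) a stationary $I(0)$ process, which is the key difference from the $\widehat H_0$ situation where the corresponding coordinate was $I(1)$. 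Concretely, the candidate vector in the $\tilde R_0$-span is the (projected, detrended, cyclically shifted) first row $\Delta y_t$, and the candidate in the $\tilde R_k$-span is the first row $\tilde y_{t}$ of $\tilde X$; for $k=1$ there are no lagged differences to regress out, so $\tilde Z_{1t}$ is just the constant vector, and the only operations are centering, the detrending \eqref{eq_detrending}, and the cyclic shift.

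The second step is the explicit computation of the three relevant inner products as $T\to\infty$, using the Law of Large Numbers for the stationary process $y_t$. Writing $\bar y$ for the detrending correction, one has $\Delta y_t = y_t - y_{t-1}$ and, after centering, $\frac1T\sum_t (\Delta y_t)(\tilde y_t) \to -\mathrm{Var}(y) + (\text{covariance terms})$, while $\frac1T\sum_t (\Delta y_t)^2 \to 2\,\mathrm{Var}(y)(1-\rho_1)$ where $\rho_1=\beta$ is the lag-$1$ autocorrelation, and $\frac1T\sum_t \tilde y_t^2$ picks up both the stationary variance $\mathrm{Var}(y)=\sigma^2/(1-\beta^2)$ and a deterministic $O(1)$ contribution from the detrending term $\frac{t-1}{T}(y_T-y_0)$, whose $L^2$-mass is $\frac16 (y_T-y_0)^2$ (the integral $\int_0^1 (s-\tfrac12)^2\,ds=\tfrac1{12}$ combined with the cross terms gives the stated coefficient). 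Assembling the pieces and simplifying, the squared sample correlation between the two first rows converges to $\bigl(\tfrac{2}{1-\beta}+\tfrac{1+\beta}{6\sigma^2}(y_T-y_0)^2\bigr)^{-1}$, which is exactly the bound in \eqref{eq_x16}. One then notes this pair of vectors lies in the respective row spans \emph{before} the orthogonal projection $P_{\bot\mathcal W}$ onto the complement of the constant vector is applied, and projecting the constant out only removes the sample mean, which is already accounted for in the centering; the remaining $N-1$ rows of $\tilde R_0,\tilde R_k$ do not decrease $\tilde\lambda_1$ because adding more rows to each side can only enlarge the row spans and hence only increase the maximal canonical correlation.

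The third step is the rigorous lower bound: because $\tilde\lambda_1 = \max_{u,v} \widehat{\mathrm{corr}}^2\!\bigl(u^*\tilde R_0,\, v^*\tilde R_k\bigr)$ over $u,v\in\mathbb R^N$, choosing $u=v=e_1$ gives $\tilde\lambda_1 \ge \widehat{\mathrm{corr}}^2(\text{first row of }\tilde R_0,\ \text{first row of }\tilde R_k)$, and the convergence in probability of the right-hand side to the claimed limit yields \eqref{eq_x16} for every $\epsilon>0$, with no constraint on how $N$ depends on $T$ since $e_1$ is a fixed choice independent of $N$. I expect the main obstacle to be the careful bookkeeping of the detrending and cyclic-shift corrections: one must verify that the cyclic index shift (which affects only $O(1)$ coordinates) and the detrending term interact correctly, and in particular that the deterministic term $\frac{t-1}{T}(y_T-y_0)$ survives in the denominator with the precise constant $\frac{1+\beta}{6\sigma^2}$ rather than being washed out --- this requires tracking that $y_T-y_0$ is $O_p(1)$ (a consequence of stationarity of $y_t$) while the detrending slope contributes a genuinely quadratic-in-$T$ term to $\sum_t \tilde y_t^2$. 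A secondary technical point is handling the regimes of $N$: one should confirm that projecting onto $\mathcal W^{\perp}$ (whose dimension may be comparable to $T$ when $N$ is large) does not distort the two chosen first rows, which follows because those rows are essentially orthogonal to the constant vector after centering and the projection is onto a space containing only the constant direction in the $k=1$ case.
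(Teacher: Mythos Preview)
Your proposal is correct and follows essentially the same approach as the paper: both use the variational characterization of $\tilde\lambda_1$ to bound it below by the squared sample correlation between the first rows of $\tilde R_0$ and $\tilde R_k$, then compute the three sample moments $\tfrac1T\sum_t \tilde R_{0t,1}^2\to \tfrac{2\sigma^2}{1+\beta}$, $\tfrac1T\sum_t \tilde R_{kt,1}^2\approx \tfrac{\sigma^2}{1-\beta^2}+\tfrac1{12}(y_T-y_0)^2$, and $\tfrac1T\sum_t \tilde R_{0t,1}\tilde R_{kt,1}\to -\tfrac{\sigma^2}{1+\beta}$ via the LLN for the stationary AR(1), and combine them. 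Your instinct that the main work is the bookkeeping of the detrending term is exactly right; note that the centered detrending contribution to the $\tilde R_k$ variance is $\tfrac1{12}(y_T-y_0)^2$ (not $\tfrac16$), and the factor $\tfrac{1+\beta}{6\sigma^2}$ in \eqref{eq_x16} only appears after dividing through by the other two moments.
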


As a corollary, we deduce that our cointegration test has a significant power against rank one stationary alternative, and this power tends to $1$ as $T/N\to\infty$. Here is a precise statement:

\begin{corollary}\label{Corollary_power}
 Suppose that $T,N\to\infty$ in such a way that  $\lim_{T,N\to\infty} \frac{T}{N}=\tau$. Fix a confidence level $0<\alpha<1$, $y_0$, $\sigma^2$, and $\beta=1+\theta$ such that $|\beta|<1$. Let $H_1$ be the data generating process \eqref{eq_x15}--\eqref{eq_ytH1}. Then the $k=1$ cointegration test based on Theorem \ref{Theorem_J_stat} has asymptotic power against $H_1$ at least $p(\alpha,\tau)$  as $T,N\to\infty$. Here $p(\alpha,\tau)$ is a non-negative function, such that for each $\alpha$ we have $\lim_{\tau\to\infty} p(\alpha,\tau)=1$.
\end{corollary}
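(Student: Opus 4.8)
The plan is to show that under the alternative \eqref{eq_x15}--\eqref{eq_ytH1} the rescaled statistic of Theorem \ref{Theorem_J_stat} diverges to $+\infty$ on an event whose probability can be pushed arbitrarily close to $1$ by letting $\tau$ grow, so that the test rejects $H_0$ with at least that probability. I will work with $r=1$: with this choice the test rejects exactly when $\ln(1-\tilde\lambda_1)<c_1(N,T)+q_\alpha\,N^{-2/3}c_2(N,T)$, where $q_\alpha$ is the fixed $\alpha$--quantile of $\aa_1$ (the inequality flips because $N^{-2/3}c_2(N,T)<0$). For general fixed $r$ one proceeds identically after noting, as in the proof of Theorem \ref{Theorem_empirical}, that the rank--one matrix $\Pi=\theta E_{11}$ perturbs only one eigenvalue, so that $\tilde\lambda_2,\dots,\tilde\lambda_r$ stay within $O(N^{-2/3})$ of $\lambda_+$ and contribute negligibly to $\sum_{i=1}^r\ln(1-\tilde\lambda_i)-r\,c_1(N,T)$.

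The first ingredient is a lower bound on $\tilde\lambda_1$ that is stable in $T$. Proposition \ref{Proposition_H1_inequality} gives, for every $\epsilon>0$, that $\tilde\lambda_1>\bigl(\tfrac{2}{1-\beta}+\tfrac{1+\beta}{6\sigma^2}(y_T-y_0)^2\bigr)^{-1}-\epsilon$ with probability tending to $1$, irrespective of how $N$ depends on $T$. Since $|\beta|<1$ and $y_0$ is fixed, the scalar recursion \eqref{eq_ytH1} is asymptotically stationary, so $(y_T-y_0)^2$ is tight and converges in law to $(Z-y_0)^2$ with $Z\sim\mathcal N(0,\sigma^2/(1-\beta^2))$. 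Writing $\pi(M):=\mathrm{Prob}((Z-y_0)^2\le M)$, so that $\pi(M)\uparrow 1$ as $M\to\infty$, and $\delta(M):=\bigl(\tfrac{2}{1-\beta}+\tfrac{1+\beta}{6\sigma^2}M\bigr)^{-1}>0$, I will intersect the event from Proposition \ref{Proposition_H1_inequality} (taken with $\epsilon=\tfrac12\delta(M)$) with $\{(y_T-y_0)^2\le M\}$, on which $\bigl(\tfrac{2}{1-\beta}+\tfrac{1+\beta}{6\sigma^2}(y_T-y_0)^2\bigr)^{-1}\ge\delta(M)$, to obtain $\liminf_{T,N\to\infty}\mathrm{Prob}(\tilde\lambda_1>\tfrac12\delta(M))\ge\pi(M)$.

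The second ingredient is the behaviour of the centering and scaling under the joint limit $T/N\to\tau$. With $k=1$, $\p=2$, $\q=\tau-1$, equation \eqref{pq_def} gives fixed $0<\lambda_-(\tau)<\lambda_+(\tau)<1$, and \eqref{eq_constants_c1_c2} then gives $c_1(N,T)\to\ln(1-\lambda_+(\tau))<0$ while $c_2(N,T)$ tends to a finite negative constant; hence $N^{-2/3}c_2(N,T)\to 0^-$ and the whole rejection threshold $c_1(N,T)+q_\alpha N^{-2/3}c_2(N,T)$ converges to $\ln(1-\lambda_+(\tau))$. In addition $\lambda_+(\tau)=(\p+\q)^{-2}\bigl(\sqrt{2(\q+1)}+\sqrt\q\bigr)^2=(3+2\sqrt2)(\tau-1)^{-1}+o((\tau-1)^{-1})\to 0$ as $\tau\to\infty$. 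Combining the two ingredients: given $\eta>0$, I pick $M$ with $\pi(M)\ge 1-\eta$ and take $\tau$ large enough that $\tfrac12\delta(M)>\lambda_+(\tau)$ (possible since $\lambda_+(\tau)\to0$ while $\delta(M)$ is fixed); then on $\{\tilde\lambda_1>\tfrac12\delta(M)\}$ one has $\ln(1-\tilde\lambda_1)<\ln(1-\tfrac12\delta(M))<\ln(1-\lambda_+(\tau))$, strictly, so for all sufficiently large $T,N$ the rejection inequality holds on that event, giving $\liminf_{T,N\to\infty}\mathrm{Prob}_{H_1}(\text{reject }H_0)\ge 1-\eta$. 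Setting $p(\alpha,\tau)$ equal to the supremum of $\pi(M)$ over those $M$ with $\tfrac12\delta(M)>\lambda_+(\tau)$ (a non-negative quantity, which in fact does not depend on $\alpha$) yields the asserted lower bound on the asymptotic power, and $\lim_{\tau\to\infty}p(\alpha,\tau)=1$ since $\lambda_+(\tau)\to0$ permits $M\to\infty$.

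The hard part is really Proposition \ref{Proposition_H1_inequality} itself, which is assumed here; given it, the only genuine subtlety --- and the structural reason $p(\alpha,\tau)$ falls short of $1$ for finite $\tau$ --- is that the lower bound on $\tilde\lambda_1$ deteriorates when $(y_T-y_0)^2$ is atypically large, so one must truncate that quantity at level $M$ and balance $\delta(M)$ against $\lambda_+(\tau)$, which is exactly what forces $\tau$ to be large. A minor point to verify carefully is the joint-limit asymptotics of the second ingredient, namely that $N^{-2/3}c_2(N,T)\to0$ while $c_1(N,T)$ stays bounded away from $0$, but this is immediate from \eqref{eq_constants_c1_c2}--\eqref{pq_def}.
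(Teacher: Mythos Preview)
Your proof is correct and follows essentially the same route as the paper: invoke Proposition \ref{Proposition_H1_inequality}, control the random denominator via tightness of $(y_T-y_0)^2$, and use $\lambda_+(\tau)\to 0$ to push the threshold below the fixed lower bound on $\tilde\lambda_1$. One small point: your aside for general $r$ --- that $\tilde\lambda_2,\dots,\tilde\lambda_r$ stay within $O(N^{-2/3})$ of $\lambda_+$ under $H_1$ --- is plausible but is not actually established anywhere in the paper, and the reference to the proof of Theorem \ref{Theorem_empirical} only gives interlacing up to a bounded rank shift, not the $O(N^{-2/3})$ precision. The paper sidesteps this entirely by the trivial inequality $\sum_{i=1}^r\ln(1-\tilde\lambda_i)\le\ln(1-\tilde\lambda_1)$ (all summands are nonpositive), which reduces the general-$r$ rejection event to a condition on $\tilde\lambda_1$ alone and makes your $r=1$ argument cover every $r$ without further work.
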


Our approach to the proof of Corollary \ref{Corollary_power} gives a lower bound on $p(\alpha,\tau)$. Finding exact formulas for the power under $H_1$ of this corollary and under other alternatives remains an important open problem for the future research.

\begin{proof}[Proof of Proposition \ref{Proposition_H1_inequality}] By definition, $\hat \lambda_1$ is the largest  sample canonical correlations between matrices $\tilde{R}_0$ and $\tilde{R}_k$. The variational interpretation for $\hat \lambda_1$ (see, e.g.,  \citet[Chapter 12]{anderson1958introduction}) as the maximal sample correlation coefficient between vectors in linear spans of $\tilde R_0$ and $\tilde R_1$, implies that $\hat \lambda_1$ is larger or equal than the correlation coefficient between the first rows of $\tilde R_0$ and $\tilde R_k$. In the rest of the proof we estimate this correlation coefficient and show it is satisfies the asymptotic inequality \eqref{eq_x16}.

Let us compute these first rows by following the procedure of Section \ref{Section_modified_test_statistics}. From Eq.~\eqref{eq_ytH1} we obtain
$$
y_t=\beta^t y_0+\sum\limits_{i=1}^{t}\beta^{t-i}\xi_i,\qquad \Delta y_t=(\beta-1)\beta^{t-1} y_0 +(\beta-1)\sum\limits_{i=1}^{t-1}\beta^{t-1-i}\xi_i +\xi_t.
$$
Then
$$
\tilde{y}_t=y_{t-1}-\frac{t-1}{T}(y_T-y_0)=\beta^{t-1} y_0+\sum\limits_{i=1}^{t-1}\beta^{t-1-i}\xi_i-\frac{t-1}{T}(y_T-y_0).
$$
After regressing on a constant we get residuals (here $\tilde R_{0t,1}$ is the first element of the column $\tilde R_{0t}$ and similarly for $\tilde R_{kt,1}$)
\begin{align}
\tilde R_{0t,1}&=\Delta y_t-\frac1{T}\sum\limits_{\tau=1}^T \Delta y_{\tau}\label{eq_R0H1} \\ \notag
&=y_0\left((\beta-1)\beta^{t-1}+\frac{1-\beta^T}{T}\right) +\xi_t-\frac1{T}\sum\limits_{i=1}^{T}\xi_i
+(\beta-1)\sum\limits_{i=1}^{t-1}\beta^{t-1-i}\xi_i +\frac1{T}\sum\limits_{i=1}^{T}(1-\beta^{T-i})\xi_i,
\end{align}
\begin{align}
\label{eq_RkH1}
\tilde R_{kt,1}&=\tilde{y}_t-\frac1{T}\sum\limits_{\tau=1}^T \tilde{y}_{\tau}
=y_0\left(\beta^{t-1}-\frac{1-\beta^T}{T(1-\beta)}\right)
+\sum\limits_{i=1}^{t-1}\beta^{t-1-i}\xi_i-\frac1{T}\sum\limits_{i=1}^{T}\frac{1-\beta^{T-i}}{1-\beta}\xi_i
\\& -\left(\frac{2t-1}{2T}-\frac1{2}\right)(y_T-y_0). \notag
\end{align}
In order to compute the sample correlation coefficient, we analyze three sums representing sample variances and covariance: $\frac1{T}\sum\limits_{t=1}^{T}\tilde R_{0t,1}^2,\,\frac1{T}\sum\limits_{t=1}^{T}\tilde R_{kt,1}^2,\,\frac1{T}\sum\limits_{t=1}^{T}\tilde R_{0t,1}\tilde R_{kt,1}$. Let us analyze the sums sequentially. Summing the geometric series and using the law of large numbers, we get
\begin{equation}\begin{split}\label{eq_R0^2_terms_1}
&\frac{y_0^2}{T}\sum\limits_{t=1}^{T}\left((\beta-1)\beta^{t-1}+\frac{1-\beta^T}{T}\right)^2\xrightarrow[T\to\infty]{}0, \qquad
\frac1{T}\sum\limits_{t=1}^{T}\xi_t^2\xrightarrow[T\to\infty]{P}\sigma^2, \qquad
\left(\frac1{T}\sum\limits_{i=1}^{T}\xi_i\right)^2\xrightarrow[T\to\infty]{P}0,\\
&\frac1{T}\sum\limits_{t=1}^{T}\left(\sum\limits_{i=1}^{t-1}\beta^{t-1-i}\xi_i \right)^2
=\frac1{T}\sum\limits_{i=1}^{T-1}\xi_i^2\frac{1-\beta^{2(T-i)}}{1-\beta^2}+\frac2{T}\sum\limits_{t=1}^{T}\sum\limits_{i=1}^{t-2}\sum\limits_{j=i+1}^{t-1}\beta^{2(t-1)-i-j}\xi_i \xi_j
\xrightarrow[T\to\infty]{P}\frac{\sigma^2}{1-\beta^2},\\
&\left(\frac1{T}\sum\limits_{i=1}^{T}(1-\beta^{T-i})\xi_i\right)^2
=\frac1{T^2}\sum\limits_{i=1}^{T}(1-\beta^{T-i})^2\xi_i^2+\frac1{T^2}\sum\limits_{i\neq j}(1-\beta^{T-i})(1-\beta^{T-j})\xi_i \xi_j\xrightarrow[T\to\infty]{P}0.
\end{split}\end{equation}
We also have
\begin{equation}\label{eq_R0^2_terms_2}
 \mathbb E \left[\frac1{T}\sum\limits_{t=1}^{T}\xi_t \left(\sum\limits_{i=1}^{t-1}\beta^{t-1-i}\xi_i\right)\right]^2=\frac1{T^2}\sum\limits_{t=1}^{T}\sigma^2\sum\limits_{i=1}^{t-1}\beta^{2(t-1-i)}\sigma^2\xrightarrow[T\to\infty]{}0,
\end{equation}
which implies that the expression under expectation tends to $0$. Using formulas \eqref{eq_R0^2_terms_1},\eqref{eq_R0^2_terms_2} and Cauchy-Schwarz inequality to show that the remaining averages of cross-products of terms in Eq.~\eqref{eq_R0H1} converge to $0$, we get
\begin{equation}\label{eq_R0^2_lim}
\frac1{T}\sum\limits_{t=1}^{T}\tilde R_{0t,1}^2 \xrightarrow[T\to\infty]{P}\sigma^2+(\beta-1)^2\frac{\sigma^2}{1-\beta^2}= \frac{2\sigma^2}{1+\beta}.
\end{equation}
To analyze $\frac1{T}\sum\limits_{t=1}^{T}\tilde R_{kt,1}^2$, we again sum geometric series and use the law of large numbers:
\begin{equation}\label{eq_Rk^2_terms_1}
\frac{y_0^2}{T}\sum\limits_{t=1}^{T}\left(\beta^{t-1}-\frac{1-\beta^T}{T(1-\beta)}\right)^2 \xrightarrow[T\to\infty]{}0,\qquad
\frac{(y_T-y_0)^2}{T}\sum\limits_{t=1}^{T}\left(\frac{2t-1}{2T}-\frac1{2}\right)^2 \approx_{T\to\infty} \frac{(y_T-y_0)^2}{12},
\end{equation}
where the $\approx_{T\to\infty}$ sign means that the ratio of the left-hand side and the right-hand side tends to $1$ in probability.
It is also straightforward to show that
\begin{equation}
 \label{eq_Rk^2_terms_2}
 \frac{y_T-y_0}{T}\sum\limits_{t=1}^{T}\left[\left(\frac{2t-1}{2T}-\frac1{2}\right)\sum\limits_{i=1}^{t-1}\beta^{t-1-i}\xi_i \right] \xrightarrow[T\to\infty]{P}0.
\end{equation}
Using formulas \eqref{eq_Rk^2_terms_1}, \eqref{eq_Rk^2_terms_2}, second and third lines of formulas \eqref{eq_R0^2_terms_1}, and Cauchy-Schwarz inequality to show that the remaining averages of cross-products of terms in Eq.~\eqref{eq_RkH1} converge to $0$, we get
\begin{equation}\label{eq_Rk^2_lim}
\frac1{T}\sum\limits_{t=1}^{T}\tilde R_{kt,1}^2 \approx_{T\to\infty} \frac{\sigma^2}{1-\beta^2}+\frac1{12}\left(y_T-y_0\right)^2.
\end{equation}
We are left with the analysis of the covariance $\frac1{T}\sum\limits_{t=1}^{T}\tilde R_{0t,1}\tilde R_{kt,1}$, which relies on similar computations as for the two variances. The only asymptotically non-vanishing term is given by the computation of the second line in \eqref{eq_R0^2_terms_1}: we multiply $(\beta-1)\sum\limits_{i=1}^{t-1}\beta^{t-1-i}\xi_i$ from \eqref{eq_R0H1} by $\sum\limits_{i=1}^{t-1}\beta^{t-1-i}\xi_i$ and sum over $t$. Thus,
\begin{equation}\label{eq_R0Rk_lim}
\frac1{T}\sum\limits_{t=1}^{T}\tilde R_{0t,1} \tilde R_{kt,1} \xrightarrow[T\to\infty]{} (\beta-1)\frac{\sigma^2}{1-\beta^2}=-\frac{\sigma^2}{1+\beta}.
\end{equation}
Combining \eqref{eq_R0^2_lim}, \eqref{eq_Rk^2_lim}, \eqref{eq_R0Rk_lim} together we get
\begin{align}\label{eq_corr2_limit}
\bigl(\widehat{corr}(\tilde R_{0,1}, \tilde R_{k,1})\bigr)^2=\frac{\left(\frac1{T}\sum\limits_{t=1}^{T}\tilde R_{0t,1}\tilde R_{kt,1}\right)^2}{\left(\frac1{T}\sum\limits_{t=1}^{T}\tilde R_{0t,1}^2\right)\left(\frac1{T}\sum\limits_{t=1}^{T}R_{kt,1}^2\right)}
&\approx_{T\to\infty}\frac{\frac{\sigma^4}{(1+\beta)^2}}{\frac{2\sigma^2}{1+\beta}\left[\frac{\sigma^2}{1-\beta^2}+\frac1{12}\left(y_T-y_0\right)^2\right]}
\\&=\frac{1}{\frac{2}{1-\beta}+\frac{1+\beta}{6\sigma^2}\left(y_T-y_0\right)^2}.\qedhere
\end{align}
\end{proof}

\begin{proof}[Proof of Corollary \ref{Corollary_power}]
 Cointegration test based on Theorem \ref{Theorem_J_stat} has the form: reject $H_0$, if
 \begin{equation}
 \label{eq_rejection_event}
	 \frac{\sum_{i=1}^{r} \ln(1-\tilde{\lambda}_i)- r \cdot c_1(N,T)}{ N^{-2/3}  c_2(N,T)} \ge \kappa,
 \end{equation}
 where $\kappa$ is a constant depending on $r$ and the confidence level $\alpha$ ($\kappa$ is found from the equation $\mathrm{Prob}(\sum_{i=1}^r \aa_i\le\kappa)=\alpha$). In order to prove Corollary \ref{Corollary_power}, we need to find the probability of the event \eqref{eq_rejection_event} under $H_1$ given by \eqref{eq_x15}--\eqref{eq_ytH1}, and show that this probability tends to $1$ in the double limit in which we first send $T,N\to\infty$ with $\lim \frac{T}{N}=\tau$ and then send $\tau$ to infinity.

 Recall that $c_2(N,T)$ is negative, as given in \eqref{eq_constants_c1_c2}. Hence, using deterministic inequalities $\ln(1-\tilde{\lambda}_i)\le 0$ and $\ln(1-\tilde{\lambda}_1)\le-\tilde{\lambda}_1$, we conclude that the probability of the event \eqref{eq_rejection_event} is larger than the probability of a simpler event
  \begin{equation}
 \label{eq_rejection_event_2}
	 \tilde{\lambda}_1\ge - r \cdot c_1(N,T)-  \kappa N^{-2/3}  c_2(N,T).
 \end{equation}
 Note that both terms in the right-hand side of \eqref{eq_rejection_event_2} are positive and the second one vanishes as $N,T\to\infty$. As for the first one, using \eqref{eq_constants_c1_c2}, we see that it converges to a positive constant as $N,T\to\infty$ with $\lim \frac{T}{N}= \tau$, and this constant further tends to $0$ as $\tau\to\infty$. The conclusion is that the right-hand side of \eqref{eq_rejection_event_2} tends to $0$ in our double limit.

 On the other hand, under $H_1$ by Proposition \ref{Proposition_H1_inequality}, for any $\epsilon>0$, with probability tending to $1$ as $T\to\infty$, we have
 \begin{equation}
    \label{eq_lower_bound}
   \tilde \lambda_1 > \frac{1}{\frac{2}{1-\beta}+\frac{1+\beta}{6\sigma^2}\left(y_T-y_0\right)^2}-\epsilon.
 \end{equation}
 Note that $y_0$ is assumed to be bounded. Simultenously, we assumed $|\beta|<1$, and therefore, $y_T$, which due to \eqref{eq_ytH1} can be expressed as
 $$
 y_T=\beta^T y_0 + \sum_{t=1}^T \beta^{T-t} \xi_t,
 $$
 has uniformly bounded second moment. Therefore, the denominator in \eqref{eq_lower_bound} does not explode. Hence, \eqref{eq_lower_bound} implies that \eqref{eq_rejection_event_2} holds with probability tending to $1$ in the double limit.
\end{proof}

\bibliographystyle{abbrvnat}
\bibliography{coint_var_k_bib}

\end{document}